\DeclareMathAlphabet\mathbfcal{OMS}{cmsy}{b}{n}
\newcommand{\mbf}{\mathbf}
\newcommand{\mc}{\mathcal}
\newcommand{\bit}{\begin{itemize}}
\newcommand{\eit}{\end{itemize}}
\newcommand{\ben}{\begin{enumerate}} 
\newcommand{\een}{\end{enumerate}}
\newcommand{\bpm}{\begin{pmatrix}}
\newcommand{\epm}{\end{pmatrix}}
\newcommand{\bbm}{\begin{bmatrix}}
\newcommand{\ebm}{\end{bmatrix}}
\newcommand{\vep}{\varepsilon}
\renewcommand{\l}{\left}
\renewcommand{\r}{\right}
\def\wh{\widehat}
\def\wt{\widetilde}
\newcommand{\E}[0]{\mathsf{E}}
\newcommand{\Var}[0]{\mathsf{Var}}
\newcommand{\diag}[0]{\mathsf{diag}}
\newcommand{\p}{\mathsf{P}}
\newcommand{\R}{\mathbb{R}}
\newcommand{\Z}{\mathbb{Z}}
\newcommand{\iid}{\text{\upshape iid}}
\newcommand{\pca}{\text{\upshape pc}}
\newcommand{\bpca}{\text{\upshape bpc}}
\newcommand{\capp}{\text{\upshape cp}}
\newcommand{\bcapp}{\text{\upshape bcp}}
\newcommand{\shr}{\text{\upshape sh}}
\newcommand{\cs}{\text{\upshape sh}}
\newcommand{\bcs}{\text{\upshape bsh}}
\newcommand{\sca}{\text{\upshape sc}}
\newcommand{\bsca}{\text{\upshape bsc}}
\newcommand{\oracle}{\text{\upshape oracle}}
\newcommand{\poet}{\text{\upshape poet}}
\newcommand{\avg}{\text{\upshape avg}}
\newcommand{\nn}{\nonumber}
\theoremstyle{definition}
\theoremstyle{definition}
\theoremstyle{definition}
\newtheorem{lem}{Lemma}
\theoremstyle{definition}
\newtheorem{prop}{Proposition}
\theoremstyle{definition}
\newtheorem{assum}{Assumption}
\theoremstyle{definition}
\newtheorem{rem}{Remark}
\theoremstyle{definition}
\theoremstyle{definition}
\newtheorem{ex}{Example}
\title{\Large Consistent estimation of high-dimensional factor models when the factor number is over-estimated}
\author{\normalsize Matteo Barigozzi$^1$ \hskip 1cm Haeran Cho$^2$}\vskip -3cm
\date{\small{\today}}
\begin{document}

\maketitle
\begin{abstract} 
A high-dimensional $r$-factor model for an $n$-dimensional vector time series is characterised by the presence of a large eigengap (increasing with $n$) between the $r$-th and the $(r+1)$-th largest eigenvalues of the covariance matrix. Consequently, Principal Component (PC) analysis is the most popular estimation method for factor models and its consistency, when $r$ is correctly estimated, is well-established in the literature. However, popular factor number estimators often suffer from the lack of an obvious eigengap in empirical eigenvalues and tend to over-estimate $r$ due, for example, to the existence of non-pervasive factors affecting only a subset of the series. We show that the errors in the PC estimators resulting from the over-estimation of $r$ are non-negligible, which in turn lead to the violation of the conditions required for factor-based large covariance estimation. To remedy this, we propose new estimators of the factor model based on scaling the entries of the sample eigenvectors. We show both theoretically and numerically that the proposed estimators successfully control for the over-estimation error, and investigate their performance when applied to risk minimisation of a portfolio of financial time series.
\\
\\
\noindent{\bf Keywords}: Factor models, principal component analysis, factor number, sample eigenvectors.
\end{abstract}

\footnotetext[1]{Department of Economics, Universit\`a di Bologna, Piazza Scaravilli 2, 40126, Bologna, Italy.\\
Email: \url{matteo.barigozzi@unibo.it}.} 

\footnotetext[2]{School of Mathematics, University of Bristol, Bristol, BS8 1TH, UK.\\
Email: \url{haeran.cho@bristol.ac.uk}.}

%%%%%%%%%%%%%%%%%%%%%%%%%%%%%%%%%%%%%%%%%%

\section{Introduction}

Factor modelling is a popular approach to dimension reduction 
in high-dimensional time series analysis.
%under which the pervasive cross-sectional correlations are assumed to be driven by handful of latent factors.
It has been successfully applied to large panels of time series 
for forecasting macroeconomic variables \citep{stockwatson02JASA}, 
building low-dimensional indicators of the whole economic activity \citep{stockwatson02JBES}
and analysing dynamic brain connectivity using high-dimensional fMRI data \citep{ting2017},
to name a few.

In this paper, we consider one of the most general factor models in the literature,
the approximate dynamic factor model, which 
permits serial dependence in the factors and 
both serial and cross-sectional dependence among the idiosyncratic components. 
More specifically, given an $n$-dimensional vector time series
$\{\mbf x_t = (x_{1t}, \ldots, x_{nt})^\top, \, 1 \le t \le T\}$, 
we investigate the problem estimating the factor model
\begin{align}
\label{eq:model}
x_{it} = \bm\lambda_i^\top\mbf f_t+\vep_{it}, % \quad  i=1,\ldots, n,\quad t=1,\ldots, T,
\end{align}
where $\bm\lambda_i$ and $\mbf f_t$ are $r$-dimensional vectors of loadings and factors, respectively.
%(we adopt the notations from the time series factor model literature
%where $n$ denotes the dimensionality and $T$ the sample size). 
We refer to $\chi_{it} = \bm\lambda_i^\top\mbf f_t$ as the common component and 
$\vep_{it}$ as the idiosyncratic component,
and assume the number of factors, $r$, to be fixed independent of $n$ and~$T$.

The main assumption that guarantees the asymptotic identification under \eqref{eq:model} is 
the existence of a \textit{large} (increasing with $n$) eigengap 
between the $r$ leading eigenvalues of the covariance matrix of $\mbf x_t$
and the remaining ones. Intuitively, since the eigengap is assumed to increase with $n$, 
the more series are pooled together, 
the more the contribution of the factors to the total co-variation in the data 
is likely to emerge over the idiosyncratic components (`blessing of dimensionality'). 
As a consequence, a natural way of estimating \eqref{eq:model} 
is via Principal Component (PC) analysis, 
through which the common components are estimated
as the projection of the data onto the space spanned by 
the leading eigenvectors of the sample covariance matrix, 
i.e., given some estimator $\wh r$ of the factor number $r$, 
the PC estimator of the common component is defined as
\begin{align}
\wh{\chi}_{it}^{\pca}=\sum_{j=1}^{\wh r}\wh{w}_{x,ij}\wh{\mbf w}_{x,j}^\top \mbf x_t ,\label{eq:pca:est}
\end{align}
where $\wh{\mbf w}_{x,j}=(\wh{w}_{x,1j},\ldots, \wh{w}_{x,nj})^\top$ is the normalised eigenvector corresponding to the $j$-th largest eigenvalue of the sample covariance matrix of $\mbf x_t$. The PC estimator \eqref{eq:pca:est} 
%provides the linear estimates with the smallest possible residual covariance and it
%in the limit $n \to \infty$, 
%we are able to recover the common component 
%even in presence of dependencies across the idiosyncratic components. In other words, 
allows for consistent estimation of 
the common component of model \eqref{eq:model}, provided that both $n, T \to \infty$  (see \citealp{bai2003}, and \citealp{fan13}).

However, the theoretical properties of PC estimators have always been investigated 
conditional on $\wh r$ being a consistent estimator of $r$,
% under the assumption that the number of factors $r$ is known, 
and the problem of determining $r$ has typically been treated separately. 
Many methods exist for estimating the factor number: \cite{baing02}, \cite{ABC10},
see \cite{onatski10}, \cite{ahn2013}, \cite{yu2018}, \citet{trapani2018}, and \cite{bn2017}, 
to name a few, all of which exploit the postulated existence of the eigengap. 
On the other hand, it is often difficult to identify the large gap
from empirical eigenvalues.
%as it depends on the dimensions of the data,
%the signal-to-noise ratio between the common and idiosyncratic components 
%and the degree of cross-sectional correlations in the idiosyncratic components.
In particular, it is known that the presence of moderate cross-sectional correlations
in the idiosyncratic components shrinks the empirical eigengap 
by introducing some so-called `weak' factor \citep{onatski12},
and we empirically demonstrate that 
commonly adopted factor number estimators 
often over-estimate $r$ in such situations.
Moreover, as noted in \cite{bcf2017}, instabilities in the factor structure
tend to spuriously enlarge the factor space and introduce further difficulties to 
determining the number of factors.
Finally, as shown later in the paper, different estimators frequently return discordant results, 
thus making it ambiguous for the user to choose a single value to rely on. 

\subsection{Our contributions}
\label{sec:contribution}

The question is, what do we do if we have a range of possible candidate estimators of $r$,
or if we believe that none of the estimators is reliable?
One may use the largest number of factors 
returned by available methods, or set it to be even larger, 
with the expectation of avoiding the hazard of under-estimating
the factor-driven variation, which is a problem without any clear solution. 
%Indeed, \cite{onatski15} noted the negligibility of the cross-sectional average of 
%the estimation error in the common components estimated via PC with $k > r$
%as the number of factors.

In this paper, we first show that over-estimation of $r$ can incur non-negligible estimation error
when considering the worst case scenarios for individual common components (see Proposition~\ref{prop:pca}). 
To the best of our knowledge, this problem has not been investigated in the literature before. 
Identifying the theoretical difficulties arising under the time series factor model, 
we propose a novel blockwise estimation technique that enables
rigorous treatment of the PC-based estimators which is another contribution made in this paper.

In order to mitigate the lack of a reliable estimator of $r$, 
we propose a modified PC estimator
which performs as well as the `oracle' estimator constructed {\it with the knowledge of true~$r$}
and, consequently, makes our estimation procedure free from 
the difficult task of estimating $r$ accurately.

More specifically, the factor model \eqref{eq:model} is usually characterised 
by the following eigengap conditions (see e.g. \citealp{fan13}):
\begin{enumerate}[label=(C\arabic*)]
\setlength\itemsep{0em}
\item \label{eq:c1} there exist some fixed $\underline c_j, \bar{c}_j$ such that for $1\le j\le r$,
\begin{align}\nn
0 < \underline c_j <\lim_{n\to\infty}\!\!\inf \frac {\mu_{\chi, j}}{n} \le 
\lim_{n\to\infty}\!\!\sup \frac{\mu_{\chi, j}}{n} < \bar{c}_j < \infty
\end{align}
and $\bar{c}_{j+1} < \underline c_j$ for $j \le r - 1$,
\item  \label{eq:c2} $\mu_{\vep, 1} < C_{\vep} < \infty$ for any $n$,
\end{enumerate}
where $\mu_{\chi, j}$ and $\mu_{\vep, j}$ denote the $j$-th largest eigenvalues of 
the covariance matrices of the common and idiosyncratic components, respectively. 
The linear divergence of eigenvalues in \ref{eq:c1} is a prevailing and natural assumption
in the factor model literature, implying that all series in the panel 
are equally important for the recovery of the factors.
From \ref{eq:c1}, it follows that 
$\mbf w_{\chi, j} = (w_{\chi, 1j}, \ldots, w_{\chi, nj})^\top$,
the normalised eigenvector of the covariance matrix of $\bm\chi_t$ corresponding to $\mu_{\chi, j}$,
has its coordinates asymptotically bounded as 
$\max_{1 \le i \le n} |w_{\chi, ij}| = O(n^{-1/2})$ for all $j \le r$ 
(see~\eqref{eq:bounded:w} below).
Thanks to the eigengap and the Davis-Kahan theorem \citep{yu15},
the coordinates of 
the $r$ leading eigenvectors of the sample covariance matrix of the data, $\wh{\mbf w}_{x, j}, \, j \le r$, are
also bounded asymptotically as
$\max_{1 \le j \le r} \max_{1 \le i \le n} |\wh w_{x, ij}| = O_p(n^{-1/2})$.
On the other hand, precisely due to the lack of this eigengap,
meaningful control of the behaviour of $\wh{w}_{x, ij}, \, j \ge r+1$ is 
not obvious under the dynamic factor model in \eqref{eq:model}. 
%Consequently, over-estimation of the number of factors 
%may introduce non-negligible contribution from spurious factors in the PC estimator.

Motivated by these observations, we propose to modify $\wh{\mbf w}_{x, j}$ via {\it scaling} as
\begin{align}
\label{eq:w:scale:one}
\wh{\mbf w}^{\sca}_{x, j} = \nu_j^{-1} \; \wh{\mbf w}_{x, j} \quad \text{with} \quad
\nu_j = \max\{1, \delta_n^{-1} \, \max_{1 \le i \le n} |\wh w_{x, ij}|\},
\end{align}
which ensures that the entries of the modified eigenvectors
are bounded by some $\delta_n$ of order $n^{-1/2}$.
By substituting $\wh{\mbf w}^{\sca}_{x, j}$ in place of $\wh{\mbf w}_{x, j}$ in \eqref{eq:pca:est},
we obtain a novel \textit{scaled} PC estimator of the common component. 
While conceptually and computationally simple, 
the scaled PC estimator attains the same asymptotic error bound as
the oracle PC estimator obtained with the true $r$, 
successfully curtailing the error attributed to spurious factors
without requiring the accurate estimation of the factor number
beyond that $\wh r \ge r + 1$.
We also propose a well-motivated choice of the tuning parameter $\delta_n$.

The good performance of the scaled PC estimator when $r$ is over-estimated, 
in contrast to that of the PC estimator, is demonstrated on simulated datasets.
In addition, we investigate the impact of the non-negligible errors in the PC estimator
(or lack thereof in the modified PC estimator) on large covariance matrix estimation
through an application to risk minimisation of a portfolio of financial time series.

\subsection{Relationship to the existing literature}

Recently, \cite{bn2017} adopted the eigenvalue shrinkage for minimum-rank factor analysis
under time series factor models.
Our approach is distinguished from theirs in 
that we aim at avoiding the reliance on the accurate estimation of the factor number itself
in establishing the theoretical consistency of the estimator of common components.
Sharing the aim closer to ours, \cite{fan2019} propose a diversified factor estimator obtained as 
cross-sectional averages of the data with respect to pre-determined weights and
show their robustness to over-estimating the number of factors.

We mention two other approaches to time series factor analysis for which our work can be relevant.
First, assuming that all serial dependence in the data is captured by the factors, 
% and thus that the idiosyncratic component is a white noise,
\citet{lam2011} and \citet{lam2012} proposed an alternative approach to factor model analysis.
Since their method is also based on eigenanalysis of a suitable covariance matrix, 
our methodology can be readily adapted to this case as well. 
Second, \citet{FHLR00} considered a richer factor structure where factors 
are allowed to have lagged effects on the data. 
Estimation of such model is in general based on spectral PC analysis, but
other approaches exist that require standard PC analysis at the initial or final step 
(e.g., \citealp{FHLR05}, \citealp{baing07}, \citealp{FGLR09}, and \citealp{doz2011}),
and our proposed modifications can be easily adopted for this purpose.

Finally, we note that there are some links between the model and the estimators proposed here
and the vast literature on statistical models and methods based on random matrix theory,
see \cite{el2008}, \cite{cai2013}, \cite{donoho2018a} and \cite{donoho2018b},
and also \cite{paul2014} for an overview.

\subsubsection*{Structure of the paper}

The rest of the paper is organised as follows.
We introduce the approximate factor model in Section~\ref{sec:fm},
where we also discuss its estimation via PC, and we
investigate the behaviour of factor number estimators
as well as the impact of over-estimating the factor number on the PC estimator.
In Section~\ref{sec:mod}, we motivate and introduce the modified PC estimator
based on scaling, and study its theoretical properties.
Comparative simulation study of PC-based estimators
is conducted in Section~\ref{sec:sim}, 
and we apply the proposed estimators to financial data analysis in Section~\ref{sec:real}.
% Section~\ref{sec:conc} concludes the paper.
All the proofs of the theoretical results and further simulation results are provided in Appendix.

\subsubsection*{Notation}

For a given $m \times n$ matrix $\mbf B$ with $b_{ij}$ denoting its $(i, j)$ element, 
its spectral norm is defined as $\Vert\mbf B\Vert= \sqrt{\mu_1(\mbf B\mbf B^\top)}$,
where $\mu_k(\mbf C)$ denotes the $k$-th largest eigenvalue of $\mbf C$, 
its Frobenius norm as $\Vert\mbf B\Vert_F=\sqrt{\sum_{i=1}^m\sum_{j=1}^n b_{ij}^2}$, 
and also $\Vert\mbf B\Vert_{\max}=\max_{1\le i\le m}\max_{1\le j\le n} |b_{ij}|$.
The sub-exponential norm of a random variable $X$ is defined as
$\Vert X \Vert_{\psi_1}=\inf_k\{k : \E[\exp(|X|/k)]\le 2\}$. 
For a given set $\Pi$, we denote its cardinality by $|\Pi|$.
For any vector $\mbf a = (a_1, \ldots, a_m) \in \R^m$, we denote
$\Vert \mbf a \Vert_0 = \vert \{1 \le i \le m: \, a_i \ne 0 \} \vert$
and $\Vert \mbf a \Vert_\infty = \max_{1 \le i \le m} \vert a_i \vert$.
Also, we use the notations $a \vee b = \max(a, b)$ and $a \wedge b = \min(a, b)$.
The notation $a \asymp b$ indicates that $a$ is of the order of $b$, and 
$a \gg b$ indicates that $a^{-1} b \to 0$.
We denote an $m \times m$-identity matrix by~$\mbf I_m$.

\section{The approximate dynamic factor model}
\label{sec:fm}

\subsection{Model and assumptions}

Recall the factor model in \eqref{eq:model},
where an $n$-dimensional vector time series $\mbf x_t = (x_{1t}, \ldots, x_{nt})^\top$ 
is divided into the common component
$\bm\chi_t = (\chi_{1t}, \ldots, \chi_{nt})^\top = \bm\Lambda \mbf f_t$
driven by the vector of $r$ latent factors $\mbf f_t = (f_{1t}, \ldots, f_{rt})^\top$,
with $\bm\Lambda = [\bm\lambda_1, \ldots, \bm\lambda_n]^\top$ 
as the $n \times r$ matrix of loadings,
and the idiosyncratic component
$\bm\vep_t = (\vep_{1t}, \ldots, \vep_{nt})^\top$. 
Without loss of generality, we assume $\E(f_{jt}) = \E(\vep_{it}) = 0$ for all $i, j, t$. 

We now list and motivate the assumptions imposed on the approximate dynamic factor model \eqref{eq:model} (see e.g., \cite{fan13} and \cite{bcf2017} for similar conditions).

\begin{assum}[Identification]
\label{assum:id} \hfill
\begin{compactenum}
\item[(i)] $\E(\mbf f_t\mbf f_t^\top) = \mbf I_r$ for all $t \ge 1$.
\item[(ii)] There exists a positive definite $r \times r$ matrix $\mbf H$ with distinct eigenvalues and 
such that $n^{-1}\bm\Lambda^\top\bm\Lambda \to \mbf H$ as $n \to \infty$. 
\item[(iii)] There exists $\bar{\lambda} \in (0, \infty)$ such that 
$\Vert\bm\Lambda\Vert_{\max}< \bar{\lambda}$.
\item[(iv)] There exists $C_\vep \in (0, \infty)$ such that, for any $t \ge 1$,
$\sum_{i=1}^n\sum_{i'=1}^n a_ia_{i'}\E(\vep_{it}\vep_{i't}) < C_\vep$
for any sequence of coefficients $\{a_i\}_{i=1}^n$ satisfying $\sum_{i=1}^n a_i^2 = 1$.
\item[(v)] $\E(f_{jt}\vep_{it'}) = 0$ for all $i \le n$, $j \le r$ and $t, t' \le T$.
\end{compactenum}
\end{assum}

We adopt the normalisation given in Assumption~\ref{assum:id}~(i)--(ii) for the purpose of identification; 
in general, factors and loadings are recoverable up to a linear invertible transformation only.
% Similar assumptions are found in the factor model literature, see e.g., equation (2.1) of \cite{fan13}. 
Assumption~\ref{assum:id}~(iii) is a commonly found assumption
in the factor model literature (see Assumption~B in \cite{bai2003}) which,
together with Assumption~\ref{assum:id}~(ii),
requires that factors influence all cross-sections to a similar degree.
Assumption~\ref{assum:id} (iv) allows for mild cross-sectional dependence 
across idiosyncratic components. 
In other words, we are considering an {\it approximate} factor structure, 
as opposed to the classical {\it exact} factor model where 
$\bm\vep_t$ is assumed to be uncorrelated cross-sectionally.  
% We note that Assumption~\ref{assum:id} is sufficient in guaranteeing the \textit{commonness} of $\chi_{it}$ and the \textit{idiosyncrasy} of $\vep_{it}$  according to Definitions 2.1 and 2.2 of \cite{hallinlippi13}.
It is possible to relax Assumption~\ref{assum:id}~(v) and allow for weak dependence 
between the factors and the idiosyncratic components 
% by requiring that  $\E(T^{-1}\sum_{t = 1}^Tf_{jt}\vep_{it}) \le M/\sqrt T$
%% $\E\{(\sum_{t=1}^T f_{jt}\vep_{it})^2\}\le MT$ 
%for some fixed $M$ 
(c.f. Assumption~D of \citealp{baing02}).
% re-writing their Assumption D as $\max_{i \le n, j \le r} \E| T^{-1}\sum_{t = 1}^Tf_{jt}\vep_{it} |^2 \le M/T$, we have $\{T^{-1}\E(\sum_{t = 1}^Tf_{jt}\vep_{it})\}^2 \le M/T$

In order to motivate the assumptions further, we adopt the notations
\begin{align*}
\bm\Gamma_\chi= \bm\Lambda\l(\frac{1}{T}\sum_{t = 1}^T \E(\mbf f_t\mbf f_t^\top)\r)\bm\Lambda^\top = \bm\Lambda\bm\Lambda^\top,
\quad
\bm\Gamma_\vep = \frac{1}{T}\sum_{t = 1}^T \E(\bm\vep_t\bm\vep_t^\top),\;
%\quad
%%\bm\Gamma_\chi = \bm\Lambda \bm\Gamma_f  \bm\Lambda^\top,
\text{ and }\;
\bm\Gamma_x = \bm\Gamma_\chi  + \bm\Gamma_\vep.
\end{align*}
If $\mbf f_t$ and $\bm\vep_t$ are covariance stationary, 
then these matrices are the corresponding population covariance matrices.
%Also, denote the sample covariance matrix of the observed data $\mbf x_t$ as 
%\begin{align*}
%\wh{\bm\Gamma}_x = \frac 1 T\sum_{t=1}^T \mbf x_t \mbf x_t^\top,
%\end{align*}
%and analogously define the sample covariance matrices of the other (unobservable) components as 
%$\wh{\bm\Gamma}_\chi$, $\wh{\bm\Gamma}_f$, and $\wh{\bm\Gamma}_\vep$. 
Also, we denote the eigenvalues (in non-increasing order) of 
$\bm\Gamma_\chi$, $\bm\Gamma_\vep$ and $\bm\Gamma_x$ by 
$\mu_{\chi, j}$, $\mu_{\vep, j}$ and $\mu_{x, j}$, respectively.
% and their sample counterparts by $\wh{\mu}_{\chi, j}$, $\wh{\mu}_{\vep, j}$ and $\wh{\mu}_{x, j}$.
Then, Assumption~\ref{assum:id} leads to
\ref{eq:c1}--\ref{eq:c2} in Section~\ref{sec:contribution},
i.e., $\mu_{\chi, j}, \, j \le r$ diverge linearly in $n$ as $n \to \infty$,
whereas $\mu_{\vep, 1}$ is bounded for any $n$.
% thus allowing for some degree of correlation among idiosyncratic components. 
This condition coincides with Definition~2 in \cite{chamberlain1983} and Assumption~2 in \citet{fan13}, 
and it is also in the same spirit as Assumption~C.4 in \cite{bai2003}
where cross-sectional dependence of idiosyncratic components is assumed to be weak.

Moreover, \ref{eq:c1}--\ref{eq:c2} imply that, due to Weyl's inequality, % \citep{weyl1912}, 
the eigenvalues of $\bm\Gamma_x$, $\mu_{x, j}$, satisfy the following eigengap conditions:
\begin{enumerate}[label=(C\arabic*), start=3]
\setlength\itemsep{0em}
%\item \label{eq:c1} there exist some fixed $\underline c_j, \bar{c}_j$ such that for $j \le r$,
%\begin{align*}
%0 < \underline c_j <\lim_{n\to\infty}\!\!\inf \frac {\mu_{\chi, j}}{n} \le 
%\lim_{n\to\infty}\!\!\sup \frac{\mu_{\chi, j}}{n} < \bar{c}_j < \infty,
%\end{align*}
%and $\bar{c}_{j+1} < \underline c_j$ for $j = 1, \ldots, r-1$;
%\item \label{eq:c2} $\mu_{\vep, 1} < C_{\vep}$, for any $n$;
\item \label{eq:c3} The $r$ largest eigenvalues, $\mu_{x, 1}, \ldots, \mu_{x, r}$, 
diverge linearly in $n$ as $n \to \infty$;
\item  \label{eq:c4} the $(r+1)$-th largest eigenvalue, $\mu_{x, r+1}$, stays bounded for any $n$.
\end{enumerate}
From \ref{eq:c1}--\ref{eq:c4} above, it is clear that for consistent estimation of the common components, 
approximate factor models need to be considered in the asymptotic limit where $n \to \infty$, 
i.e., these models enjoy what is sometimes referred to as the blessing of dimensionality. 
%recommends the use of large cross-sections to apply PC for factor analysis. 
In particular, we require:
\begin{assum}
\label{assum:nt}
$n \to \infty$ as $T \to \infty$, with $n = O(T^\kappa)$ for some $\kappa \in (0, \infty)$.
\end{assum}
Under Assumption~\ref{assum:nt}, we operate in a high-dimensional setting
that permits $n \gg T$, unlike in the random matrix theory literature 
where it is typically assumed that $n/T \to y \in (0, \infty)$ \citep{johnstone2001}.
Furthermore we assume:
\begin{assum}[Tail behaviour]
\label{assum:tail} \hfill
\begin{compactenum}
\item[(i)] $\max_{1 \le j \le r} \max_{1 \le t \le T} \Vert f_{jt} \Vert_{\psi_1} < B_f$ for some $B_f \in (0, \infty)$.
% There exists some fixed $c_f, \beta_f \in (0, \infty)$ such that, for any $z > 0$, $j \le r$ and $t \le T$,
%\begin{align*}
%\p(|f_{jt}| > z) \le \exp\{1-(z/c_f)^{\beta_f}\}.
%\end{align*}
% setting $\beta_f = 1$ for simplicity, $2c_f = B_f$ using \cite{vershynin2010}

\item[(ii)] 
$\max_{1 \le t \le T} \Vert \bm\vep_t \Vert_{\psi_1} < B_\vep$ for some $B_\vep \in (0, \infty)$,
where 
$\Vert \bm\vep_t \Vert_{\psi_1} = \sup_{\mbf v \in \mc \R^n: \, \Vert \mbf v \Vert = 1} 
\Vert \mbf v^\top\bm\vep_t \Vert_{\psi_1}$.
\end{compactenum}
\end{assum}

\begin{assum}[Strong mixing]
\label{assum:dep} 
Denoting the $\sigma$-algebra generated by $\{(\mbf f_t, \bm\vep_t), \, s \le t \le e\}$ 
by $\mc F_s^e$, let
$\alpha(k) = \max_{1 \le t \le T} \sup_{{A \in \mc F_{-\infty}^t, B \in \mc F_{t+k}^\infty}}
|\p(A)\p(B) - \p(A \cap B)|$.
Then, there exist some fixed $c_\alpha, \beta \in (0, \infty)$, 
%% the latter depending only on $B_f$ and $B_\vep$, satisfying $3\beta_f^{-1} + \beta^{-1} > 1$,
% we are setting $\beta_f = \beta_\vep = 1$ such that the previous requirement trivially holds
such that $\alpha(k) \le \exp(-c_\alpha k^\beta)$ for all $k, T \in \Z^+$.
\end{assum}

The sub-exponential tail conditions in Assumption~\ref{assum:tail},
along with Assumption~\ref{assum:dep}, 
allow us to control the deviation of sample covariance estimates from their population counterparts
via Bernstein-type inequality (see Theorem 1 of \citealp{merlevede2011})
under the approximate dynamic factor model. 
We stress that either strict or weak stationarity of $f_{jt}$ and $\vep_{it}$ is not required
in performing the PC-based estimation,
provided that the loadings are time-invariant.

\subsection{Estimation via Principal Component Analysis}
\label{sec:estimation}

The most common way to estimate the approximate factor model \eqref{eq:model} is 
by means of PC analysis, and the asymptotic properties of the PC estimator have been well-established: 
in particular, we refer to \citet{fan13} where a set-up similar to ours is considered. 
 
Recall that the PC estimator of the common component: $\wh\chi_{it}^{\pca} 
= \sum_{j=1}^{\wh r} \wh w_{x, ij} \wh{\mbf w}_{x, j}^\top\mbf x_t$,
where $\wh{\mbf w}_{x, j}$ denote the $j$-th leading normalised eigenvector 
of the sample covariance 
$\wh{\bm\Gamma}_x = T^{-1} \sum_{t = 1}^T \mbf x_t \mbf x_t^\top$,
and $\wh r$ is an estimator of the number of factors $r$.
Theorem~1 of \cite{bcf2017}, 
which is a refinement of Corollary~1 of \citet{fan13}, establishes a uniform bound
on the estimation error over $1 \le i \le n$ and $1 \le t \le T$
of the PC estimator when $r$ is {\it known}, i.e., $\wh r = r$, 
under Gaussianity of the idiosyncratic component.
Here, we generalise the theorem to the case of 
sub-exponential distributions as specified in Assumption~\ref{assum:tail}. 
Its proof can be found in Appendix~\ref{sec:pf:prop:one}.

\begin{prop}
\textit{\label{thm:common}
Under Assumptions~\ref{assum:id}--\ref{assum:dep}, 
the PC estimator $\wh\chi_{it}^{\pca}$ with $\wh r = r$ satisfies
\begin{align}\nn
\max_{1 \le i \le n} \max_{1 \le t \le T}
\vert \wh{\chi}^{\pca}_{it}-\chi_{it}\vert = 
O_p\l\{\Big(\sqrt{\frac{\log(n)}{T}} \vee \frac{1}{\sqrt n}\Big)\log(T)\r\}.
\end{align}
}
\end{prop}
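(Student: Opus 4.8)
The plan is to follow the classical principal-component error analysis (as in \citealp{bai2003} and \citealp{fan13}), reworked so that all stochastic fluctuations are controlled through the sub-exponential Bernstein inequality for strongly mixing sequences of \citet{merlevede2011} rather than under Gaussianity. Write $\mbf e_i$ for the $i$-th canonical basis vector of $\R^n$, and let $\wh{\mbf P} = \sum_{j=1}^r \wh{\mbf w}_{x,j}\wh{\mbf w}_{x,j}^\top$ and $\mbf P = \sum_{j=1}^r \mbf w_{\chi,j}\mbf w_{\chi,j}^\top$ be the orthogonal projections onto the estimated and the true factor spaces. Since $\bm\chi_t = \bm\Lambda\mbf f_t$ lies in the range of $\mbf P$, we have $\chi_{it} = \mbf e_i^\top\mbf P\bm\chi_t$, while $\wh\chi_{it}^{\pca} = \mbf e_i^\top\wh{\mbf P}\mbf x_t$. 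Using $\mbf x_t = \bm\chi_t + \bm\vep_t$, this yields the decomposition
\begin{align}\nn
\wh\chi_{it}^{\pca} - \chi_{it} = \mbf e_i^\top(\wh{\mbf P} - \mbf P)\bm\chi_t + \mbf e_i^\top\wh{\mbf P}\bm\vep_t,
\end{align}
and the task reduces to bounding each term uniformly over $1\le i\le n$ and $1\le t\le T$.

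For the idiosyncratic term I would split $\wh{\mbf P} = \mbf P + (\wh{\mbf P}-\mbf P)$ and treat $\mbf e_i^\top\mbf P\bm\vep_t = \sum_{j=1}^r w_{\chi,ij}(\mbf w_{\chi,j}^\top\bm\vep_t)$ as the leading piece. Here $\max_{i,j}|w_{\chi,ij}| = O(n^{-1/2})$ by \ref{eq:c1}, while each $\mbf w_{\chi,j}^\top\bm\vep_t$ is sub-exponential with $\Vert\mbf w_{\chi,j}^\top\bm\vep_t\Vert_{\psi_1} \le B_\vep$ by Assumption~\ref{assum:tail}(ii); a union bound over $t\le T$ then gives $\max_{j,t}|\mbf w_{\chi,j}^\top\bm\vep_t| = O_p(\log T)$, so this piece is $O_p(n^{-1/2}\log T)$. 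This is exactly where the factor $\log T$ and the $n^{-1/2}$ rate originate. The residual $\mbf e_i^\top(\wh{\mbf P}-\mbf P)\bm\vep_t$ should be of smaller order once the eigenvector perturbation is controlled, as discussed next.

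The delicate term is $\mbf e_i^\top(\wh{\mbf P}-\mbf P)\bm\chi_t$, for which the naive bound $\Vert\wh{\mbf P}-\mbf P\Vert\cdot\Vert\bm\chi_t\Vert$ is too crude: Davis--Kahan (\citealp{yu15}) combined with the eigengap $\mu_{\chi,r}\asymp n$ of \ref{eq:c1} gives $\Vert\wh{\mbf P}-\mbf P\Vert = O_p(\sqrt{\log(n)/T})$, but $\Vert\bm\chi_t\Vert = O_p(\sqrt n)$, and their product carries a spurious factor $\sqrt n$. The resolution is to obtain a row-wise rather than a spectral bound, exploiting that a single coordinate $\mbf e_i^\top$ selects entries of size $O_p(n^{-1/2})$ of the eigenvectors (the bound $\max_{i,j}|\wh w_{x,ij}| = O_p(n^{-1/2})$ noted before the statement) and that $\bm\chi_t$ is aligned with the factor directions. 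Concretely, I would expand $\wh{\mbf P} - \mbf P$ in terms of the quantities $\wh{\mbf w}_{x,j} - \mbf w_{\chi,j}$ and $\wh{\mbf w}_{x,j}^\top\mbf w_{\chi,j'}$, and control these through the sample cross-moments $T^{-1}\sum_t\mbf f_t\bm\vep_t^\top$ and $T^{-1}\sum_t(\mbf f_t\mbf f_t^\top - \E\mbf f_t\mbf f_t^\top)$ together with the eigenvalue normalisation $\wh\mu_{x,j}\asymp n$; the uniform max-norm bound $\max_{i,j}|T^{-1}\sum_t\vep_{it}f_{jt}| = O_p(\sqrt{\log(n)/T})$ supplied by the Bernstein inequality, combined with the $O_p(\log T)$ maxima over $t$, then yields the contribution $O_p(\sqrt{\log(n)/T}\cdot\log T)$ for this term.

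Collecting the two contributions gives the asserted rate $O_p\{(\sqrt{\log(n)/T}\vee n^{-1/2})\log T\}$, the first summand coming from the finite-$T$ estimation of the factor space and the second from the irreducible projection of the idiosyncratic component. I expect the principal difficulty to lie in the treatment of the common-component term $\mbf e_i^\top(\wh{\mbf P}-\mbf P)\bm\chi_t$: upgrading the spectral Davis--Kahan bound to entrywise/row-wise control of $\wh{\mbf P}-\mbf P$, which is what prevents the spurious $\sqrt n$ inflation and forces the argument beyond off-the-shelf perturbation theory. By comparison, replacing Gaussian concentration by the mixing Bernstein inequality is routine, but it is the source of the additional $\log T$ factors appearing throughout.
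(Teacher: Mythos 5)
Your proposal is correct and follows essentially the same route as the paper's proof: your decomposition $(\wh{\mbf P}-\mbf P)\bm\chi_t + \wh{\mbf P}\bm\vep_t$ is an exact regrouping of the paper's $\bm\varphi_i^\top(\wh{\mbf W}_x\wh{\mbf W}_x^\top - \mbf W_\chi\mbf W_\chi^\top)\mbf x_t + \bm\varphi_i^\top\mbf W_\chi\mbf W_\chi^\top\bm\vep_t$, and your key step --- upgrading spectral Davis--Kahan to row-wise control of the eigenvector perturbation via max-norm concentration of the sample cross-moments under \citet{merlevede2011}, the eigenvalue normalisation, and the $O(n^{-1/2})$ entrywise bound \eqref{eq:bounded:w} --- is exactly what the paper does in Lemma~\ref{lem:cov}~(ii)--(iii) through the identity $\wh{\mbf W}_x = \wh{\bm\Gamma}_x\wh{\mbf W}_x\wh{\mbf M}_x^{-1}$, with the same $O_p(\log T)$ uniform maxima from sub-exponential tails supplying the final rate. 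Two harmless inaccuracies: the residual $\mbf e_i^\top(\wh{\mbf P}-\mbf P)\bm\vep_t$ is of the \emph{same} order as the target rate rather than smaller (the paper absorbs it into the first term by keeping $\mbf x_t$ there), and when expanding in eigenvector differences you need the sign-alignment matrix $\mbf S$, which your projection formulation otherwise sidesteps.
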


Two key results are required for proving Proposition~\ref{thm:common}.
First, we make use of the eigengap between $\mu_{x, r}$ and $\mu_{x, r+1}$
increasing linearly in $n$ (see \ref{eq:c3}--\ref{eq:c4}),
which ensures that the eigenspace of ${\bm\Gamma}_{\chi}$
is consistently estimated by the $r$ leading eigenvectors of $\wh{\bm\Gamma}_x$. 
More specifically, 
there exists a diagonal $r\times r$ matrix $\mbf S$ with entries $\pm 1$, such that
\begin{align}
\label{eq:davisk}
\Vert \wh{\mbf W}_{x}-  \mbf W_{\chi}\mbf S \Vert \le 
\frac{2^{3/2}\sqrt{r}\Vert\wh{\bm\Gamma}_x-\bm\Gamma_{\chi}\Vert}{\mu_{\chi, r}}
= O_p\l( \sqrt{\frac{\log(n)}{T}} \vee \frac{1}{n}\r), 
\end{align}
where $\wh{\mbf W}_{x} = [\wh{\mbf w}_{x, j}, \, j \le r]$ and
$\mbf W_\chi = [\mbf w_{\chi, j}, \, j \le r]$.
The result in~\eqref{eq:davisk} follows 
from the modified Davis-Kahan theorem of \cite{yu15},
the lower bound of $\underline c_r n$ on $\mu_{\chi, r}$ from~\ref{eq:c1},
and the closeness between $\wh{\bm\Gamma}_x$ and $\bm{\Gamma}_x$ 
under Assumptions~\ref{assum:nt}--\ref{assum:dep} (see Lemma~\ref{lem:cov}~(i) in Section~\ref{sec:prem}).
We can further show that
\begin{align}
\label{eq:evec:consist}
\sqrt n\,\l\Vert\bm\varphi_i^\top(\wh{\mbf W}_x-\mbf W_{\chi}\mbf S)\r\Vert
= O_p\l(\sqrt{\frac{\log(n)}{T}} \vee \frac{1}{\sqrt n}\r),
\end{align}
where $\bm\varphi_i$ an $n$-vector of zeros except for its $i$-th element being one,
see Lemma~\ref{lem:cov}~(iii).

Secondly, denoting the eigendecomposition of the covariance matrix of the common components by
$\bm\Gamma_\chi = \mbf W_\chi\mbf M_\chi\mbf W_\chi^\top$
with $\mbf M_\chi = \diag(\mu_{\chi, 1}, \ldots, \mu_{\chi, r})$,
\ref{eq:c1} leads to
\begin{align}
\label{eq:bounded:w}
\max_{1 \le i \le n} \sqrt{\sum_{j=1}^r w_{\chi,ij}^2}=
\max_{1 \le i \le n} \Vert\bm\varphi_i^\top\mbf W_\chi\Vert \le
\max_{1 \le i \le n} \Vert\bm\varphi_i^\top\bm\Gamma_{\chi}\Vert\, \Vert \mbf W_\chi\Vert \, 
\Vert \mbf M_\chi^{-1}\Vert
= O\l(\frac{1}{\sqrt n}\r),
\end{align}
i.e., asymptotically, each element of $\mbf W_{\chi}$ is $O(n^{-1/2})$. 
This, combined with \eqref{eq:evec:consist},
leads to 
\begin{align}
\label{eq:bounded:ew}
\max_{1 \le i \le n}  \, \max_{1 \le j \le r}  \, |\wh w_{x, ij}| = O_p\l(\frac{1}{\sqrt n}\r).
\end{align}
The bound in~\eqref{eq:bounded:ew} serves as the main motivation 
behind introducing the modified PC estimators in Section~\ref{sec:mod}.

\begin{rem}[Optimality of PC]
\label{rem:pca:cai}
The PC estimator is appealing for the following reasons. 
First, under the assumption of spherical idiosyncratic components, 
$\bm\vep_t \sim_{\iid} \mc N_n(\mbf 0, \sigma^2 \mbf I_n)$ for some $\sigma^2 > 0$, 
the PC estimator of the loadings is asymptotically equivalent to 
their Maximum Likelihood estimator \citep{tipping1999}. 
Second, the sample principal subspace estimator is minimax rate optimal,
see Theorem~5 of \cite{cai2013} which shows that
$\E\Vert \wh{\mbf W}_x\wh{\mbf W}_x^\top - \mbf W_x\mbf W_x^\top \Vert^2_F
\asymp rn/(\mu_{\chi, r} T)$.
This, combined with \ref{eq:c1}, is comparable 
to the convergence rate reported in \eqref{eq:davisk},
although the latter is obtained under the more general approximate dynamic factor model.
Third, when allowing for non-spherical and possibly correlated idiosyncratic components, 
the PC estimator by definition delivers the linear combination of the data with largest variance
in the sense that, for the $j$-th PC, 
$\wh{\Var}(\wh{\mbf w}_{x,j}^\top\mbf x_t)\ge \wh{\Var}(\bm\omega^\top\mbf x_t)$
%$\Var(\sum_{i=1}^n\wh{w}_{x,ij} x_{it}) \ge \Var(\sum_{i=1}^n \omega_i x_{it})$
for any $\bm\omega$ satisfying $\Vert \bm\omega\Vert = 1$ and $\bm\omega^\top\wh{\mbf w}_{x,j^\prime}=0$
%and $\sum_{i = 1}^n \omega_i \wh{w}_{x, ij^\prime} = 0$
for any $j^\prime \le j - 1$,
where $\wh{\Var}(\cdot)$ denotes the sample variance operator.
\end{rem}

\subsection{(Over-)estimation of the number of factors}
\label{sec:over}

In practice, the true number of factors $r$ is unknown and 
its estimation has been one of the most researched problems 
in the factor model literature (see the references in the Introduction).
Based on the conditions~\ref{eq:c3}--\ref{eq:c4}, 
a prevailing approach is to identify a `large' gap
between the successive estimated eigenvalues 
$\wh{\mu}_{x, j}, \, 1 \le j \le r_{\max}$
of the sample covariance matrix $\wh{\bm\Gamma}_x$, 
where $r_{\max}$ denotes the maximum allowable number of factors 
often required as an input parameter to the estimation procedure. 
Here we focus on two of the most popular methods.

The information criterion-based method proposed by \cite{baing02} estimates $r$ as
\begin{align}
\label{eq:r:bn}
\wh r = \arg\!\!\!\!\!\!\min_{1 \le q \le r_{\max}} \text{IC}(q), \mbox{ where } 
\text{IC}(q) = \log\l(\frac 1n \sum_{j = q+1}^n \wh{\mu}_{x, j}\r) + q \cdot g(n, T),
\end{align}
with a penalty function $g(n, T)$ satisfying $g(n, T) \to 0$ and 
$\{(n \wedge T) \cdot g(n, T)\} \to \infty$ as $n, T \to \infty$.
The eigenvalue ratio-based estimator by \cite{ahn2013}, returns
\begin{align}
\label{eq:r:ah}
\wh r = \arg\!\!\!\!\!\!\max_{1 \le q \le r_{\max}} \text{GR}(q), \mbox{ where } 
\text{GR}(q) = \frac{\log(1+\wh\mu_{x, q}^*)}{\log(1+\wh\mu_{x, q+1}^*)} \mbox{ with }
\wh\mu_{x, q}^* = \frac{\wh\mu_{x, q}}{\sum_{j = q+1}^n \wh\mu_{x, j}}.
\end{align}
Implicitly, the information criterion in \eqref{eq:r:bn} performs thresholding on 
the scaled sample eigenvalues $\wh\mu_{x, q}^*$ with respect to $g(n, T)$,
and selects an index $q$ among those that correspond to $\wh\mu_{x, q}^*$ surviving the thresholding.
On the other hand, the eigenvalue ratio approach in \eqref{eq:r:ah}
considers the ratio of the successive scaled eigenvalues 
without taking into account the size of the eigenvalues.
This difference frequently leads to distinct estimators from the different approaches,
not to mention that, as shown in \citet{ABC10}, the various choices of $g(n, T)$ 
often result in different factor number estimators.
Another parameter whose choice may affect the estimation result for 
both of the estimators \eqref{eq:r:bn}--\eqref{eq:r:ah} is $r_{\max}$.
%In addition, \cite{bcf2017} showed that
%large instabilities in the loadings enter the factor model as additional factors,
%which brings in further difficulties to the problem of determining the factor number.
Moreover, while \ref{eq:c3}--\ref{eq:c4} are asymptotic conditions,
the lack of an obvious eigengap in the empirical eigenvalues
poses a challenge in the estimation of $r$. 
Consequently, the estimated number of factors is highly variable as the following quantities vary:
the dimensions $n$ and $T$,
the degree of cross-sectional correlations in the idiosyncratic components, 
and the signal-to-noise ratio %between the common and idiosyncratic components 
represented by the ratio between $\Var(\chi_{it})$ and $\Var(\vep_{it})$,
see e.g., the numerical studies in
\cite{ahn2013} and \cite{trapani2018}.

For an illustration, we conduct a comparative simulation study
by applying the two estimators \eqref{eq:r:bn} 
(with $g(n, T) = (n+T)\log(n \wedge T)/(nT)$, i.e., $IC_2$ of \cite{baing02}) and \eqref{eq:r:ah}
with a generous but reasonable choice $r_{\max} = [\sqrt{n \wedge T}]$,
to datasets simulated under Model~1 as described in Section~\ref{sec:sim}.
The results are reported in Figure~\ref{fig:sim:r}.
It is apparent that the estimator \eqref{eq:r:bn} fails to return the true number of factors $r = 5$
in the presence of moderate degree of cross-sectional correlations in $\bm\vep_t$,
especially when $n$ is small.
While~\eqref{eq:r:ah} performs considerably better for this particular data generating process,
we provide in Appendix~\ref{sec:est:r} the scenarios 
where this method also fails to return the correct number of factors.
We note that the performance of the estimators does not improve with increasing sample size $T$. 
In almost all cases considered, the factor number is over-estimated, i.e., $\wh r \ge r$,
with \eqref{eq:r:ah} occasionally delivering $\wh r < r$.

\begin{figure}[htb]
\centering
\includegraphics[width=.6\linewidth]{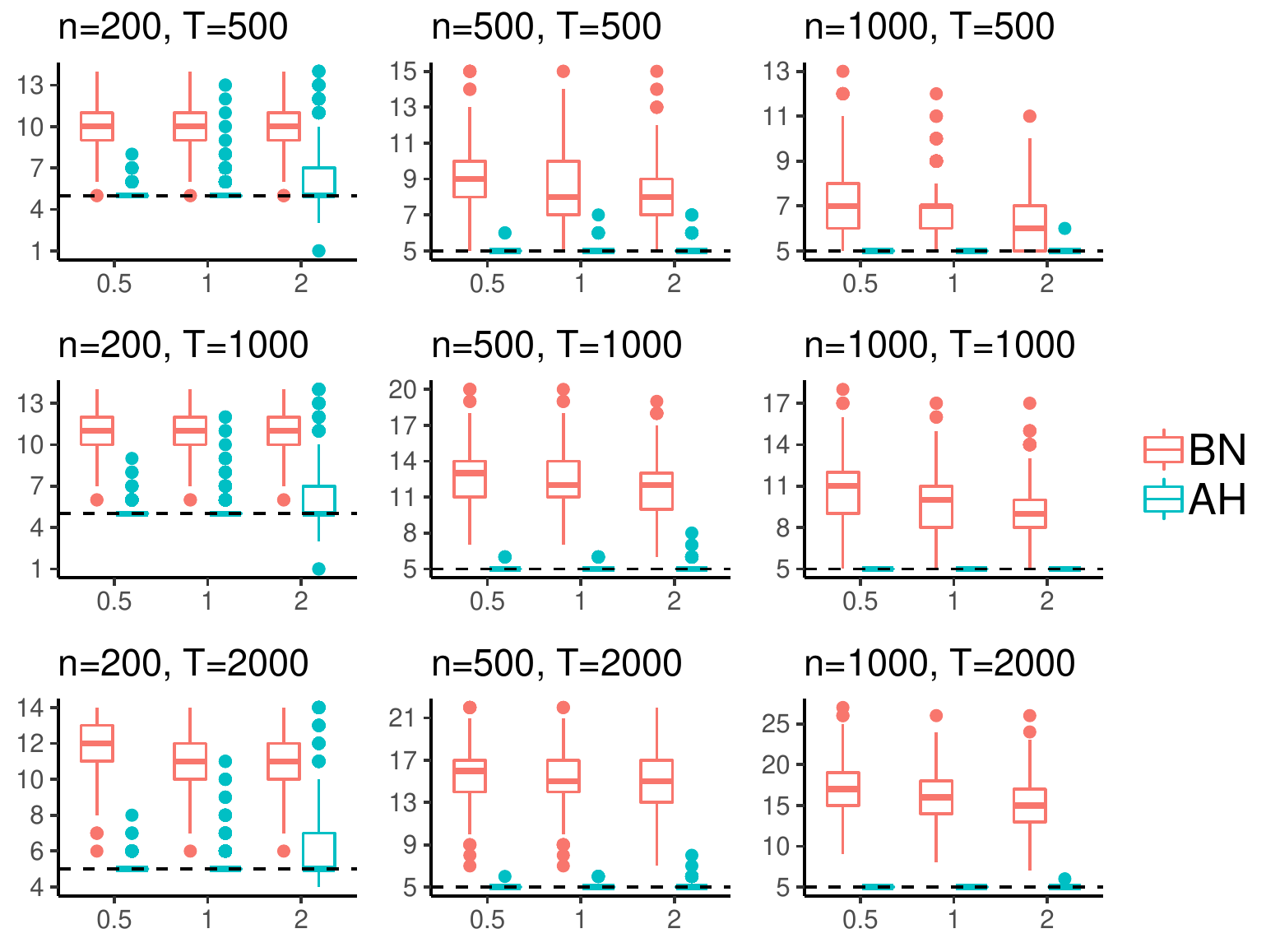}
\caption{\small Box plots of $\wh r$ returned by \eqref{eq:r:bn} (BN) and 
\eqref{eq:r:ah} (AH) over $1000$ realisations generated under Model 1
with $T \in \{500, 1000, 2000\}$ (top to bottom), 
$n \in \{200, 500, 1000\}$ (left to right) and
$\phi \in \{0.5, 1, 2\}$ (left to right within each plot, controls the noise-to-signal ratio);
horizontal broken lines indicate the true factor number $r = 5$.}
\label{fig:sim:r}
\end{figure}

Obviously, when $\wh r < r$, the PC estimator \eqref{eq:pca:est} or indeed, 
any estimator of the common component
does not capture the contribution from one or more factors, 
which inevitably incurs a non-negligible error and no remedy to this problem exists. 
To circumvent this problem, the user may be tempted to increase $\wh r$ 
based on the reasoning that 
the contribution of spurious factors beyond the $r$-th one is negligible
and thus such a strategy would be risk averse. 
However, this reasoning is incorrect as we show in the formal theoretical treatment 
of the impact of over-estimation of $r$ on factor analysis in the next section.

\subsection{PC estimator when $r$ is over-estimated}
\label{sec:blockwise}

While \citet{onatski15} shows in his Proposition~1 that 
the errors due to the over-estimation of $r$ is negligible
once aggregated over cross-sections and time,
a formal analysis of the impact of the over-estimated factor number on 
the PC estimators of {\it individual} common components
has not yet been conducted to the best of our knowledge. 

Recalling the PC estimator \eqref{eq:pca:est},
we have the following decomposition of the estimation error when $\wh r > r$, 
\begin{align}\label{eq:pca_error}
\wh{\chi}^{\pca}_{it}-\chi_{it}=\l(\sum_{j=1}^r \wh{w}_{x,ij}\wh{\mbf w}_{x,j}^\top \mbf x_t-\chi_{it}\r)
+ \sum_{j=r+1}^{\wh r} \wh{w}_{x,ij}\wh{\mbf w}_{x,j}^\top \mbf x_t.
\end{align}
The rate of convergence for the error in the oracle PC estimator
(first term in the RHS of \eqref{eq:pca_error})
is given in Proposition~\ref{thm:common}. 
Our interest lies in the theoretical treatment of the second term 
representing the over-estimation error. 
%when $r + 1 \le \wh r \le \bar r$, 
%for some fixed $\bar r$, since otherwise this term would be hard to control.
This faces two main challenges.

\bit[leftmargin=.7cm]
\item[(a)] The large eigengap between $\mu_{\chi, r}$ and $\mu_{\chi, r+1} = 0$ 
(see \ref{eq:c1}) and Davis-Kahan theorem play a key role in controlling the distance
between the empirical principal subspace spanned by the $r$ leading eigenvectors 
of $\wh{\bm\Gamma}_x$ and those of $\bm\Gamma_\chi$, as reported in \eqref{eq:davisk}. 
On the other hand, due to the lack of eigengap between the successive $\mu_{x, j}, \, j \ge r+1$ (see \ref{eq:c4}) or any other structural assumptions,
the behaviour of $\wh{\mbf w}_{x, j}$ for $j \ge r+1$ cannot be controlled 
in a meaningful way. 

\item[(b)] The eigenvectors $\wh{\mbf w}_{x, j}$, $1\le j \le \wh r$,
are obtained from the full sample covariance matrix
and thus are dependent on $\mbf x_t, \, 1 \le t \le T$,
which, together with the issue noted in (a), makes it difficult to analyse the stochastic properties of 
$\wh{\mbf w}_{x,j}^\top \mbf x_t$ for $j \ge r + 1$.
\eit

%Even with additional structural assumptions, the difficulties remain as
%$\bm\vep_t$ are not directly observable
%and it is not possible to discuss the `estimation' error of $\wh{\mbf w}_{x, j}, \, j \ge r+1$.
%The same arguments apply to the analysis of $(\wh{\mbf w}^{\capp}_{x, j})^\top\mbf x_t$.

With these difficulties, we derive the following uniform but uninformative 
upper bound on the over-estimation error:
\begin{align}
\label{eq:pca:nonbound}
\max_{1 \le i \le n} \max_{1 \le t \le T}
\l\vert \sum_{j=r+1}^{\wh r} \wh{w}_{x,ij}\wh{\mbf w}_{x,j}^\top \mbf x_t \r\vert
\le \sum_{j = r+1}^{\wh r} \max_{1 \le i \le n}|\wh{w}_{x, ij}| \Vert \wh{\mbf w}_{x, j} \Vert \cdot
\max_{1 \le t \le T} \Vert \mbf x_t \Vert
= O_p(\sqrt n \log(T)).
\end{align}

In the next section, we propose modifications of the PC estimator
which directly address the issue raised in (a)
but first, we introduce a novel `blockwise' estimation technique which,
under the time series factor model~\eqref{eq:model},
allows for bypassing the issue raised in (b) 
and hence enables a rigorous theoretical analysis of the PC estimator
when $\wh r \ge r + 1$.
For this, we split the data into blocks of size $b_T$, say
$\{\mbf x_t, \, t \in I_\ell\}$ 
for $I_\ell := \{(\ell - 1)b_T + 1, \ldots, \min(\ell b_T, T)\}$, 
$\ell = 1, \ldots, L_T := \lceil T/b_T \rceil$.
Also, denote by 
$\bar{I}_\ell := \{1, \ldots, T\} \setminus \bigcup_{m \in \{\ell, \ell \pm 1\}} I_\ell$,
i.e., the set of indices that do not belong to $I_\ell$ or its adjacent blocks,
and by $\wh{\mbf w}_{x, j}^{(\ell)}$
the $j$-th leading eigenvector of 
$\wh{\bm\Gamma}_x^{(\ell)} = \vert \bar{I}_\ell \vert^{-1} 
\sum_{t \in \bar{I}_\ell} \mbf x_t\mbf x_t^\top$,
i.e., the sample covariance matrix constructed by {\it omitting}
the $\ell$-th and its adjacent blocks.
Then, we obtain the blockwise PC estimator of $\chi_{it}$ as
\begin{align}
\label{eq:ss:pca:est}
\wh\chi^{\bpca}_{it} = 
\sum_{j = 1}^{\wh r} \wh w^{(\ell)}_{x, ij}(\wh{\mbf w}^{(\ell)}_{x, j})^\top\mbf x_t
\quad \text{for} \quad t \in I_\ell, \quad 1 \le \ell \le  L_T.
\end{align}
In other words, the common components are estimated in a blockwise manner
as projections of $\mbf x_t$ onto the principal subspace of the subsample
obtained from omitting the current block as well as its immediate neighbours.
We select the block size $b_T$ to balance
between avoiding the asymptotic loss in efficiency
by having $|\bar{I}_\ell| \ge T - 3b_T$ as large as possible,
and ensuring that
the dependence between $\wh{\mbf w}^{(\ell)}_{x, j}$ and $\mbf x_t, \, t \in I_\ell$
is sufficiently weak under the strong mixing condition in Assumption~\ref{assum:dep}~(ii),
hence permitting the rigorous theoretical treatment of 
$(\wh{\mbf w}^{(\ell)}_{x, j})^\top\mbf x_t$ for $j \ge r + 1$.
%In what follows, we split the data into four blocks by setting $b = [T/4]$
%and consider the pairs $(\ell, m) \in \mc L = \{(1, 3), (3, 1), (2, 4), (4, 2)\}$
%for the blockwise PC estimator.

%As noted in (a), without any further structural assumptions, 
%we cannot study the asymptotic behaviour of
%$\wh{\mbf w}^{(\ell)}_{x, j}, \, j \ge r+1$,
%%whether they estimate the leading eigenvectors of $\bm\Gamma_\vep$ or not,
%nor their `consistency' 
%in estimating the leading eigenvectors of $\bm\Gamma_\vep$,
%is desired in factor analysis.
%Therefore, we investigate the blockwise PC estimator 
%$\wh{\chi}^{\bpca}_{it}$ when the factor number is over-estimated
%in two different cases defined according to
%the behaviour of $\wh{\mbf w}^{(\ell)}_{x, j}, \, j \ge r+1$. 

\begin{prop}
\label{prop:pca} 
\textit{Let Assumptions~\ref{assum:id}--\ref{assum:dep} hold
and assume $r + 1 \le \wh r \le \bar r$ for some fixed $\bar r$.
Additionally, assume that $\mbf f_t$ and $\bm\vep_t$ are weakly stationary.
Suppose 
\begin{align}\label{eq:prop:pca}
\max_{1 \le i \le n}\, \max_{r + 1 \le j \le \wh r} |\wh w^{(\ell)}_{x, ij}| = O_p(n^{-\alpha/2}),
\end{align}
for some $1\le \ell \le L_T$ and $\alpha \in [0, 1]$,
and let $b_T = \log^{1/\beta + \delta} T$ for $\beta$ in Assumption~\ref{assum:dep} 
and some fixed $\delta > 0$. Then,
\begin{align}
\label{eq:prop:pca:res}
\max_{1 \le i \le n} \max_{1 \le t \le T}
|\wh\chi^{\bpca}_{it} - \chi_{it}| = 
O_p\l[n^{(1 - \alpha)/2} \l(\sqrt{\frac{\log(n)}{T}} 
\vee \frac{1}{\sqrt n}\r)\log(T) \r].
\end{align}
}
\end{prop}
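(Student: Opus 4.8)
The plan is to mirror the error decomposition in \eqref{eq:pca_error} for the blockwise estimator. Fixing $t \in I_\ell$, I would write
\begin{align}\nn
\wh\chi^{\bpca}_{it} - \chi_{it} = \l(\sum_{j=1}^r \wh w^{(\ell)}_{x, ij}(\wh{\mbf w}^{(\ell)}_{x, j})^\top\mbf x_t - \chi_{it}\r) + \sum_{j=r+1}^{\wh r} \wh w^{(\ell)}_{x, ij}(\wh{\mbf w}^{(\ell)}_{x, j})^\top\mbf x_t,
\end{align}
treating the oracle part and the over-estimation part separately and taking the maximum over $1 \le i \le n$, $1 \le t \le T$ (hence over $1 \le \ell \le L_T$) only at the end. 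The first step is to establish the blockwise analogues of \eqref{eq:davisk}--\eqref{eq:bounded:ew}. Since the buffer $\bar I_\ell$ discards at most $3 b_T$ indices and $b_T = \log^{1/\beta+\delta}T = o(T)$, we have $\vert\bar I_\ell\vert \ge T - 3b_T \asymp T$, so under the added weak stationarity of $\mbf f_t$ and $\bm\vep_t$ the subsample covariance $\wh{\bm\Gamma}^{(\ell)}_x$ concentrates around the same $\bm\Gamma_x$ at the rate of Lemma~\ref{lem:cov}, uniformly over the $L_T \asymp T/b_T$ blocks (the extra $\log(L_T) = O(\log T)$ from the union bound being absorbed). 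Consequently \eqref{eq:davisk} and \eqref{eq:evec:consist} hold with $\wh{\mbf W}^{(\ell)}_x$ in place of $\wh{\mbf W}_x$, so the oracle part obeys the rate of Proposition~\ref{thm:common}, $O_p\{(\sqrt{\log(n)/T} \vee n^{-1/2})\log(T)\}$, which is dominated by the RHS of \eqref{eq:prop:pca:res} because $n^{(1-\alpha)/2} \ge 1$.

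The crux is the over-estimation term, which has a fixed number $\wh r - r \le \bar r - r$ of summands, so it suffices to bound a single summand for $r+1 \le j \le \wh r$ and invoke \eqref{eq:prop:pca} for the factor $\vert\wh w^{(\ell)}_{x, ij}\vert = O_p(n^{-\alpha/2})$. Splitting $\mbf x_t = \bm\chi_t + \bm\vep_t$, I first control the common part via near-orthogonality: since $\wh{\mbf w}^{(\ell)}_{x, j}$ is orthogonal to $\wh{\mbf W}^{(\ell)}_x$ for $j \ge r+1$, the blockwise \eqref{eq:davisk} gives $\Vert(\wh{\mbf w}^{(\ell)}_{x, j})^\top \mbf W_\chi\Vert = \Vert(\wh{\mbf w}^{(\ell)}_{x, j})^\top(\mbf W_\chi - \wh{\mbf W}^{(\ell)}_x\mbf S)\Vert = O_p(\sqrt{\log(n)/T} \vee n^{-1})$. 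Combining with $\bm\chi_t = \mbf W_\chi\mbf W_\chi^\top\bm\chi_t$ and $\Vert\mbf W_\chi^\top\bm\chi_t\Vert \le \Vert\bm\Lambda\Vert\,\Vert\mbf f_t\Vert = O(\sqrt n)\cdot O_p(\log T)$ (the latter from the sub-exponential tails of $\mbf f_t$ in Assumption~\ref{assum:tail}~(i) and a union bound over $t$) yields $\max_t\vert(\wh{\mbf w}^{(\ell)}_{x, j})^\top\bm\chi_t\vert = O_p\{(\sqrt{n\log(n)/T} \vee n^{-1/2})\log(T)\}$. Multiplying by $n^{-\alpha/2}$ reproduces the dominant piece $n^{(1-\alpha)/2}(\sqrt{\log(n)/T}\vee n^{-1/2})\log(T)$ in \eqref{eq:prop:pca:res}.

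The idiosyncratic part $(\wh{\mbf w}^{(\ell)}_{x, j})^\top\bm\vep_t$ for $t \in I_\ell$ is where the blockwise device is indispensable, and I expect it to be the main obstacle. Because $\wh{\mbf w}^{(\ell)}_{x, j}$ is a unit vector measurable with respect to $\{\mbf x_s:\, s \in \bar I_\ell\}$ and $\bar I_\ell$ is separated from $I_\ell$ by at least one full block on each side, the strong mixing in Assumption~\ref{assum:dep} with gap $b_T$ gives a dependence of order $\alpha(b_T) \le \exp(-c_\alpha b_T^\beta) = \exp(-c_\alpha\log^{1+\beta\delta}T)$, which is $o(T^{-\gamma})$ for every $\gamma > 0$. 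I would make this precise through a coupling (or conditioning) argument that replaces $\wh{\mbf w}^{(\ell)}_{x, j}$ with an independent copy up to this negligible error; conditionally on the independent unit vector, Assumption~\ref{assum:tail}~(ii) makes $(\wh{\mbf w}^{(\ell)}_{x, j})^\top\bm\vep_t$ sub-exponential with norm at most $B_\vep$, so a union bound over $t \le T$ gives $\max_t\vert(\wh{\mbf w}^{(\ell)}_{x, j})^\top\bm\vep_t\vert = O_p(\log T)$, contributing $O_p(n^{-\alpha/2}\log T)$ after scaling, which matches the subordinate piece $n^{-\alpha/2}\log(T)$ in \eqref{eq:prop:pca:res}.

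Collecting the oracle, common and idiosyncratic bounds and taking the maximum over the finitely many spurious indices $r+1 \le j \le \wh r$ and over the $L_T$ blocks completes the argument. The delicate point throughout is ensuring that the mixing/coupling error together with the union-bound factors over both $t$ and $\ell$ remain of strictly smaller order than the stated rate, which is exactly what the choice $b_T = \log^{1/\beta+\delta}T$ secures: it is large enough for $\alpha(b_T)$ to be super-polynomially small, yet small enough that the efficiency loss $\vert\bar I_\ell\vert \ge T - 3b_T \asymp T$ is asymptotically negligible.
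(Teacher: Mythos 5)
Your proposal is correct and takes essentially the same route as the paper's proof: the same three-term decomposition into the oracle part, the spurious eigenvectors acting on $\bm\chi_t$, and the spurious eigenvectors acting on $\bm\vep_t$, with the first two handled by the blockwise uniform Davis--Kahan bounds (the paper's Lemma~\ref{lem:block:cov}) and your orthogonality bound on $(\wh{\mbf w}^{(\ell)}_{x,j})^\top\mbf W_\chi$ being exactly the paper's bound on $(\wh{\mbf W}^{(\ell)}_{x,(r+1):k})^\top\bm\Lambda$ via $\bm\Lambda = \mbf W_\chi\mbf M_\chi^{1/2}$ (up to an orthonormal factor). Your treatment of the idiosyncratic term is precisely the content of the paper's Lemma~\ref{lem:block:vep}, which formalises your ``coupling (or conditioning)'' step by conditioning on $\mc F^{(\ell)}$ and invoking a mixing covariance inequality, with the super-polynomial decay of $\alpha(b_T)$ under $b_T = \log^{1/\beta+\delta}T$ playing exactly the role you identify.
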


The proof of Proposition~\ref{prop:pca} is provided in Appendix~\ref{sec:pf:prop:pca}. 
Condition \eqref{eq:prop:pca} is very general and its motivation is as follows.
Writing
$x_{it} = \sum_{j = 1}^{n \wedge T} \wh w_{x, ij} \wh{\mbf w}_{x, j}^\top\mbf x_t$,
we have
\begin{align}
\label{eq:var}
\frac{1}{T}\sum_{t = 1}^T x_{it}^2 = \sum_{j=1}^{n \wedge T} \wh{w}_{x, ij}^2 \wh\mu_{x, j} < \infty \quad
\text{a.s. for all } 1 \le i \le n,
\end{align}
which implies that $\max_{1 \le i \le n} |\wh{w}_{x, ij}| = O(\wh\mu_{x, j}^{-1/2})$.
In addition, the rate of convergence of the sample covariance matrix
$n^{-1}\Vert \wh{\bm\Gamma}_x - \bm\Gamma_x \Vert = O_p(\sqrt{\log(n)/T})$
(see Lemma~\ref{lem:cov}) and \ref{eq:c4}
yields $\wh\mu_{x, j} = O_p(n\sqrt{\log(n)/T}) = o_p(n)$ for $j \ge r + 1$.
These arguments hold for blockwise estimators as well, and
indicate that there may be (spuriously) large coordinates 
in the empirical eigenvectors $\wh{\mbf w}_{x, j}^{(\ell)}, \, j \ge r + 1$
that fall in the regime of $\alpha < 1$. 
In other words, \eqref{eq:prop:pca} is merely a consequence of 
the boundedness of $\mu_{x, j}, \, j \ge r + 1$
without any further structural assumptions on the model \eqref{eq:model}.

It has been shown that for a random matrix $\mbf M \in \R^{n \times T}$ with independent entries,
the eigenvectors of $T^{-1} \mbf M^\top\mbf M$
are `delocalised' in probability with the bound $1/\sqrt{n}$ up to a logarithmic factor
(see Theorem~B.3 of \cite{vu2015} and a survey given in \cite{o2016}).
In view of this, when $\mbf x_t \sim_{\iid} (\mbf 0, \bm\Gamma_x)$ and 
follows an {\it exact} factor model with $\bm\Gamma_\vep = \mbf I_n$,
the condition~\eqref{eq:prop:pca} is met with $\alpha = 1$ up to a logarithmic factor, and
the consistency of the PC estimator derived in Theorem~\ref{thm:common} carries over even with $\wh r > r$.
However, under the approximate time series factor model adopted in this paper, 
there is no such theoretical guarantee to the best of our knowledge.
%Proposition~\ref{prop:pca} establishes the sub-optimal behaviour
%of the blockwise PC estimator with $\wh r  > r$,
%which in turn is indicative of the sub-optimality of the classic PC estimator obtained from the whole sample. 
In Section~\ref{sec:sim}, we verify that, 
under a variety of data generating models, 
the non-leading empirical eigenvectors
indeed exhibit `sparsity' with few very large coordinates,
thus corresponding to the regime $\alpha \simeq 0$.

The following Examples~\ref{ex:rev:one}--\ref{ex:rev:two}
provide the lower bounds complementing 
upper bounds in~\eqref{eq:prop:pca} and~\eqref{eq:prop:pca:res}
for a particular example where $\bm\Gamma_\vep$ follows a sparse spiked covariance model.
Together, Proposition~\ref{prop:pca} and Examples~\ref{ex:rev:one}--\ref{ex:rev:two}
are indicative of the potential pitfalls stemming from the over-estimation of $r$
for the PC estimator of the common component.

\begin{ex}[Lower bound on $\max_{1 \le i \le n} \max_{r + 1 \le j \le \wh r} \vert \wh{w}_{x, ij} \vert$]
\label{ex:rev:one}
We assume that $\bm\Gamma_\vep = \Delta_n \mbf v \mbf v^\top + \sigma^2 \mbf I_n$
with $\Vert \mbf v \Vert_0 \asymp n^\alpha$ for some $\alpha \in [0, 1)$ and $\mbf v^\top\mbf v=1$.
Also, we suppose $n \asymp T^\kappa$ for some $\kappa > 0$ (see Assumption~\ref{assum:nt}).
This leads to
\begin{align}
\label{ex:rev:one:model}
\bm\Gamma_x = \mbf W_\chi \mbf M_\chi \mbf W_\chi^\top + \Delta_n \mbf v \mbf v^\top + \sigma^2 \mbf I_n,
\end{align}
where $\mbf W_\chi$ is the $n\times r$ matrix of normalised eigenvectors and 
$\mbf M_\chi$ the $r \times r$ diagonal matrix of eigenvalues of $\bm\Gamma_\chi$.
Further, we assume that $\Delta_n \asymp n^\nu$ for some 
$\max(0, 1 - 1/(2\kappa) + \alpha/2) < \nu < 1$, and 
let $\mbf v^\top \mbf w_{\chi, j} = 0$ for all $j = 1, \ldots, r$. 
In this model, the idiosyncratic component has a one-factor structure with a weakly pervasive factor
where its `strength' $\Delta_n$ increases with $\alpha$.
We may interpret this as the weak factor being prevalent in all the elements 
belonging to a group defined by the support of $\mbf v$.
In time series setting, such a structure has also been considered by 
\cite{demol2008}, \cite{lam2011} and \cite{onatski12}, among others. 

When $\nu > 0$, the model~\eqref{ex:rev:one:model} does not fulfil Assumption~\ref{assum:id}~(iv).
However, even when $\nu \in (0, 1)$, the oracle PC estimator obtained with $\wh r = r$ 
can be shown to be consistent by adapting the proof of Proposition~\ref{thm:common} :
From $\Vert \wh{\bm\Gamma}_x - \bm\Gamma_\chi \Vert \le 
\Vert \wh{\bm\Gamma}_x - \bm\Gamma_x \Vert + \Vert \bm\Gamma_\vep \Vert$, we yield
\begin{align}
\Vert \wh{\mbf W}_x - \mbf W_\chi \mbf S \Vert & = O_p\l(\sqrt{\frac{\log(n)}{T}} \vee \frac{1}{n^{1 - \nu}}\r),
\quad \text{and} \nn \\
\max_{1 \le i \le n} \max_{1 \le t \le T} \vert \wh{\chi}^{\pca}_{it} - \chi_{it} \vert
& = O_p\l\{\Big(\sqrt{\frac{\log(n)}{T}} \vee \frac{1}{n^{(1 - \nu)/2}}\Big) \log(T) \r\}.
\label{eq:ex:rev:one}
\end{align}

Under model~\eqref{ex:rev:one:model}, for large enough $n$, we have
\begin{align*}
\mbf w_{x, j} = \l\{\begin{array}{ll}
\mbf w_{\chi, j} & \text{for } 1 \le j \le r, \\
\mbf v & \text{for } j = r + 1,
\end{array}\r.
\quad \text{with} \quad
\mu_{x, j} = \l\{\begin{array}{ll}
\mu_{\chi, j} + \sigma^2 & \text{for } 1 \le j \le r, \\
\Delta_n + \sigma^2 & \text{for } j = r + 1, \\
\sigma^2 & \text{for } r + 2 \le j \le n.
\end{array}\r.
\end{align*}

As in~\eqref{eq:davisk}, we apply Corollary~1 of \cite{yu15} and yield
\begin{align}
\label{eq:ex:rev:two}
\Vert \wh{\mbf w}_{x, r + 1} - s \mbf v \Vert \le \frac{2^{3/2} \Vert \wh{\bm\Gamma}_x - \bm\Gamma_x \Vert}
{\min(\mu_{x, r} - \mu_{x, r + 1}, \mu_{x, r + 1} - \mu_{x, r + 2})}
= O_p\l(n^{1-\nu}\sqrt{\frac{\log(n)}{T}}\r) = o_p(n^{-\alpha/2})
\end{align}
for some $s \in \{-1, 1\}$,
i.e., $\wh{\mbf w}_{x, r + 1}$ achieves consistency in estimating $\mbf v$
albeit at a slower convergence rate than that reported in~\eqref{eq:davisk}.
Also, the sparsity of $\mbf v$ leads to $n^{-\alpha/2} (\max_{1 \le i \le n} \vert v_i \vert)^{-1} = O(1)$,
and thus from \eqref{eq:ex:rev:two}, for some fixed $C_0 > 0$,
\begin{align}
\max_{1 \le i \le n} \vert \wh{w}_{x, i, r + 1} \vert =
\max_{1 \le i \le n} \vert v_i \vert + O_p\l(n^{1 - \nu} \sqrt{\frac{\log(n)}{T}}\r)
\ge C_0 n^{-\alpha/2}.
\label{eq:ex:rev:three}
\end{align}
\end{ex}

\begin{ex}[Lower bound on the estimation error in~\eqref{eq:prop:pca:res}]
\label{ex:rev:two}
Continuing with the model~\eqref{ex:rev:one:model} imposed on $\bm\Gamma_x$,
we further assume that $\mbf x_t \sim_{\iid} \mc N_n(\mbf 0, \bm\Gamma_x)$
and $n \asymp T$ for simplicity (i.e., $\kappa = 1$)
such that $\nu \in ((1 + \alpha)/2, 1)$.
Under independence, we simplify the blockwise estimator as
\begin{align*}
\wh\chi^{\bpca}_{it} = \sum_{j = 1}^{\wh r} \wh w^{(\ell)}_{x, ij} (\wh{\mbf w}^{(\ell)}_{x, j})^\top\mbf x_t
\quad \text{for} \quad t \in I_\ell, \quad \ell = 0, 1,
\end{align*}
with $I_0 = \{2u, \, 1 \le u \le \lfloor T/2 \rfloor\} = \bar{I}_1$ and
$I_1 = \{ 2u + 1, \, 0 \le u \le \lfloor T/2 \rfloor\} = \bar{I}_0$.
Suppose that $\wh r = r + 1$. Then there exist fixed $C_k > 0, \, 1 \le k \le 4$ such that
\begin{align}
& \max_{1 \le i \le n} \max_{1 \le t \le T} \vert \wh{\chi}^{\bpca}_{it} - \chi_{it} \vert
\ge \max_{\ell = 0, 1} \max_{t \in I_\ell} \l\vert \wh{w}^{(\ell)}_{x, 1, r + 1} \r\vert \; 
\l\vert (\wh{\mbf w}^{(\ell)}_{x, r + 1})^\top \mbf x_t \r\vert 
\nn \\
& \qquad \qquad \qquad \qquad \qquad \qquad  - \max_{1 \le i \le n} \max_{\ell = 0, 1} \max_{t \in I_\ell}
\l\vert \sum_{j = 1}^r \wh{w}^{(\ell)}_{x,ij}(\wh{\mbf w}^{(\ell)}_{x, j})^\top \mbf x_t - \chi_{it} \r\vert
\nn \\
& \ge C_1 n^{-\alpha/2} \max_{\ell = 0, 1} \max_{t \in I_\ell} 
\l\vert (\wh{\mbf w}^{(\ell)}_{x, r + 1})^\top \mbf x_t \r\vert
- C_2 n^{-(1 -\nu)/2} \sqrt{\log(T)}
\nn \\
& \ge  C_3 n^{-\alpha/2} \cdot \sqrt{\Delta_n \log(T)} - C_2 n^{\nu/2 - 1/2} \sqrt{\log(T)}
\ge C_4 n^{(\nu-\alpha)/2} \sqrt{\log(T)} \nn
\end{align}
where all the inequalities except for the first are understood as holding with probability tending to one.
The second inequality follows from~\eqref{eq:ex:rev:one}, \eqref{eq:ex:rev:three} and that $n \asymp T$,
with the rate $\sqrt{\log(T)}$ due to the stronger Gaussian assumption we impose here
in place of the sub-exponential tail in Assumption~\ref{assum:tail}~(ii).
The penultimate inequality holds
by Theorem~3.4 of \cite{hartigan2014} since for each $\ell = 0, 1$, we have
$(\wh{\mbf w}^{(\ell)}_{x, r + 1})^\top \mbf x_t \sim_{\iid} \mc N(0, \wt\sigma^2)$ for $t \in I_\ell$ with
\begin{align*}
\wt\sigma^2 \ge \Delta_n \l\{(\wh{\mbf w}^{(\ell)}_{x, r + 1})^\top \mbf v\r\}^2 + \sigma^2
\ge \Delta_n \l\{1 + o_p(n^{-\alpha/2}) \r\} + \sigma^2
\end{align*}
by applying Corollary~1 of \cite{yu15} as in~\eqref{eq:ex:rev:two}.
For comparison, we derive the upper bound on the estimation error in this setting
as in Proposition~\ref{prop:pca}.
From~\eqref{eq:ex:rev:three}, we have
$\max_{\ell = 0, 1} \max_{1 \le i \le n} \vert \wh{w}^{(\ell)}_{x, i, r + 1} \vert \asymp n^{-\alpha/2}$
and by adopting the arguments analogous to those used in the proof of Proposition~\ref{prop:pca}, 
it is readily seen that 
\begin{align*}
\max_{1 \le i \le n} \max_{1 \le t \le T} \vert \wh{\chi}^{\bpca}_{it} - \chi_{it} \vert
= O_p(n^{(\nu - \alpha)/2}\sqrt{\log(T)}).
\end{align*}
%\footnote{HC: breaking it down, you get $O_p((n^{(\nu - 1)/2} \vee \sqrt{\log(n)} \vee n^{1/2 - \nu} \vee n^{(\nu - \alpha)/2}) \sqrt{\log(T)})$ \matt{maybe we can put also this in the text as intermediate step.}}
\end{ex}

Examples~\ref{ex:rev:one}--\ref{ex:rev:two} demonstrate that 
in the presence of weak factors,
the PC estimator can incur non-negligible error increasing with $n$
due to the `localised' behaviour of $\wh{\mbf w}_{x, j}, \, j \ge r + 1$
when the factor number is over-estimated.
In practice, such situations can emerge when the idiosyncratic component exhibits a group structure 
that induces the presence of weak factors. 
\cite{chudik2011} discuss the plausibility of {\it semi-weak} and {\it semi-strong} factors
corresponding to $\Vert \bm\Gamma_\vep \Vert \asymp n^\nu$ with $\nu \in (0, 1)$ in real datasets.
In Section~\ref{sec:real}, the daily returns of the stocks comprising the Standard \& Poor's 100 index 
are analysed where Figure~\ref{fig:sp100:idio} shows a clear group structure
in the idiosyncratic component which is in line with the model~\eqref{ex:rev:one:model}. 
We also refer to \cite{BH16} where the network structure in the idiosyncratic component 
of the similar dataset has been analysed in detail.

\begin{rem}[Large covariance matrix estimation]
\label{rem:poet}
\cite{fan2011} and \cite{fan13} investigate the problem of large covariance matrix estimation
with an estimator comprised of a factor-driven covariance matrix of the common component
and a thresholded idiosyncratic covariance matrix under the assumption of sparsity on $\bm\Gamma_\vep$
(see~\eqref{eq:poet});
in the former, the factors are assumed to be observable and 
the latter extends the estimator to the case of unobservable factors.
For the consistency of the thresholded idiosyncratic covariance matrix,
Assumption~2.2 of \cite{fan2011} requires $\wh\vep_{it}$, an estimator of $\vep_{it}$, to satisfy
\begin{align}
\label{eq:fan:cond}
\max_{1 \le i \le n} \frac{1}{T} \sum_{t = 1}^T |\wh\vep_{it} - \vep_{it}|^2 = o_p(1) \quad \text{and} \quad
\max_{1 \le i \le n} \max_{1 \le t \le T} |\wh\vep_{it} - \vep_{it}| = O_p(1),
\end{align}
and Lemma~C.11 of \cite{fan13} verifies the conditions
for the PC estimator combined with the estimator of $r$ proposed in \cite{baing02}.
However, Proposition~\ref{thm:common} indicates that 
the second condition in \eqref{eq:fan:cond} may be violated when $\wh r  > r$.
Moreover, our numerical studies in Section~\ref{sec:sim} demonstrate
that neither of the conditions in \eqref{eq:fan:cond} are fulfilled by the PC estimator 
when the idiosyncratic components are moderately correlated,
which in turn implies that the covariance matrix estimator of \citet{fan13}
will suffer from relying on the accurate estimation of $r$.
We further explore this point by applying our methodology to 
estimating the covariance of a panel of financial time series in Section~\ref{sec:real}.
\end{rem}

%%\begin{rem}[Weak factors] Condition \eqref{eq:prop:pca} is expected to hold 
%if we assumed the presence of `weak' factors \citep{onatski12, lam2011},
%i.e., some $\mu_{x, j}$, $j\ge r+1$, are of order $n^{\nu}$, $\nu \in (0, 1)$. 
%However, we stress that the empirical evidence in Section~\ref{sec:sim} 
%supports the plausibility of the regime $\nu \simeq 0$
%even when $\bm\vep_t$ has a spiked covariance matrix
%that mimics the presence of weak factors in finite sample. 
%For this reason, estimating the weakness of the factors is a hard problem
%% since it may result in estimators prone to a non-negligible estimation error of the non-leading eigenvectors. 
%and we do not pursue this topic further here. 
%\end{rem}

\section{Modification of the PC estimator}
\label{sec:mod}

\subsection{Scaled PC estimator}
\label{sec:scale}

Recall that due to the presence of an eigengap \ref{eq:c3}--\ref{eq:c4}
and the consistency of the $r$ leading eigenvectors of $\wh{\bm\Gamma}_x$
(see \eqref{eq:davisk}),
we obtain the uniform bound of $O_p(n^{-1/2})$ on $|\wh{w}_{x, ij}|, \, j \le r$,
see \eqref{eq:bounded:ew}.
In other words, with large probability, there exists some fixed $c_w > 0$
such that $|\wh w_{x, ij}|, \, j \le r$ is bounded by $c_w n^{-1/2}$
uniformly in $1 \le i \le n$ and $1 \le j \le r$.
Motivated by these observations, we propose the scaled PC estimator
\begin{align}
& \wh\chi^{\sca}_{it} = \sum_{j=1}^{\wh r} \wh{w}^{\sca}_{x, ij} (\wh{\mbf w}^{\sca}_{x, j})^\top \mbf x_t,
\quad \text{where}
\label{eq:sca:est}
\\
& \wh{\mbf w}^{\sca}_{x, j} = \nu_j^{-1} \; \wh{\mbf w}_{x, j} \quad \text{with} \quad
\nu_j = \max\l\{1, \frac{\sqrt{n}}{c_w} \, \max_{1 \le i \le n} |\wh w_{x, ij}|\r\}.
\label{eq:w:scale}
\end{align}
% The estimated loadings are given by $\wh\lambda^{\sca}_{ij} = \sqrt{n}\wh{w}^{\sca}_{x, ij}$ 
% and the factors by $\wh{f}^{\sca}_{jt} = (\wh{\mbf w}^{\sca}_{x, j})^\top \mbf x_t/\sqrt n$.

We can choose $c_w$ such that with large probability,
the proposed scaling does not alter the contribution from 
$\wh{\mbf w}_{x, j}, \, j \le r$ to $\wh\chi^{\sca}_{it}$ by yielding $\nu_j = 1$ for $j \le r$,
even though it is applied \textit{without} knowing $r$.
On the contrary, for $\wh{\mbf w}_{x, j}, \, j \ge r+1$,
any large contribution from the spurious factors is scaled down by the factor of $\nu_j$.

\begin{rem}[Choice of $c_w$]
\label{rem:cw}
In our numerical analysis, we observe that the performance of
the scaled PC estimator does not vary much with respect to reasonably chosen $c_w$, 
see Figure~\ref{fig:cv:err} (the details of the experiment is deferred to Appendix~\ref{sec:cw}).
Unlike e.g., the methods based on singular value thresholding,
our scaled PC estimator does not `kill' any factors including the spurious ones,
and thus avoids the hazard of under-estimating the contribution of the factors completely 
provided that $\wh r \ge r$. 
We find the choice of $c_w = 1.1 \times \sqrt n \; \max_{1 \le i \le n} |\wh w_{x, i1}|$ works reasonably
well and adopt it throughout the numerical studies,
which is shown to work well for a range of models in Section~\ref{sec:sim}.
\end{rem}

\begin{figure}[htb]
\centering
\includegraphics[width=1\textwidth]{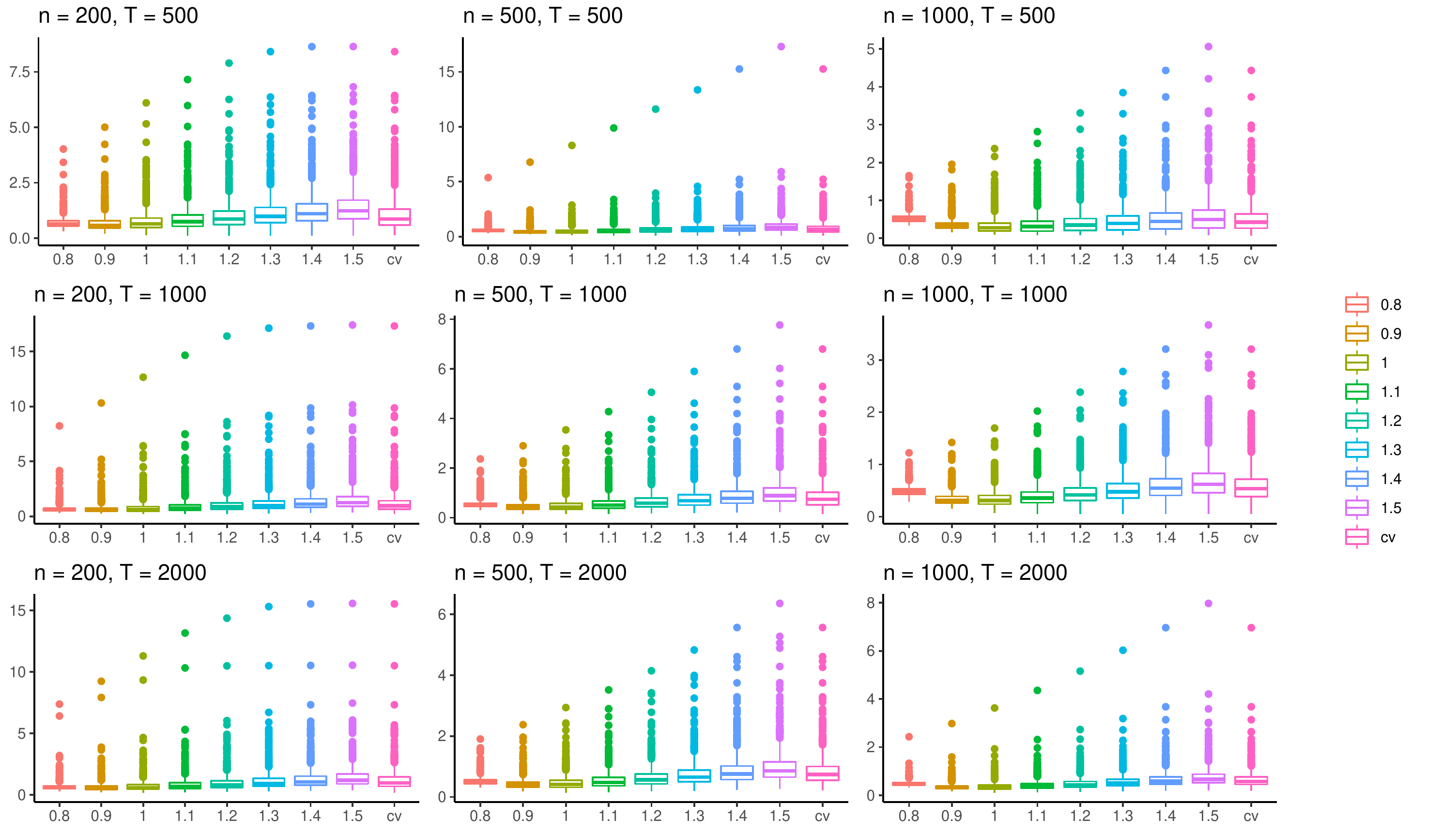}
\caption{\small Box plots of the estimation errors of $\wh{\chi}^{\bsca}_{it}(c_w)$
averaged over $1000$ realisations generated under Model~1 of Section~\ref{sec:sim:model} with $\phi = 1$,
$n \in \{200, 500, 1000\}$ (left to right) and $T \in \{500, 1000, 2000\}$ (top to bottom),
for $c_w \in \{0.8, \ldots, 1.5\} \times \sqrt n \; \max_{1 \le i \le n} |\wh w_{x, i1}|$ 
and the cross-validated choice (`CV')
(left to right within each plot).}
\label{fig:cv:err}
\end{figure}

Scaling preserves the orthogonality among $\wh{\mbf w}^{\sca}_{x, j}, \, j \le \wh r$,
which facilitates the theoretical treatment of the scaled PC estimator.
Following the same reasoning as in Section~\ref{sec:blockwise},
we continue the discussion on the theoretical properties of the scaled PC estimator
by considering its blockwise counterpart, which, recalling the notations from Section~\ref{sec:blockwise}, is given by
\begin{align}
\label{eq:ss:sca:est}
\wh\chi^{\bsca}_{it} =
\sum_{j = 1}^{\wh r} \wh w^{\sca, (\ell)}_{x, ij}(\wh{\mbf w}^{\sca, (\ell)}_{x, j})^\top\mbf x_t \quad \text{for} \quad t \in I_\ell, \quad 1 \le \ell \le L_T,
\end{align}
where
$\wh{\mbf w}^{\sca, (\ell)}_{x, j}$ is defined analogously as in \eqref{eq:w:scale}
with $\wh{\mbf w}^{(\ell)}_{x, j}$ in place of $\wh{\mbf w}_{x, j}$.
%The following proposition establishes the consistency of 
%the blockwise scaled PC estimator $\wh{\chi}^{\bsca}_{it}$.

\begin{prop}
\label{prop:scaled}
\textit{Let Assumptions~\ref{assum:id}--\ref{assum:dep} hold
and suppose $r + 1 \le \wh r \le \bar r$ for some fixed~$\bar r$.
Additionally, assume that $\mbf f_t$ and $\bm\vep_t$ are weakly stationary.
Then, there exists a fixed constant $c_w$
satisfying $c_w \ge \sqrt{n} \times \max_{1 \le \ell \le L_T} 
\max_{1 \le i \le n} \max_{1 \le j \le r} \vert \wh{w}_{x, ij} \vert$
such that
\begin{align}
\max_{1 \le i \le n} \; \max_{1 \le t \le T} \vert \wh\chi_{it}^{\bsca} - \chi_{it} \vert 
= O_p\l\{\l(\sqrt{\frac{\log(n)}{T}} \vee \frac{1}{\sqrt n}\r)\log(T) \r\}.
\label{eq:scaling:est:err}
\end{align}
}
\end{prop}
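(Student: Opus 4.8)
The plan is to split the blockwise scaled error for $t \in I_\ell$ into an oracle part and a spurious part,
\[
\wh\chi^{\bsca}_{it} - \chi_{it} = \left(\sum_{j=1}^{r} \wh w^{\sca,(\ell)}_{x, ij}(\wh{\mbf w}^{\sca,(\ell)}_{x, j})^\top \mbf x_t - \chi_{it}\right) + \sum_{j = r+1}^{\wh r} \wh w^{\sca,(\ell)}_{x, ij}(\wh{\mbf w}^{\sca,(\ell)}_{x, j})^\top \mbf x_t,
\]
and to bound each term by the target rate. The first step is to fix $c_w$: by the blockwise analogue of \eqref{eq:bounded:ew}, $\sqrt n\,\max_{\ell}\max_{i}\max_{j \le r}|\wh w^{(\ell)}_{x, ij}| = O_p(1)$, so I can choose a fixed $c_w$ that dominates this quantity with probability tending to one, which is exactly the $c_w$ asserted in the statement. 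On this event $\nu_j = 1$ for every $j \le r$ and all $\ell$, so the scaling leaves the leading $r$ eigenvectors untouched and the oracle part reduces to the \emph{blockwise} PC error $\sum_{j=1}^{r}\wh w^{(\ell)}_{x, ij}(\wh{\mbf w}^{(\ell)}_{x, j})^\top \mbf x_t - \chi_{it}$. Since $|\bar I_\ell| \ge T - 3b_T \asymp T$ and, for $j \le r$, Davis--Kahan (the blockwise counterpart of \eqref{eq:davisk}) lets me replace $\wh{\mbf w}^{(\ell)}_{x, j}$ by the deterministic $\mbf w_{\chi, j}$ up to negligible error, the proof of Proposition~\ref{thm:common} carries over to give $\max_{i,t}|\sum_{j \le r}\wh w^{(\ell)}_{x, ij}(\wh{\mbf w}^{(\ell)}_{x, j})^\top \mbf x_t - \chi_{it}| = O_p\{(\sqrt{\log(n)/T}\vee n^{-1/2})\log(T)\}$.

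For the spurious part I would exploit the defining property of the scaling. Since $\nu_j \ge 1$ gives $\Vert \wh{\mbf w}^{\sca,(\ell)}_{x, j}\Vert \le 1$ and $|\wh w^{\sca,(\ell)}_{x, ij}| = \nu_j^{-1}|\wh w^{(\ell)}_{x, ij}| \le c_w n^{-1/2}$ for \emph{every} $j$, and since $\wh r \le \bar r$ is fixed,
\[
\left|\sum_{j=r+1}^{\wh r}\wh w^{\sca,(\ell)}_{x, ij}(\wh{\mbf w}^{\sca,(\ell)}_{x, j})^\top \mbf x_t\right| \le \frac{c_w}{\sqrt n}\sum_{j=r+1}^{\wh r}\left|(\wh{\mbf w}^{(\ell)}_{x, j})^\top \mbf x_t\right| \le \frac{c_w(\bar r - r)}{\sqrt n}\max_{r < j \le \wh r}\left|(\wh{\mbf w}^{(\ell)}_{x, j})^\top \mbf x_t\right|.
\]
It therefore remains to show $\max_{r < j \le \wh r}\max_{1 \le t \le T}|(\wh{\mbf w}^{(\ell)}_{x, j})^\top \mbf x_t| = O_p\{(1 \vee \sqrt{n\log(n)/T})\log(T)\}$; multiplying by $n^{-1/2}$ then delivers exactly $O_p\{(n^{-1/2}\vee\sqrt{\log(n)/T})\log(T)\}$, since the identities $n^{-1/2}\cdot\sqrt{n\log(n)/T} = \sqrt{\log(n)/T}$ and $n^{-1/2}\cdot 1 = n^{-1/2}$ reproduce the oracle rate. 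To control this projection I would decompose $\mbf x_t = \bm\Lambda\mbf f_t + \bm\vep_t$: the idiosyncratic contribution $(\wh{\mbf w}^{(\ell)}_{x, j})^\top \bm\vep_t$ has sub-exponential norm at most $B_\vep$ by Assumption~\ref{assum:tail}(ii) (as $\Vert\wh{\mbf w}^{(\ell)}_{x, j}\Vert = 1$), while the common contribution is governed by $\Vert \bm\Lambda^\top\wh{\mbf w}^{(\ell)}_{x, j}\Vert$. Here the key observation is that, for $j > r$, $\wh{\mbf w}^{(\ell)}_{x, j}$ is orthogonal to the leading $r$ blockwise eigenvectors by construction, and these are $O_p(\sqrt{\log(n)/T}\vee n^{-1})$-close to $\mbf W_\chi$ by Davis--Kahan; since $\bm\Lambda = \mbf W_\chi\mbf W_\chi^\top\bm\Lambda$ and $\Vert\bm\Lambda\Vert = O(\sqrt n)$, this yields $\Vert\bm\Lambda^\top\wh{\mbf w}^{(\ell)}_{x, j}\Vert = O_p(\sqrt{n\log(n)/T}\vee n^{-1/2})$ and hence $\Vert(\wh{\mbf w}^{(\ell)}_{x, j})^\top\mbf x_t\Vert_{\psi_1} = O_p(1 \vee \sqrt{n\log(n)/T})$.

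The main obstacle --- and the reason the blockwise construction was introduced --- is that $\wh{\mbf w}^{(\ell)}_{x, j}$ is a nonlinear function of the whole subsample $\{\mbf x_s: s \in \bar I_\ell\}$, so one cannot directly apply a maximal inequality to the self-normalised quantity $(\wh{\mbf w}^{(\ell)}_{x, j})^\top \mbf x_t$ for $t \in I_\ell$. I would resolve this by conditioning on $\mc G_\ell := \sigma(\mbf x_s: s \in \bar I_\ell)$: since $I_\ell$ is separated from $\bar I_\ell$ by at least $b_T$ indices, Assumption~\ref{assum:dep} bounds the coupling error by $\alpha(b_T) \le \exp(-c_\alpha b_T^\beta)$, which with $b_T = \log^{1/\beta + \delta}(T)$ decays faster than any power of $T$ and is thus negligible against the union bound over the $O(T)$ triples $(\ell, j, t)$. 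Conditionally on $\mc G_\ell$ the vector $\wh{\mbf w}^{(\ell)}_{x, j}$ is fixed and, by weak stationarity, $(\wh{\mbf w}^{(\ell)}_{x, j})^\top\mbf x_t$ is sub-exponential with the norm bound derived above, so a standard sub-exponential maximal inequality over $t \in I_\ell$, followed by union bounds over the $L_T$ blocks and the finitely many indices $j$, gives the required $O_p\{(1\vee\sqrt{n\log(n)/T})\log(T)\}$ bound. Combining the oracle and spurious bounds then completes the proof.
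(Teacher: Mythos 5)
Your proof is correct and shares the paper's overall skeleton --- the same choice of a fixed $c_w$ so that, by the blockwise analogue of \eqref{eq:bounded:ew} (Lemma~\ref{lem:block:cov}~(iv)), $\nu_j = 1$ for all $j \le r$ uniformly over blocks with probability tending to one; the same reduction of the oracle part to the blockwise bound \eqref{eq:block:oracle:err}; and the same block-gap decoupling for $(\wh{\mbf w}^{(\ell)}_{x, j})^\top\bm\vep_t$, which you re-derive inline but which is precisely the content of Lemma~\ref{lem:block:vep} --- yet you treat the spurious common term by a genuinely different device. The paper exploits the orthogonality that scaling preserves to rewrite $\sum_{j>r}\wh w^{\sca,(\ell)}_{x,ij}(\wh{\mbf w}^{\sca,(\ell)}_{x,j})^\top\bm\chi_t$ as $\sum_{j>r}\wh w^{\sca,(\ell)}_{x,ij}(\wh{\mbf w}^{\sca,(\ell)}_{x,j})^\top\{\bm\chi_t - \wh{\mbf W}^{(\ell)}_{x,1:r}(\wh{\mbf W}^{(\ell)}_{x,1:r})^\top\mbf x_t\}$ and then simply recycles the oracle bound $I$ together with $|\wh w^{\sca,(\ell)}_{x,ij}| \le c_w n^{-1/2}$; you instead bound $\Vert \bm\Lambda^\top \wh{\mbf w}^{(\ell)}_{x,j}\Vert = O_p(\sqrt{n\log(n)/T} \vee n^{-1/2})$ via orthogonality of $\wh{\mbf w}^{(\ell)}_{x,j}$, $j > r$, to the leading blockwise eigenvectors plus Davis--Kahan, which is exactly the bound \eqref{eq:lem:lamb} from the proof of Proposition~\ref{prop:pca}, here instantiated with $\alpha = 1$ because scaling enforces the $c_w n^{-1/2}$ coordinate bound. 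Both routes rest on the orthogonality preserved by scaling and yield the same rate; the paper's projection-subtraction is shorter because it reuses $I$, while yours is more self-contained and makes explicit why scaling places the estimator in the $\alpha = 1$ regime of Proposition~\ref{prop:pca}. Two small points of rigor, both fixable at the paper's own level of detail: for the common contribution no conditioning is actually needed, since the pathwise bound $|(\wh{\mbf w}^{(\ell)}_{x,j})^\top\bm\Lambda\mbf f_t| \le \Vert\bm\Lambda^\top\wh{\mbf w}^{(\ell)}_{x,j}\Vert \max_{1\le t\le T}\Vert\mbf f_t\Vert$ with $\max_{1 \le t \le T}\Vert\mbf f_t\Vert = O_p(\log T)$ suffices (your statement ``$\Vert(\wh{\mbf w}^{(\ell)}_{x,j})^\top\mbf x_t\Vert_{\psi_1} = O_p(\cdot)$'' conflates a conditional Orlicz norm with a probabilistic bound on the eigenvector); and for the idiosyncratic contribution, Assumption~\ref{assum:tail}~(ii) is an unconditional statement, so the claim that the projection is sub-exponential \emph{conditionally} on $\mc G_\ell$ requires an argument --- the clean version is the paper's computation in Lemma~\ref{lem:block:vep}, which compares $\E(\bm\vep_t\bm\vep_t^\top \mid \mc F^{(\ell)})$ with its unconditional counterpart via the mixing covariance inequality across the $b_T$ gap, rather than the raw $\alpha(b_T)$ coupling you invoke.
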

The proof is provided in Appendix~\ref{sec:pf:prop:scaled}.
Compared to Propositions~\ref{thm:common} and~\ref{prop:pca},
Proposition~\ref{prop:scaled} establishes that under the same conditions,
the scaled PC estimator attains the same rate of convergence 
as the oracle PC estimator obtained with the true number of factors,
{\it without} requiring such knowledge and
{\it regardless} of the behaviour of $\wh{\mbf w}_{x, j}, \, j \ge r+1$.

\subsection{Relationship to capped PC estimator}
\label{sec:cap}

Similarly motivated by the uniform boundedness of $\vert \wh w_{x, ij} \vert$ for $j \le r$ 
(see \eqref{eq:bounded:ew}),
\cite{bcf2017} proposed the capped PC estimator of $\chi_{it}$:
\begin{align}
\label{eq:cap:est}
\wh{\chi}^{\capp}_{it} = \sum_{j = 1}^{\wh r} 
\wh{w}^{\capp}_{x, ij} (\wh{\mbf w}^{\capp}_{x, j})^\top \mbf x_t,
\end{align}
where each element of $\wh{\mbf w}^{\capp}_{x, j}$ is obtained by capping $\wh w_{x, ij}$ as
\begin{align}
\label{eq:w:cap}
\wh{w}^{\capp}_{x, ij} = \wh{w}_{x, ij} \, \mathbb I\l(|\wh{w}_{x, ij}| \le \frac{c_w}{\sqrt n}\r)
+ \mbox{sign}(\wh{w}_{x, ij}) \cdot \frac{c_w}{\sqrt n} \, \mathbb I\l(|\wh{w}_{x, ij}| > \frac{c_w}{\sqrt n}\r)
\end{align}
for some fixed $c_w > 0$.
Capping can be viewed as the projection of each $\wh{\mbf w}_{x, j}$ onto the
$\ell_\infty$-sphere of radius $c_w n^{-1/2}$.
As with scaling, asymptotically, capping does not alter the contribution from the leading $r$ 
eigenvectors of $\wh{\bm\Gamma}_x$,
while it truncates any large contribution from spurious factors when $\wh r \ge r+1$,
all {\it without} the knowledge of the true $r$. 
We generalise Theorem~2 of \cite{bcf2017} 
whereby lifting the assumption of Gaussianity imposed on $\bm\vep_t$ in the latter;
the proof can be found in Appendix~\ref{sec:pf:capping}.

\begin{prop}
\label{thm:two}
{\it Let Assumptions~\ref{assum:id}--\ref{assum:dep} hold and suppose 
$r + 1 \le \wh r \le \bar r$ for some fixed $\bar r$.
Then, there exists a fixed constant $c_w$
satisfying $c_w \ge \sqrt{n} \times \max_{1 \le i \le n} \max_{1 \le j \le r} \vert \wh{w}_{x, ij} \vert$
such that
$\max_{1 \le i \le n} \; \max_{1 \le t \le T}  \vert \wh\chi^{\capp}_{it} - \chi_{it} \vert = O_p(\log T)$.
}
\end{prop}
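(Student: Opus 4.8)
The plan is to decompose the error exactly as in \eqref{eq:pca_error}, but with the capped eigenvectors in place of the PC ones, namely
\[
\wh\chi^{\capp}_{it} - \chi_{it}
= \l(\sum_{j=1}^r \wh w^{\capp}_{x,ij}(\wh{\mbf w}^{\capp}_{x,j})^\top\mbf x_t - \chi_{it}\r)
+ \sum_{j=r+1}^{\wh r} \wh w^{\capp}_{x,ij}(\wh{\mbf w}^{\capp}_{x,j})^\top\mbf x_t,
\]
and to bound both terms, uniformly over $i$ and $t$, by $O_p(\log T)$ separately. The crucial point is that for the capped estimator we only aim at boundedness up to a $\log T$ factor rather than consistency, so both bounds follow from elementary deterministic consequences of capping combined with crude tail bounds; in particular, and in contrast to Propositions~\ref{prop:pca} and~\ref{prop:scaled}, neither the blockwise construction nor weak stationarity is needed.

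For the leading term, I would first fix $c_w$ using \eqref{eq:bounded:ew}: since $\sqrt n\,\max_{1\le i\le n}\max_{1\le j\le r}|\wh w_{x,ij}| = O_p(1)$, there is a fixed constant $c_w$ for which the event $\mc E = \{c_w \ge \sqrt n\,\max_{1\le i\le n}\max_{1\le j\le r}|\wh w_{x,ij}|\}$ has probability tending to one. On $\mc E$, every entry of the first $r$ eigenvectors satisfies $|\wh w_{x,ij}|\le c_w/\sqrt n$, so the capping in \eqref{eq:w:cap} acts as the identity and $\wh{\mbf w}^{\capp}_{x,j} = \wh{\mbf w}_{x,j}$ for $1\le j\le r$. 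Consequently, on $\mc E$ the leading term coincides exactly with the error of the PC estimator computed with $\wh r = r$, whose uniform rate $O_p\{(\sqrt{\log(n)/T}\vee 1/\sqrt n)\log(T)\} = O_p(\log T)$ is supplied by Proposition~\ref{thm:common}. As $\p(\mc E)\to 1$, the leading term is $O_p(\log T)$ unconditionally.

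For the over-estimation term, I would exploit the two structural bounds that capping enforces for every $j$: entrywise, $|\wh w^{\capp}_{x,ij}|\le c_w/\sqrt n$ by construction, and hence $\Vert\wh{\mbf w}^{\capp}_{x,j}\Vert^2 = \sum_{i=1}^n (\wh w^{\capp}_{x,ij})^2 \le c_w^2$, that is, $\Vert\wh{\mbf w}^{\capp}_{x,j}\Vert\le c_w$. By Cauchy--Schwarz, $|(\wh{\mbf w}^{\capp}_{x,j})^\top\mbf x_t|\le c_w\Vert\mbf x_t\Vert$, so each summand is bounded by $c_w^2\, n^{-1/2}\Vert\mbf x_t\Vert$. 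The sub-exponential tails of Assumption~\ref{assum:tail} (each $x_{it}$ has uniformly bounded $\Vert\cdot\Vert_{\psi_1}$ since the loadings are bounded), together with a union bound over the $nT$ coordinates and Assumption~\ref{assum:nt}, give $\max_{i,t}|x_{it}| = O_p(\log(nT)) = O_p(\log T)$, whence $\max_{1\le t\le T}\Vert\mbf x_t\Vert\le\sqrt n\,\max_{i,t}|x_{it}| = O_p(\sqrt n\,\log T)$, which is the same bound already used in \eqref{eq:pca:nonbound}. Thus each summand is $O_p(\log T)$ uniformly in $i$ and $t$, and since there are at most $\bar r - r$ of them (a fixed number), the whole over-estimation term is $O_p(\log T)$.

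Combining the two bounds yields the claim. I expect no genuine technical obstacle here; the only point requiring care is the selection of the deterministic $c_w$ so that the leading eigenvectors are left untouched with probability tending to one. The real content is conceptual: capping bounds the coordinates but leaves $\Vert\wh{\mbf w}^{\capp}_{x,j}\Vert$ at order one for the spurious directions $j\ge r+1$, so, in contrast to scaling (where $\Vert\wh{\mbf w}^{\sca}_{x,j}\Vert\to 0$ drives the over-estimation term to zero in Proposition~\ref{prop:scaled}), it cannot force that term to vanish, and the non-vanishing $O_p(\log T)$ bound is the best attainable in this manner.
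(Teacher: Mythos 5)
Your proposal is correct and follows essentially the same route as the paper's own proof: the identical oracle/over-estimation decomposition, the choice of a fixed $c_w$ via \eqref{eq:bounded:ew} so that capping leaves the $r$ leading eigenvectors unchanged with probability tending to one (so that Proposition~\ref{thm:common} bounds the first term), and the bound on the second term via $\max_{1 \le i \le n}|\wh w^{\capp}_{x,ij}| \le c_w/\sqrt{n}$, Cauchy--Schwarz, and $\max_{1 \le t \le T}\Vert \mbf x_t \Vert = O_p(\sqrt{n}\log T)$, where your inline union-bound argument for $\max_{i,t}|x_{it}| = O_p(\log T)$ is exactly the paper's Lemma~\ref{lem:four}. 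Your closing remarks---that neither the blockwise construction nor weak stationarity is needed here, and that capping, unlike scaling, leaves $\Vert \wh{\mbf w}^{\capp}_{x,j} \Vert$ of order one for $j \ge r+1$ so the non-vanishing $O_p(\log T)$ bound is the best attainable by this argument---are likewise consistent with the paper's discussion following Proposition~\ref{thm:two}.
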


Refinement of the upper bound given in Proposition~\ref{thm:two} is a difficult task
as reasoned in (a)--(b) of Section~\ref{sec:blockwise},
even when considering its blockwise version, $\wh{\chi}^{\bcapp}_{it}$,
due to the lack of orthogonality of the capped eigenvectors.
Nevertheless, Proposition~\ref{thm:two} shows that
the capped estimator $\wh\chi^{\capp}_{it}$ improves upon
the worst case performance of the PC estimator reported in \eqref{eq:pca:nonbound}. 

Unlike the capped PC estimator,
the scaled PC estimator shrinks down the eigenvectors after modification
from $\Vert \wh{\mbf w}_{x, j} \Vert^2 = 1$ 
to $\Vert \wh{\mbf w}^{\sca}_{x, j} \Vert^2  = \nu_j^{-1}$,
which further curtails the spurious contribution from $\wh{\mbf w}_{x, j}, \, j \ge r + 1$
as demonstrated in the following example.

\begin{ex}
For simplicity, let us ignore the stochastic nature of $\wh{\mbf w}_{x, j}$ and
suppose that $\wh{\mbf w}_{x, j^\prime}$ for some $j^\prime \ge r+1$ is approximately sparse.
That is, there exists $\mc S \subset \{1, \ldots, n\}$ with $|\mc S| = O(1)$ and a fixed $c_0 > 0$ 
such that $|\wh w_{x, ij^\prime}| \ge c_0, \, i \in \mc S$, while 
$\max_{i \notin \mc S} |\wh w_{x, ij^\prime}| = O(n^{-1/2})$.
Then, we have $\Vert \wh{\mbf w}^{\sca}_{x, j^\prime} \Vert^2 \le c_w(c_0\sqrt n)^{-1}$,
which shrinks the overall contribution of the $j^\prime$-th estimated factor to 
$\wh{\bm\chi}^{\sca}_t$ by the factor of $\sqrt n$, in comparison with that to the PC estimator.
In the same scenario, however, capping does not always lead to $\Vert \wh{\mbf w}^{\capp}_{x, j^\prime} \Vert = o(1)$.
Consider e.g., $\wh{\mbf w}_{x, j^\prime} = (1/\sqrt 2, 1/\sqrt{2(n-1)}, \ldots, 1/\sqrt{2(n-1)})^\top$
and $c_w/\sqrt{n} \ge 1/\sqrt{2(n-1)}$, in which case $\Vert \wh{\mbf w}^{\capp}_{x, j^\prime} \Vert \ge 1/\sqrt{2}$.
\end{ex}

\subsection{Relationship to eigenvalue shrinkage}

Recalling that $\max_{1 \le i \le n} |\wh w_{x, ij}| = O(\wh\mu_{x, j}^{-1/2})$ 
(see the discussion following \eqref{eq:var}),
we may re-write the scaling factor $\nu_j$ using the 
choice $c_w = 1.1 \times \sqrt{n} \max_{1 \le i \le n} |\wh w_{x, i1}|$ 
as suggested in Remark \ref{rem:cw}:
\begin{align*}
\nu_j &= \max\l\{1, \frac{\max_{1 \le i \le n} |\wh w_{x, ij}|}{1.1 \times \max_{1 \le i \le n} |\wh w_{x, i1}|}\r\}
= \max\l\{1, \sqrt{\frac{C_j\wh\mu_{x, 1}}{\wh\mu_{x, j}}}\r\}, \quad \text{such that}
\\
\wh{\chi}^{\sca}_{it} &= \sum_{j = 1}^{\wh r}
\min\l\{1, \sqrt{\frac{\wh\mu_{x, j}}{C_j\wh\mu_{x, 1}}}\r\} \wh{w}_{x, ij} \wh{\mbf w}_{x, j}^\top\mbf x_t
\end{align*}
with some fixed $C_j > 0$. In other words, for some choice of $c_w$,
the scaled PC estimator admits a representation 
as a PC estimator combined with the eigenvalue-based shrinkage.
Ideal choices for $C_j$ are $C_ j \gg \wh\mu_{x, j}/\wh\mu_{x, 1}$ for $j \ge r+1$, 
and $C_j \le \wh\mu_{x, j}/\wh\mu_{x, 1}$ for $j \le r$ which, however, are infeasible 
since they require the knowledge of $r$. 
We consider a simpler but feasible choice of $C_j = 1$ for all $j$, 
and define the modified PC estimator based on eigenvalue shrinkage:
\begin{align}
\label{eq:cs:est}
\wh\chi^{\cs}_{it} = \sum_{j = 1}^{\wh r} \sqrt{\frac{\wh\mu_{x, j}}{\wh\mu_{x, 1}}} \cdot
\wh{w}_{x, ij} (\wh{\mbf w}_{x, j})^\top \mbf x_t.
\end{align}
Its blockwise version $\wh\chi^{\bcs}_{it}$ is defined analogously
with $\wh{\mbf w}_{x, j}^{(\ell)}$ and the corresponding eigenvalues $\wh\mu_{x, j}^{(\ell)}$
replacing $\wh{\mbf w}_{x, j}$ and $\wh\mu_{x, j}$, respectively.  
This estimator is expected to keep under control the over-estimation error, 
since $\wh\mu_{x, j}/\wh\mu_{x, 1} = o_p(1)$ for $j \ge r+1$
while being asymptotically bounded away from zero for $j \le r$. 
Hence, $\wh\chi^{\cs}_{it}$ preserves the contribution of the leading PCs although with a possible bias. 
From the simulation studies in Section~\ref{sec:sim},
we observe that any bias incurred by over-shrinkage is well compensated by 
its effectiveness in shrinking down the spuriously large over-estimation error.

\begin{rem}[Eigenvalue shrinkage]
\label{rem:shr}
The good performance of the shrinkage estimator in \eqref{eq:cs:est} may be explained by its link to the literature on eigenvalue shrinkage-based estimators.
\cite{donoho2018a} and \cite{donoho2018b} investigate
the optimal eigenvalue shrinkage for spiked covariance matrix estimation
when $\mbf x_t \sim_{\iid} \mc N_n(\mbf 0, \bm\Gamma_x)$
with $\mu_{x, 1} \ge \ldots \ge \mu_{x, r} > 1$ and $\mu_{x, j} = 1, \, j \ge r + 1$.
It has been shown that 
%the optimal eigenvalue shrinker under the condition number loss or Stein's loss,
%has its asymptotic slope
%strictly less than $1$ even for very large eigenvalues. Also, 
for any loss function considered therein,
the optimal eigenvalue shrinkage function $\eta$ yields $\eta(\wh\mu_{x, j}) < \wh\mu_{x, j}$.
Heuristically, shrinkage of eigenvalues 
not only accounts for the upward shift of empirical eigenvalues,
but also the inconsistency in empirical eigenvectors \citep{donoho2018a}.
\end{rem}

\section{Simulation studies}
\label{sec:sim}

\subsection{Set-up}
\label{sec:sim:model}

We consider the following data generating model
which allows for serial correlations in $f_{jt}$ and 
both serial and cross-sectional correlations in $\vep_{it}$.
\begin{align}
& x_{it} = r^{-1/2} \sum_{j=1}^r \lambda_{ij}f_{jt} + \sqrt{\phi} \vep_{it}, \, 1 \le i \le n; \, 1 \le t \le T,
\quad \mbox{where} \label{eq:sim:model}
\\
& f_{jt} = \rho_{f, j} f_{j, t-1} + u_{jt}, \quad  \vep_{it} = \rho_{\vep, i} \vep_{i, t-1} + v_{it}, \nn
\end{align}
with factor loadings $\lambda_{ij} \sim_{\iid} \mc N(0, 1)$, 
factor innovations $u_{jt} \sim_{\iid} \mc N(0, 1/(1 - \rho_{f, j}^2))$, 
and the autoregressive parameters as
$\rho_{f, j} = \rho_f - 0.05(j - 1)$ with $\rho_f = 0.5$.
For the idiosyncratic innovations $v_{it}$, we consider the following two models. \medskip

\textbf{Model~1}. With $H = 10$, $e_{it} \sim_{\iid} \mc N(0, 1 - \rho_{\vep, i}^2)$
and $\beta_i \sim_{\iid} \text{Unif}\{-0.15, 0.15\}$,
we generate
$v_{it} = (1 + 2\beta_i^2 H)^{-1/2}(e_{it} + \beta_i \sum_{l = i - H, l \ne i}^{i + H} e_{lt})$,
and set $\rho_{\vep, i} \sim_{\iid} \text{Unif}\{0.2, -0.2\}$.
This model has been taken from \cite{baing02} except that 
we select the parameters $\rho_{\vep, i}$, $\beta_i$ and $H$ of smaller magnitude such that
the problem of identifying $r$ is in fact {\it easier} here than in the original paper.
\medskip

\textbf{Model~2 \citep{cai2015}}. The vector $\mbf v_t=(v_{1t},\ldots,v_{nt})^\top$ is such that 
$\mbf v_t = \bm\Gamma^{1/2}_v \mbf e_t$, where $\bm\Gamma_v = \mbf V\bm\Delta\mbf V^\top + \mbf I_n$,
and $\mbf e_t \sim_{\iid} \mc N_n(0, (1-\rho_{\vep, i}^2) \mbf I_n)$.
The diagonal matrix $\bm\Delta$ has $r$ non-zero eigenvalues taking equidistant values from $20$ to $10$,
and $\mbf V$ is chosen as the $r$ leading left singular vectors of 
a matrix $\mbf M \in \R^{n \times r}$,
whose first $[\varrho n]$ rows are drawn independently from $\mc N(0, 1)$
and the rest are set to zero. 
By construction, this models adds $r$ additional `weak' factors stemming 
from the large (although bounded for all $n$) eigenvalues of $\bm\Gamma_v$.
\medskip

Model~1 provides a benchmark as it is popularly adopted in the factor model literature, 
while Model~2 mimics the case of weak factors considered 
in Examples~\ref{ex:rev:one}--\ref{ex:rev:two}.

We control the `noise-to-signal' ratio 
% by setting $\vartheta^2 = \phi^2 \cdot n^{-1} \sum_{i=1}^n \wh{\Var}(\chi_{it})/\wh{\Var}(\vep_{it})$
with $\phi \in \{0.5, 1, 2\}$
such that larger values of $\phi$ correspond to the low signal-to-noise ratio.
Throughout, we set $r = 5$, and consider $T \in \{500, 1000, 2000\}$ and $n \in \{200, 500, 1000\}$,
and $\varrho \in \{0.2, 0.5, 0.9\}$ for Model~2.

We explore the in-sample estimation accuracy of 
the PC estimator $\wh\chi^{\pca}_{it}$ in \eqref{eq:pca:est},
the scaled estimator $\wh\chi^{\sca}_{it}$ in \eqref{eq:sca:est},
the capped estimator $\wh\chi^{\capp}_{it}$ in \eqref{eq:cap:est}
and the shrinkage estimator
$\wh\chi^{\cs}_{it}$ in \eqref{eq:cs:est},
with and without blockwise estimation for which we set $b_T = [\log^2 T]$.
For estimating $r$, we consider the two estimators
\eqref{eq:r:bn} (`BN') and \eqref{eq:r:ah} (`AH'),
setting $r_{\max} = [\sqrt{n \wedge T}]$.
For comparison, we also investigate the performance of the oracle estimator
$\wh\chi^{\oracle}_{it}$ defined as 
% = \sum_{j=1}^r \wh w_{x, ij}\wh{\mbf w}_{x, j}^\top\mbf x_t$,
the PC estimator~\eqref{eq:pca:est} obtained with the true $r$. 
For each setting, $1000$ realisations have been generated.
We provide the results obtained under Model~1 in the main text, 
and additional simulation results under Model~2 are available in Appendix~\ref{sec:sim:add}.
Based on Figure~\ref{fig:sim:r}, 
we report the results when the factor number estimator~\eqref{eq:r:bn} is used,
in order to contrast the behaviour of the proposed modified PC estimators
to that of the PC estimator in terms of their `insensitivity' to the over-estimation of $r$.

\subsection{Results}
\label{sec:sim:res}

\begin{figure}[htb]
\centering
\includegraphics[width=.8\linewidth]{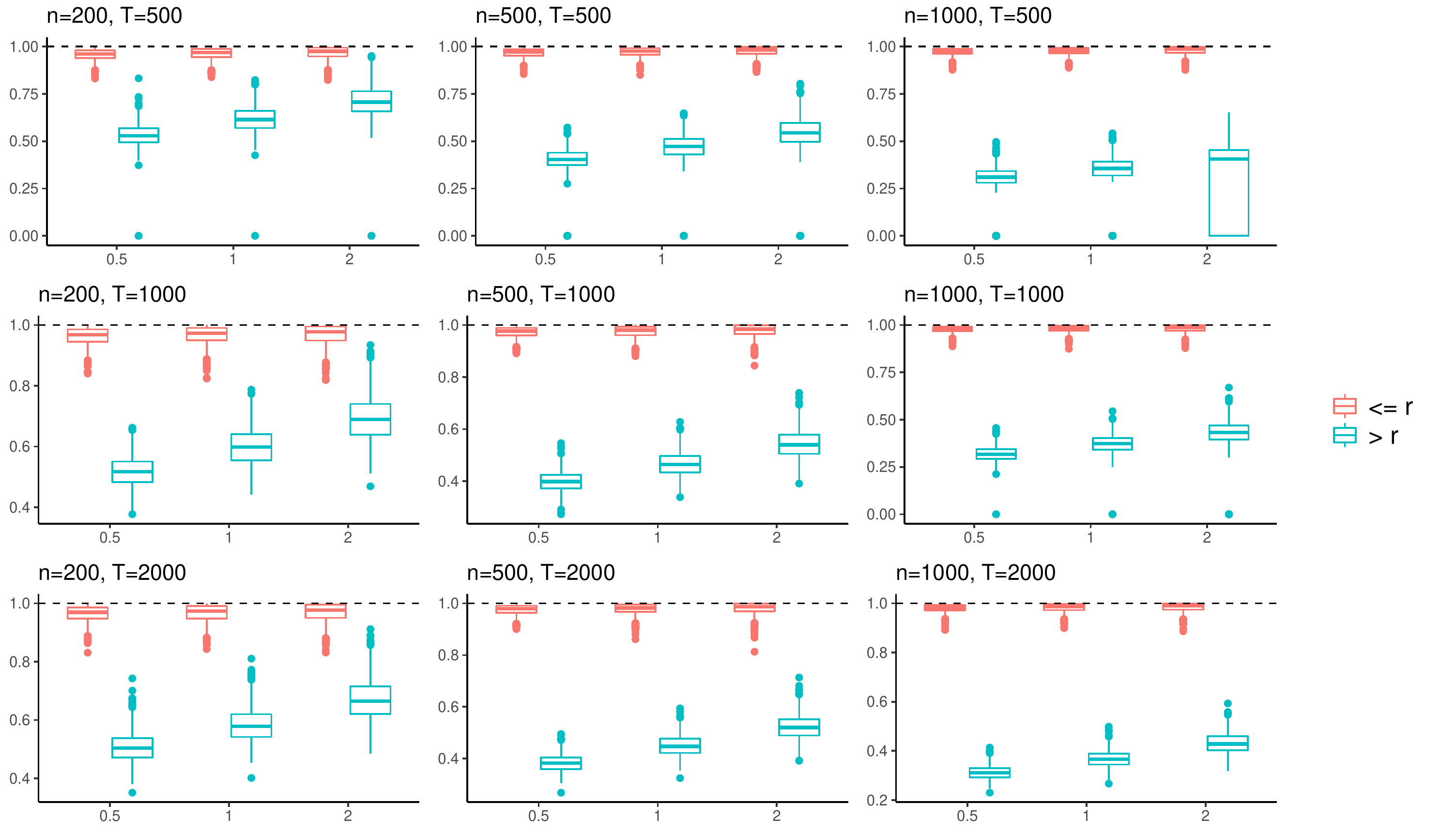}
\caption{\small Box plots of
$\Vert\wh{\mbf w}^{\sca}_{x, j}\Vert$ averaged for $2 \le j \le r$ (`$\le r$')
against that averaged for $r + 1 \le j \le \wh r$ (`$> r$')
averaged $1000$ realisations generated under Model~1
with $n \in \{200, 500, 1000\}$ (left to right), $T \in \{500, 1000, 2000\}$ (top to bottom)
and $\phi \in \{0.5, 1, 2\}$  (left to right within each plot).}
\label{fig:sim:v}
\end{figure}

First, we investigate the amount of scaling and capping applied to $\wh{\mbf w}_{x, j}$
when $j \le r$ and $j \ge r + 1$,
in order to verify whether the asymptotic argument in \eqref{eq:bounded:ew} is valid 
for finite $n$ and $T$.
Figure~\ref{fig:sim:v} plots
%(a) the proportion of the coordinates of $\wh{\mbf w}_{x, j}$
%unaltered after capping as in \eqref{eq:w:cap},
%i.e., $n^{-1} \sum_{i = 1}^n \mathbb{I}(|\wh{w}_{x, ij}| \le c_wn^{-1/2})$, and
%(b) 
the norm of the scaled eigenvectors $\Vert \wh{\mbf w}^{\sca}_{x, j} \Vert$ in~\eqref{eq:w:scale}
averaged over $2 \le j \le r$ and $r + 1 \le j \le \wh{r}$, respectively,
for varying $T$, $n$ and $\varrho$.
The results confirm that across different scenarios,
scaling does not alter the contribution
from the leading $r$ eigenvectors of $\wh{\bm\Gamma}_x$,
while curtailing that from $\wh{\mbf w}_{x, j}, \, j \ge r + 1$
by yielding $\Vert \wh{\mbf w}^{\sca}_{x, j} \Vert \ll \Vert \wh{\mbf w}_{x, j} \Vert = 1, \, j \ge r + 1$
especially for large $n$.
This in turn indicates that there are a few spuriously large coordinates in $\wh{\mbf w}_{x, j}, \, j \ge r + 1$, 
corresponding to the regime $\alpha \simeq 0$ in Proposition~\ref{prop:pca}. 
As shown below, this leads to the undesirable behaviour of the PC estimator
while affecting the modified PC estimators to a much lesser degree.

\begin{figure}[htb]
\centering
\includegraphics[width=.6\textwidth]{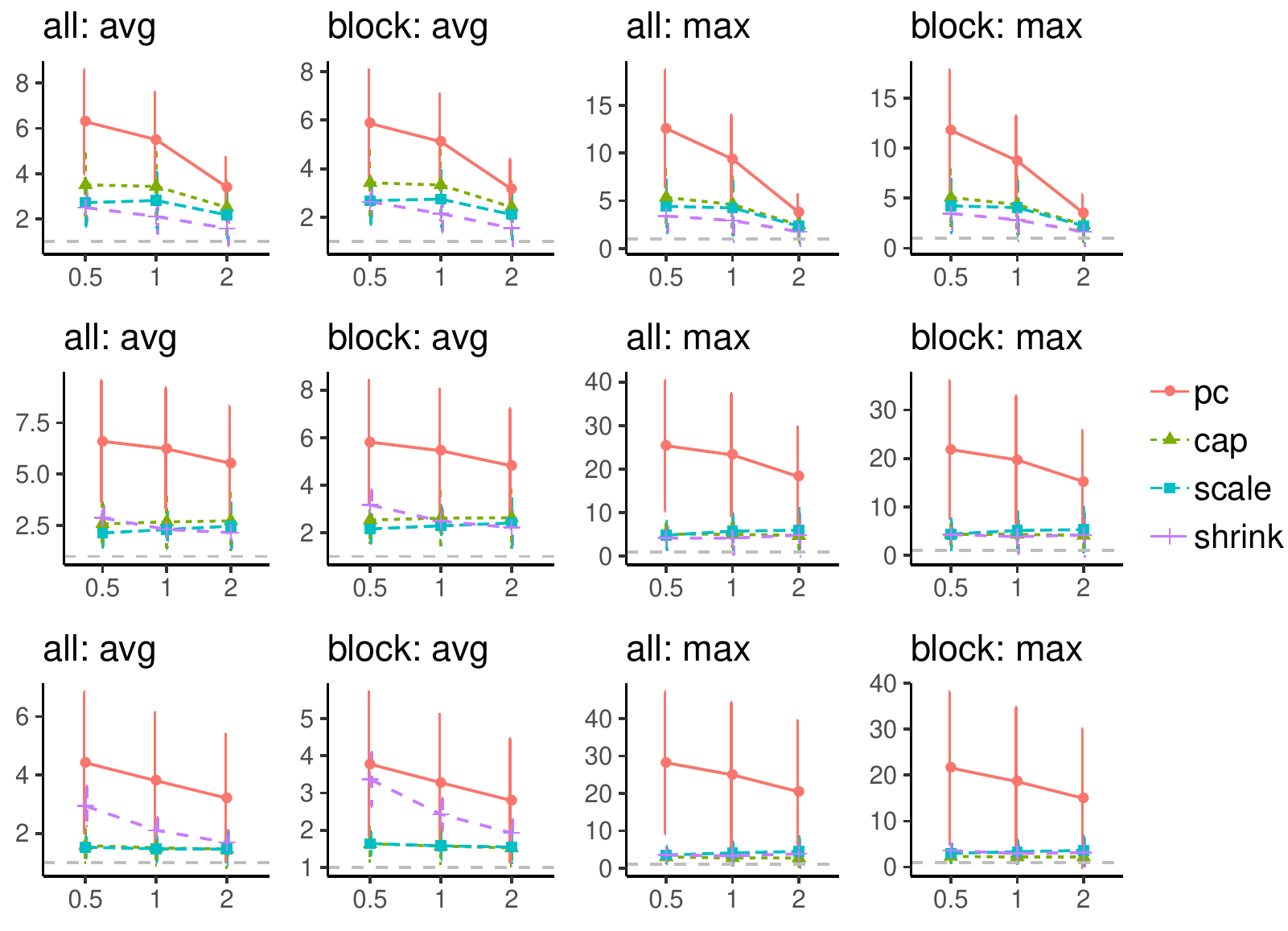}
\caption{\small $\text{err}_{\avg}(\wh\chi^{\circ}_{it})$
and $\text{err}_{\max}(\wh\chi^{\circ}_{it})$
of $\wh{\chi}^{\pca}_{it}$, $\wh{\chi}^{\capp}_{it}$, $\wh{\chi}^{\sca}_{it}$
and $\wh{\chi}^{\cs}_{it}$ estimated using the entire sample (`all'),
and their blockwise counterparts (`block'),
averaged over $1000$ realisations generated under Model~1 with $T = 500$,
$n \in \{200, 500, 1000\}$ (top to bottom)
and $\phi \in \{0.5, 1, 2\}$ (left to right within each plot).
The vertical errors bars represent the standard deviations.}
\label{fig:sim:err:T500}
\end{figure}

\begin{figure}[htb]
\centering
\includegraphics[width=.6\textwidth]{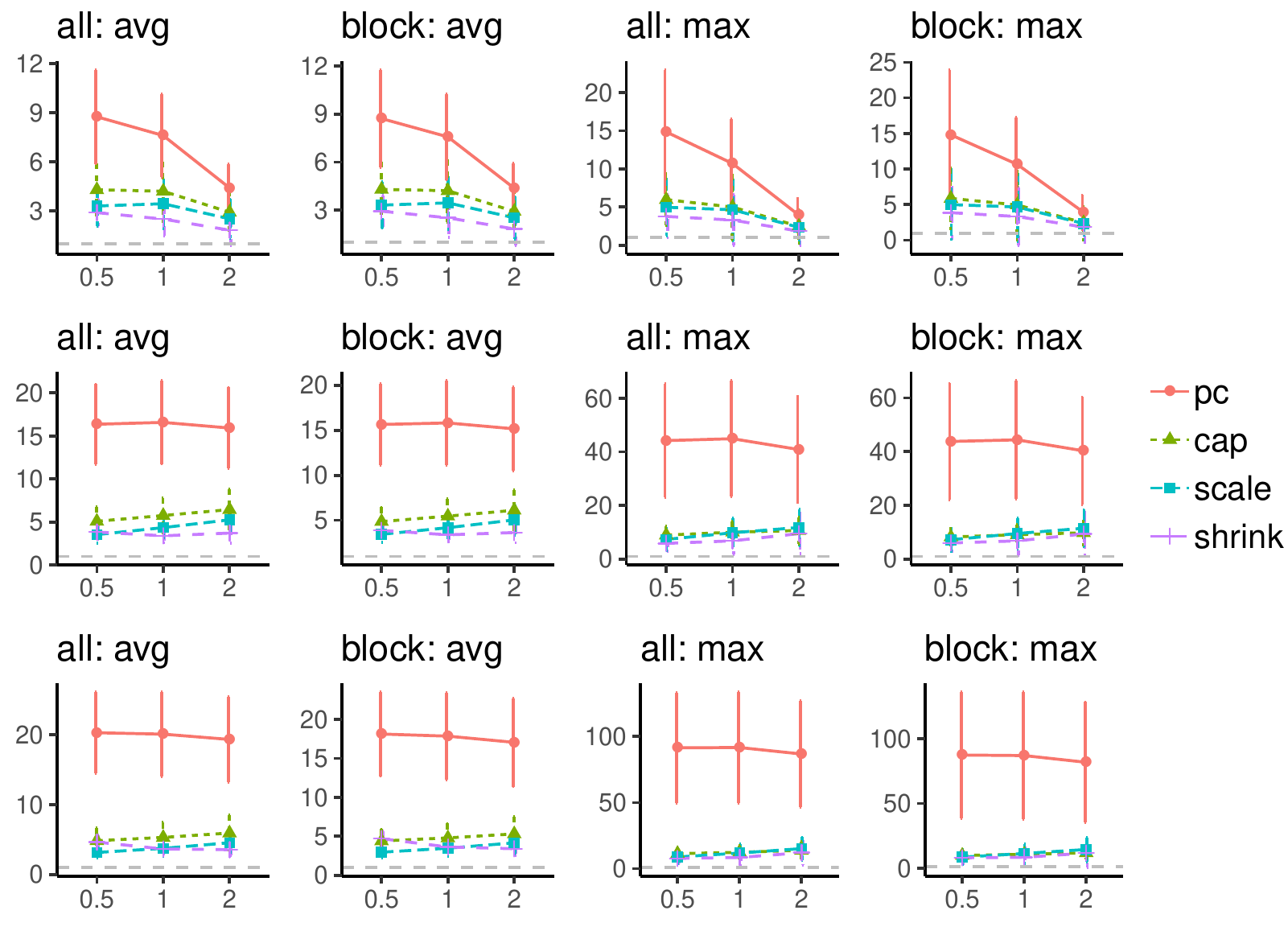}
\caption{\small $\text{err}_{\avg}(\wh\chi^{\circ}_{it})$
and $\text{err}_{\max}(\wh\chi^{\circ}_{it})$
under Model~1 with $T = 2000$.}
\label{fig:sim:err:T2000}
\end{figure}

Next, we evaluate the accuracy of an estimator $\wh\chi^\circ_{it}$ of $\chi_{it}$
relative to that of the oracle estimator, using the following error measures
\begin{align*}
\text{err}_{\avg} = \frac{n^{-1}\sum_{i=1}^n\sum_{t=1}^T(\wh\chi^\circ_{it} - \chi_{it})^2}
{\wh{\E}\{n^{-1}\sum_{i=1}^n\sum_{t=1}^T(\wh\chi^{\oracle}_{it} - \chi_{it})^2\}},
\quad
\text{err}_{\max} = \frac{\max_{1 \le i \le n}\sum_{t=1}^T(\wh\chi^\circ_{it} - \chi_{it})^2}
{\wh{\E}\{\max_{1 \le i \le n}\sum_{t=1}^T(\wh\chi^{\oracle}_{it} - \chi_{it})^2\}},
\end{align*}
where $\wh{\E}$ denotes the average over all Monte Carlo repetitions,
and $\circ$ denotes the use of PC, capped, scaled or shrinkage estimator and their blockwise counterparts.
We note that $\text{err}_{\max}$ is 
specifically to capture the possible deterioration in the estimators for individual $i$
due to the over-estimation of $r$,
and both measures are closely related to the conditions in \eqref{eq:fan:cond}.
Figures \ref{fig:sim:err:T500}--\ref{fig:sim:err:T2000}
show the average and standard deviation
of $\text{err}_{\avg}$ and $\text{err}_{\max}$
over $1000$ Monte Carlo realisations.

Overall, blockwise estimators do not lose efficiency compared to their whole sample counterparts
or, even perform slightly better in terms of the relative efficiency compared to 
the oracle PC estimator.
It is evident that PC estimator exhibits the worst performance in almost all cases,
in terms of both the average and variability of the two different error measures.
Indeed, $\text{err}_{\max}^{\pca}$ indicates
that the PC estimator with an over-estimated factor number
can be worse by hundredfold than the oracle PC estimator for some coordinates.

Capping and scaling lead to considerable improvement
with respect to both error measures, with and without blockwise estimation,
and marginally the scaled PC estimator tends to return smaller estimation errors. We note that $\wh\chi_{it}^{\cs}$ yields the smallest estimation error in many scenarios.
Exceptions occur when $n$ is relatively larger than $T$:
the PC-based estimator of the factor space is expected to be
highly accurate in this setting due to the blessing of dimensionality,
and the bias introduced by eigenvalue shrinkage 
tends to deteriorate the performance of $\wh\chi_{it}^{\cs}$ 
(see Figure~\ref{fig:sim:err:T500s2} in Appendix~\ref{sec:sim:add}).

As the signal-to-noise ratio decreases, the gap between 
the performance of $\wh\chi^{\pca}_{it}$ and our modified estimators gets closer,
as the consistent estimation of $\chi_{it}$ itself becomes more challenging,
i.e., the error due to the over-estimation of $r$ in \eqref{eq:pca_error}
becomes dominated by the first term.
Increasing $n$ also tends to close this gap
as the performance of the estimator of $r$ improves.
This, however, has the opposite effect on $\text{err}_{\max}(\wh\chi_{it}^{\pca})$
since the maximum is taken over the $n$ cross-sections.
In general, $\text{err}_{\avg}$ and $\text{err}_{\max}$ evaluated at modified PC estimators
exhibit much less fluctuations as $n$ and $T$ vary.
Noting the close relationship between $\text{err}_{\avg}(\wh\chi_{it}^{\pca})$ 
and $\text{err}_{\max}(\wh\chi_{it}^{\pca})$
and the conditions in \eqref{eq:fan:cond}, 
the results reported here indicate that the popularly adopted covariance matrix estimator based on 
factor analysis will suffer from the over-estimation of $r$.

\section{Real data analysis}
\label{sec:real}

We consider risk minimisation for a portfolio consisting of 
the log returns of the daily closing values of the stocks comprising the Standard \& Poor's 100
(S\&P100) index between July 2006 and September 2013 
(denoted by $\{x_{it}, \, i \le n, \, t \le T\}$ with $n = 90$ and $T = 1814$)
following the exercise conducted in \cite{lam2016}.
The dataset is available from Yahoo Finance.

As evidence of structural changes has been observed in a similar financial data dataset \citep{bcf2017},
we choose to adopt a rolling window of size $\wt T = 253$ (number of trading days each year)
and evaluate the performance of a portfolio on a monthly basis 
(with $21$ as the number of trading days each month).
At the beginning of each month, different methods are adopted to 
estimate the covariance matrix of stock returns using one year of past returns.
Each portfolio has weights given by
\begin{align*}
\wh{\bm\omega}^{\circ}_k  = 
\arg\!\!\!\!\!\!\!\!\!\!\min_{\bm\omega \in \R^n: \, \bm\omega^\top\mbf 1_n = 1} \bm\omega^\top \wh{\bm\Sigma}^\circ_k \bm\omega
= \frac{\wh{\bm\Sigma}^\circ_k \mbf 1_n}{\mbf 1_n^\top \wh{\bm\Sigma}^\circ_k \mbf 1_n}
\quad \text{for } k = 1, \ldots, M = \lceil (T - \wt T)/21 \rceil,
\end{align*}
where $\wh{\bm\Sigma}^\circ_k$ denotes some covariance matrix estimator
based on the $k$-th rolling window $R_k = [21(k - 1) + 1, 21(k - 1) + \wt T]$,
and $\mbf 1_n$ denotes a vector consisting of $n$ ones.
At the end of each month, 
we compute the total excess return, the out-of-sample variance and the mean Sharpe ratio, given by
\begin{align*}
& \wh\tau(\wh{\bm\Sigma}^\circ) = \sum_{k = 1}^M \sum_{t = 21(k - 1) + 1}^{\min(21k, T - \wt T)} (\wh{\bm\omega}^\circ_k)^\top \mbf x_{\wt T + t},
\\
& \wh\sigma^2(\wh{\bm\Sigma}^\circ) 
= \frac{1}{T - \wt T} \sum_{k = 1}^M \sum_{t = 21(k - 1) + 1}^{\min(21k, T - \wt T)} 
\{(\wh{\bm\omega}^\circ_k)^\top \mbf x_{\wt T + t} - \wh\mu(\wh{\bm\Sigma}^\circ_k)\}^2 \quad \text{and}
\\
& \text{SR}(\wh{\bm\Sigma}^\circ) = \frac{1}{M} \sum_{k = 1}^M \frac{\wh\tau(\wh{\bm\Sigma}^\circ_k)
}{\wh\sigma(\wh{\bm\Sigma}^\circ_k)},
\end{align*}
where $\wh{\tau}(\wh{\bm\Sigma}^\circ_k)$, $\wh{\mu}(\wh{\bm\Sigma}^\circ_k)$ 
and $\wh\sigma^2(\wh{\bm\Sigma}^\circ_k)$ 
denote the total and mean excess return and the out-of-sample variance
calculated for each portfolio from $\wh{\bm\Sigma}^\circ_k$.

For covariance matrix estimation, we consider the following two approaches
that separately estimate the factor-driven and idiosyncratic contributions.
\begin{description}
\item[Exact factor modelling (EFM).]
We force the covariance matrix of the idiosyncratic component to be diagonal and obtain
\begin{align}
\label{eq:efm:pca}
\wh{\bm\Sigma}^{\pca}_k = \wt{T}^{-1} \sum_{t \in R_k} 
(\wh{\bm\chi}^{\pca}_t)^\top \wh{\bm\chi}^{\pca}_t
+ \diag\Big( \wt{T}^{-1} \sum_{t \in R_k} 
(\wh{\bm\vep}^{\pca}_t)^\top \wh{\bm\vep}^{\pca}_t \Big),
\end{align}
where the operator $\diag(\mbf A)$ returns a diagonal matrix with 
the diagonal elements of $\mbf A$ in its diagonal,
$\wh{\bm\chi}^{\pca}_t, \, t \in R_k$ is obtained from 
$\wh{\bm\Gamma}_{x, k} = \wt{T}^{-1} \sum_{t \in R_k} \mbf x_t\mbf x_t^\top$
and $\wh{\bm\vep}^{\pca}_t = \mbf x_t  - \wh{\bm\chi}^{\pca}_t$.
Similarly, we yield
$\wh{\bm\Sigma}^{\capp}_k$, $\wh{\bm\Sigma}^{\sca}_k$ and $\wh{\bm\Sigma}^{\cs}_k$
with $\wh{\bm\chi}^{\capp}_t$, $\wh{\bm\chi}^{\sca}_t$ and $\wh{\bm\chi}^{\cs}_t$
replacing $\wh{\bm\chi}^{\pca}_t$ in \eqref{eq:efm:pca}, respectively.
Also, we consider their blockwise versions
$\wh{\bm\Sigma}^{\bpca}_k$, $\wh{\bm\Sigma}^{\bcapp}_k$, $\wh{\bm\Sigma}^{\bsca}_k$ 
and $\wh{\bm\Sigma}^{\bcs}_k$
with the size of blocks $b = \lceil \log^2\wt{T} \rceil$.

\item[POET \citep{fan13}.] We adopt the POET estimator
\begin{align}
\label{eq:poet}
\wh{\bm\Sigma}^{\poet}_k = \wt{T}^{-1} \sum_{t \in R_k} 
(\wh{\bm\chi}^{\pca}_t)^\top \wh{\bm\chi}^{\pca}_t
+ \mc T\Big( \wt{T}^{-1} \sum_{t \in R_k} 
(\wh{\bm\vep}^{\pca}_t)^\top \wh{\bm\vep}^{\pca}_t\Big),
\end{align}
where $\mc T(\cdot)$ performs an element-wise hard-thresholding 
(except for its diagonals) 
with an adaptively chosen threshold recommended by \cite{fan13}
including a constant selected via cross-validation.
\end{description}

The results obtained for varying $\wh r$ are reported in Table~\ref{table:risk};
note that $\wh r = 6$ is selected by the information criterion of \cite{baing02} applied to the whole data.
We note that $\wh r = 2$ may already be over-estimating the number of factors,
in that the idiosyncratic component estimated with $\wh r = 1$ exhibits a prominent group structure
and this may be falsely detected as a factor via PCA, see Figure~\ref{fig:sp100:idio}.
As demonstrated in Example~\ref{ex:rev:one}, 
possibly highly structured nature of the idiosyncratic component
may lead to some elements of $\wh{\mbf w}_{x, j}, \, j \ge 2$ being (spuriously) large, 
and poor performance of the corresponding PC estimator 
(see also Proposition~\ref{prop:pca} and Example~\ref{ex:rev:two}).
%\footnote{HC: Do we need to say anything about picking up these weak factors is not really desirable
%e.g., for the purpose of portfolio construction? \matt{actually this is what we do when we over estimate with PCs and results show it does not help}}\footnote{\matt{Shall we say again here that the idiosyncratic has a structure and therefore delocalisations of evecs seems not to be the case?}}
% It is clear that $\wh r = 6$ considerably over-estimates the number of factors present in the data,
The EFM-based method combined with the PC estimator with $\wh r = 2$
performs the best among all the methods.
With $\wh r = 6$, the POET yields 
large negative total excess returns and large out-of-sample variance,
which confirms our observation in Remark~\ref{rem:poet}
that the covariance (precision) matrix estimation based on factor modelling
is susceptible to the errors arising from factor number estimation.
Overall, the methods based on EFM perform better than the POET according to all measures considered.
Among the modified PC estimators, the one that
applies the largest amount of shrinkage ($\wh{\bm\chi}^{\cs}_t$)
achieves the most consistent performance with regards to the choice of $\wh r$.
Interestingly, the blockwise approach yields marginally better performance
than the corresponding whole sample counterparts.

\begin{figure}[htb]
\centering
\includegraphics[width=1\textwidth]{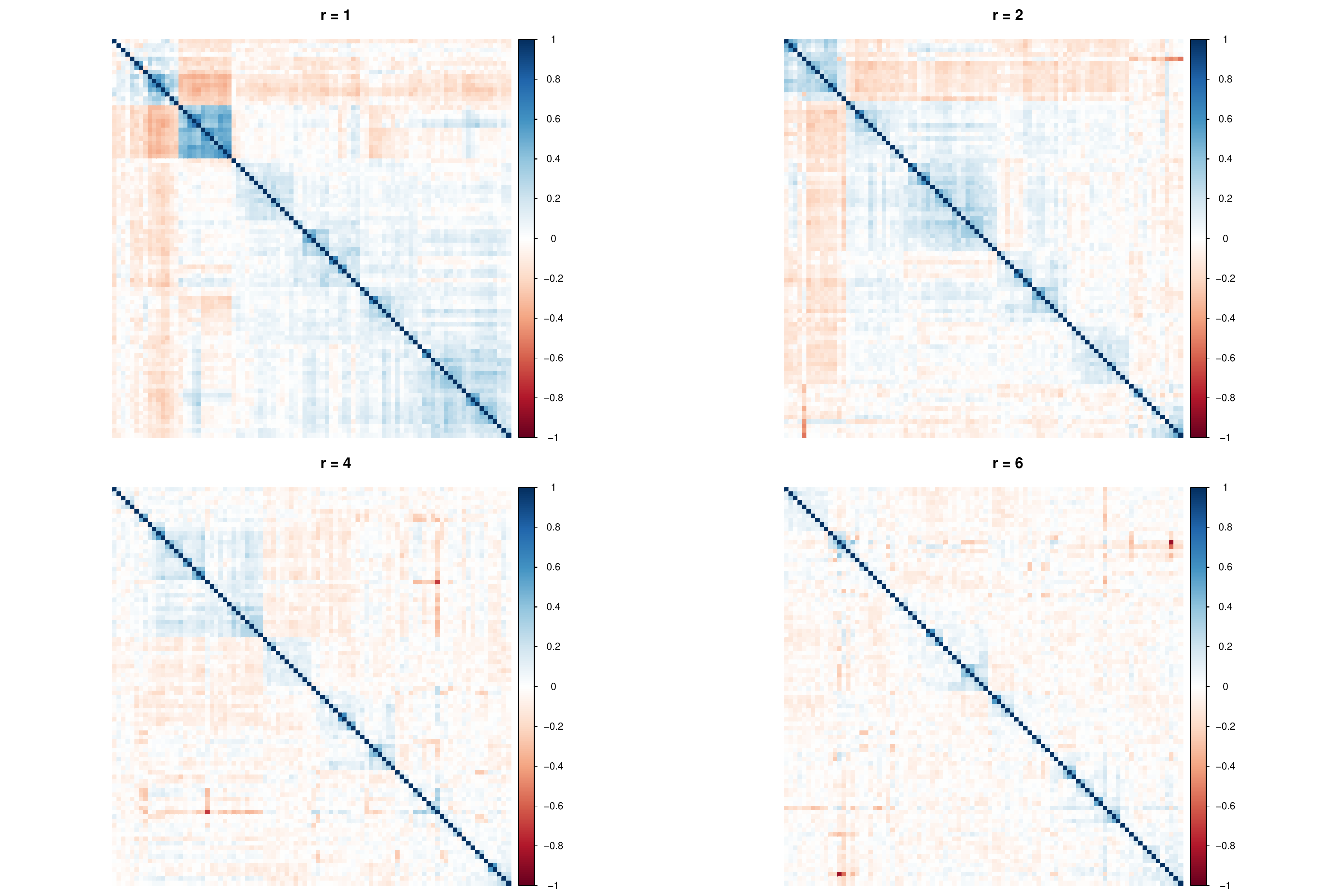}
\caption{\small Heatmap of the correlation matrix of the idiosyncratic component estimated 
with $\wh r \in \{1, 2, 4, 6\}$ (from left to right, top to bottom).
The variables are ordered via hierarchical clustering with the complete linkage method.}
\label{fig:sp100:idio}
\end{figure}

\begin{table}[htb]
\caption{Performance of portfolios constructed with different covariance estimators.}
\label{table:risk}
\resizebox{\columnwidth}{!}
{\small
\begin{tabular}{cc|ccc||ccc||ccc}
\hline
\hline
& & \multicolumn{3}{c||}{$\wh r = 2$} & \multicolumn{3}{c||}{$\wh r = 4$} & \multicolumn{3}{c}{$\wh r = 6$} \\
method & & $\wh\tau$ & $\wh\sigma^2$ & SR& $\wh\tau$ & $\wh\sigma^2$ & SR & $\wh\tau$ & $\wh\sigma^2$ & SR \\
\hline
EFM & $\wh{\bm\chi}^{\pca}_t$ & 25.032 & 0.917 & 0.932 &-28.321 &  0.936  & 0.003 & -22.515 &  0.818 &  0.460 \\ 
 & $\wh{\bm\chi}^{\capp}_t$ & 22.513 & 0.922 & 0.865 &-31.045 &  0.883  &-0.028  &-21.691 &  0.753 &  0.439 \\ 
 & $\wh{\bm\chi}^{\sca}_t$ & 20.164 & 0.900 & 0.888 &-24.468   &0.832  & 0.133& -27.461 &  0.742 &  0.301 \\ 
 & $\wh{\bm\chi}^{\cs}_t$ & 14.072 & 0.890 & 0.809 &4.079 &0.861& 0.652 & 4.618 & 0.835 & 0.703  \\ 
 & $\wh{\bm\chi}^{\bpca}_t$ & 21.288 & 0.835 & 0.907 &-17.115&   0.813  & 0.331& -14.934 &  0.750 &  0.468 \\ 
 & $\wh{\bm\chi}^{\bcapp}_t$ & 20.807 & 0.842 & 0.888 &-19.331&   0.818 &  0.270& -14.538 &  0.756 &  0.478 \\ 
 & $\wh{\bm\chi}^{\bsca}_t$ & 20.626 & 0.833  & 0.886 &-16.133  & 0.804&   0.367& -13.939  & 0.760 &  0.469 \\ 
 & $\wh{\bm\chi}^{\bcs}_t$ & 18.866 & 0.828 & 0.859 &10.638  &0.825 & 0.766& 10.605 & 0.822 & 0.777 \\ 
\hline
POET & &  -76.446 & 6.061&-0.315  &-299.643  & 69.678 &  -0.331 & -355.691 &  85.428  &  0.254 \\
\hline
\end{tabular}}
\end{table}

\section{Conclusions}
\label{sec:conc}

Factor number estimation is a challenging task due to the lack of a clear gap in empirical eigenvalues,
and various estimators tend to over-estimate $r$
in the presence of moderate cross-sectional correlations in the idiosyncratic component.
In this paper, we make the first attempt at establishing the non-negligibility of
the error due to the over-estimation of $r$ in the widely adopted PC estimator.
In doing so, we propose a novel blockwise estimation technique,
which enables rigorous treatment of this over-estimation error under 
a time series factor model.
Also, we propose the modified PC estimators which are easily implemented
while performing as well as the oracle PC estimator with known $r$,
and verify this via extensive simulation studies.
In practice, we recommend the use of $\wh\chi_{it}^{\shr}$
unless $n$ is much greater than $T$ (an unlikely setting for e.g., economic and financial data),
% which, although lacks the theoretical backing of $\wh\chi_{it}^{\sca}$,
which shows very good practical performance both on simulated and real-life datasets.

\bibliographystyle{asa}
\bibliography{fbib}

\clearpage

\appendix

\section{Proofs of the main results}
\label{sec:appA}
\subsection{Preliminaries}
\label{sec:prem}

The following lemmas hold under Assumptions~\ref{assum:id}--\ref{assum:dep}.
Their proof can be found in Appendix~\ref{sec:pf:prem}.

\begin{lem}
\label{lem:four}
%\vspace{-10pt}
$\max_{1 \le i \le n} \max_{1 \le t \le T} |\chi_{it}| = O_p(\log(T))$, 
$\max_{1 \le i \le n} \max_{1 \le t \le T} |\vep_{it}| = O_p(\log(T))$,
and $\max_{1 \le i \le n} \max_{1 \le t \le T} |x_{it}| = O_p(\log(T))$.
%\end{align*}
\end{lem}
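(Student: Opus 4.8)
The plan is to reduce all three statements to a single maximal inequality for a finite collection of uniformly sub-exponential random variables, exploiting that the number of factors $r$ is fixed and that $\log(n) = O(\log(T))$ under Assumption~\ref{assum:nt}.

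First I would establish uniform control of the sub-exponential norms. For the common component, $\chi_{it} = \sum_{j=1}^r \lambda_{ij} f_{jt}$ is a sum of $r$ terms, so by subadditivity of $\Vert\cdot\Vert_{\psi_1}$ together with Assumption~\ref{assum:id}(iii) and Assumption~\ref{assum:tail}(i), $\Vert\chi_{it}\Vert_{\psi_1} \le \sum_{j=1}^r |\lambda_{ij}|\,\Vert f_{jt}\Vert_{\psi_1} \le r\bar\lambda B_f$, uniformly over $1 \le i \le n$ and $1 \le t \le T$. For the idiosyncratic component, taking $\mbf v = \bm\varphi_i$ in the definition given in Assumption~\ref{assum:tail}(ii) yields $\Vert\vep_{it}\Vert_{\psi_1} = \Vert\bm\varphi_i^\top\bm\vep_t\Vert_{\psi_1} \le \Vert\bm\vep_t\Vert_{\psi_1} < B_\vep$, again uniformly. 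Finally $\Vert x_{it}\Vert_{\psi_1} \le \Vert\chi_{it}\Vert_{\psi_1} + \Vert\vep_{it}\Vert_{\psi_1} \le r\bar\lambda B_f + B_\vep$. Hence there is a single constant $K < \infty$ bounding the sub-exponential norm of all three families uniformly in $(i,t)$.

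Next I would apply the standard sub-exponential tail bound: for any $X$ with $\Vert X\Vert_{\psi_1} \le K$, the definition gives $\E\exp(|X|/K) \le 2$, whence by Markov's inequality $\p(|X| > u) \le 2\exp(-u/K)$. A union bound over the $nT$ indices then gives, for any of the three families,
\[
\p\left(\max_{1 \le i \le n}\max_{1 \le t \le T} |X_{it}| > u\right) \le 2nT\,\exp(-u/K).
\]
Setting $u = MK\log(nT)$ bounds the right-hand side by $2(nT)^{1-M}$, which tends to $0$ for any fixed $M > 1$. Since Assumption~\ref{assum:nt} gives $n = O(T^\kappa)$ and hence $\log(nT) = \log(n) + \log(T) = O(\log(T))$, this establishes that each of the three maxima is $O_p(\log(T))$.

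I do not expect a serious obstacle here, as the argument is a routine maximal inequality; note in particular that neither independence nor the mixing condition in Assumption~\ref{assum:dep} is needed, since the union bound controls the maximum using only the marginal tail behaviour. The only points requiring care are (i) the uniform sub-exponential norm bound for $\chi_{it}$, which relies on $r$ being fixed so that the subadditive bound $r\bar\lambda B_f$ does not grow, and (ii) the conversion $\log(nT) \asymp \log(T)$ via Assumption~\ref{assum:nt}, which is what delivers the stated $\log(T)$ rate rather than a $\log(nT)$ rate.
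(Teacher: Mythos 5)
Your proposal is correct and follows essentially the same route as the paper's proof: uniform sub-exponential tail bounds (the paper cites Proposition~2.7.1 of \citet{vershynin2018}, which is exactly your Markov-inequality step) combined with a union bound, and $n = O(T^\kappa)$ to convert $\log(nT)$ into $O(\log(T))$. The only cosmetic difference is that for $\chi_{it}$ the paper bounds $|\chi_{it}| \le r\bar{\lambda}\max_{1 \le j \le r}|f_{jt}|$ and union-bounds over the $rT$ factor variables, while you bound $\Vert \chi_{it} \Vert_{\psi_1}$ by subadditivity and union-bound over $nT$ indices — both are valid and yield the same rate.
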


\begin{lem}
\label{lem:block:cov} Let $b_T$ satisfy $b_T \to \infty$ and $T^{-1} b_T \to 0$,
and $L_T = \lceil T/b_T \rceil$.
\begin{compactenum}
\item[(i)] $\max_{1 \le \ell \le L_T} 
n^{-1}\Vert \wh{\bm\Gamma}^{(\ell)}_x - \bm\Gamma_{\chi}\Vert =
O_p\Big(\sqrt{\frac{\log(n)}{T}} \vee \frac{1}{n}\Big)$.
\item[(ii)] $\max_{1 \le \ell \le L_T} \max_{1 \le i \le n} n^{-1/2}\Vert \bm\varphi_i^\top(\wh{\bm\Gamma}_x^{(\ell)} - \bm\Gamma_{\chi}) \Vert =
 O_p\l(\sqrt{\frac{\log(n)}{T}} \vee \frac{1}{\sqrt n}\r)$.
\end{compactenum}
Also, there exists an orthonormal matrix $\mbf S_\ell \in \R^{r \times r}$
which, for $\wh{\mbf W}_x^{(\ell)} = [\wh{\mbf w}_{x, j}^{(\ell)}, \, j \le r]$, satisfies
\begin{compactenum}
\item[(iii)]
$\max_{1 \le \ell \le L_T} 
\Vert \wh{\mbf W}^{(\ell)}_x - \mbf W_{\chi}\mbf S_\ell \Vert
= O_p\l(\sqrt{\frac{\log(n)}{T}} \vee \frac{1}{n}\r)$;
\item[(iv)]
$\max_{1 \le \ell \le L_T} \max_{1 \le i \le n}
\sqrt n\,\Vert\bm\varphi_i^\top(\wh{\mbf W}^{(\ell)}_x - \mbf W_{\chi}\mbf S_\ell)\Vert
= O_p\l(\sqrt{\frac{\log(n)}{T}} \vee \frac{1}{\sqrt n}\r)$.
\end{compactenum}
\end{lem}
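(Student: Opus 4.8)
The plan is to establish (i)--(iv) as the blockwise counterparts of the full-sample bounds already recorded in \eqref{eq:davisk}, \eqref{eq:evec:consist} and \eqref{eq:bounded:w}. Two features distinguish the blockwise setting. First, each $\wh{\bm\Gamma}_x^{(\ell)}$ is an average over the index set $\bar I_\ell$ of cardinality $|\bar I_\ell| \ge T - 3 b_T$; since $b_T/T \to 0$, we have $|\bar I_\ell| \asymp T$ uniformly in $\ell$, so each subsample average concentrates at the same rate as the full-sample estimator up to a negligible loss of efficiency, and the omission of the adjacent blocks plays no special role here (it is only the independence argument of Proposition~\ref{prop:pca} that exploits it). Second, the maxima over $1 \le \ell \le L_T$, and in (ii) and (iv) over $1 \le i \le n$, call for a union bound over at most $O(n L_T) = O(n T)$ events, which is polynomial in $n$ by Assumption~\ref{assum:nt} since $L_T \le T = O(n^{1/\kappa})$. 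Taking the deviation level proportional to $\log(n)$ in the Bernstein-type inequality for strongly mixing sub-exponential sequences (Theorem~1 of \citealp{merlevede2011}, applicable under Assumptions~\ref{assum:tail}--\ref{assum:dep}), each event fails with probability $O(n^{-c})$ for arbitrarily large $c$ upon enlarging the constant in the rate; the union probability therefore still tends to zero, and the maxima are attained at the stated $\sqrt{\log(n)/T}$ rate.

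For (i), I would write $\wh{\bm\Gamma}_x^{(\ell)} - \bm\Gamma_\chi = (\wh{\bm\Gamma}_x^{(\ell)} - \bar{\bm\Gamma}_x^{(\ell)}) + (\bar{\bm\Gamma}_x^{(\ell)} - \bm\Gamma_\chi)$ with $\bar{\bm\Gamma}_x^{(\ell)} = |\bar I_\ell|^{-1}\sum_{t \in \bar I_\ell}\E(\mbf x_t\mbf x_t^\top)$. By Assumption~\ref{assum:id}~(i) and (v) the bias term equals $|\bar I_\ell|^{-1}\sum_{t\in\bar I_\ell}\E(\bm\vep_t\bm\vep_t^\top)$, whose spectral norm is at most $\max_t\Vert\E(\bm\vep_t\bm\vep_t^\top)\Vert < C_\vep$ by Assumption~\ref{assum:id}~(iv), contributing $O(1/n)$ after dividing by $n$. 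The stochastic term is handled as in the full-sample case: splitting $\mbf x_t\mbf x_t^\top$ into its $\bm\chi_t\bm\chi_t^\top$, $\bm\chi_t\bm\vep_t^\top$, $\bm\vep_t\bm\chi_t^\top$ and $\bm\vep_t\bm\vep_t^\top$ parts, bounding the common--common part by $\Vert\bm\Lambda\Vert^2 \asymp n$ times the $O_p(\sqrt{\log(n)/T})$ deviation of the $r\times r$ matrix $|\bar I_\ell|^{-1}\sum\mbf f_t\mbf f_t^\top - \mbf I_r$, and the cross and idiosyncratic parts by the same concentration inequality; division by $n$ gives $O_p(\sqrt{\log(n)/T})$ uniformly over $\ell$. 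Part (ii) is identical applied to the row $\bm\varphi_i^\top(\wh{\bm\Gamma}_x^{(\ell)} - \bm\Gamma_\chi)$: the bias row $\bm\varphi_i^\top\E(\bm\vep_t\bm\vep_t^\top)$ has Euclidean norm bounded again via Assumption~\ref{assum:id}~(iv), giving $O(1/\sqrt n)$ after division by $\sqrt n$, while the stochastic row contributes $\sqrt n \cdot O_p(\sqrt{\log(n)/T})$, with the union now over the $nL_T$ pairs $(i,\ell)$.

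For (iii), I would apply to each block the variant of the Davis--Kahan theorem of \cite{yu15} already used in \eqref{eq:davisk}: since $\bm\Gamma_\chi$ has gap $\mu_{\chi, r} - \mu_{\chi, r+1} = \mu_{\chi, r} \ge \underline c_r n$ by \ref{eq:c1}, there is an orthonormal $\mbf S_\ell \in \R^{r\times r}$ with $\Vert\wh{\mbf W}_x^{(\ell)} - \mbf W_\chi\mbf S_\ell\Vert \le 2^{3/2}\sqrt r\,\Vert\wh{\bm\Gamma}_x^{(\ell)} - \bm\Gamma_\chi\Vert/\mu_{\chi, r}$; moreover $\mbf S_\ell$ is asymptotically a $\diag(\pm 1)$ matrix, since the nonzero eigenvalues of $\bm\Gamma_\chi$ are distinct with gaps of order $n$ through Assumption~\ref{assum:id}~(ii). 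Inserting the uniform bound from (i) into the numerator and $\mu_{\chi, r} \ge \underline c_r n$ into the denominator yields $O_p(\sqrt{\log(n)/T}\vee 1/n)$, uniformly in $\ell$.

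Part (iv) is the main obstacle, because Davis--Kahan controls only the aggregate norm and not the row-wise error required here. Following the derivation of \eqref{eq:evec:consist}, I would start from the eigen-identity $\wh{\mbf W}_x^{(\ell)} = \wh{\bm\Gamma}_x^{(\ell)}\wh{\mbf W}_x^{(\ell)}(\wh{\mbf M}_x^{(\ell)})^{-1}$, where $\wh{\mbf M}_x^{(\ell)} = \diag(\wh\mu_{x,1}^{(\ell)}, \ldots, \wh\mu_{x,r}^{(\ell)})$, together with $\bm\Gamma_\chi\mbf W_\chi = \mbf W_\chi\mbf M_\chi$, and expand $\bm\varphi_i^\top(\wh{\mbf W}_x^{(\ell)} - \mbf W_\chi\mbf S_\ell)$ into terms built from $\bm\varphi_i^\top\bm\Gamma_\chi(\wh{\mbf W}_x^{(\ell)} - \mbf W_\chi\mbf S_\ell)$ and $\bm\varphi_i^\top(\wh{\bm\Gamma}_x^{(\ell)} - \bm\Gamma_\chi)\wh{\mbf W}_x^{(\ell)}$, each post-multiplied by $(\wh{\mbf M}_x^{(\ell)})^{-1}$. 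These combine four ingredients: the row bound $\max_i\Vert\bm\varphi_i^\top\mbf W_\chi\Vert = O(n^{-1/2})$ from \eqref{eq:bounded:w}, the subspace bound (iii), the row bound (ii), and the eigenvalue control $\wh\mu_{x,j}^{(\ell)} \asymp n$ for $j \le r$ (from Weyl's inequality and (i)), which gives $\Vert(\wh{\mbf M}_x^{(\ell)})^{-1}\Vert = O_p(1/n)$. The delicate point is to show that the alignment $\mbf S_\ell$ is compatible with the near-diagonal structure of $\mbf M_\chi$ and $\wh{\mbf M}_x^{(\ell)}$, using that $\mbf S_\ell$ approximately commutes with these matrices because the gaps among the top $r$ population eigenvalues are of order $n$, so that the leading terms cancel and the residual, multiplied by $\sqrt n$, is $O_p(\sqrt{\log(n)/T}\vee 1/\sqrt n)$ uniformly over $\ell$ and $i$, the union over the $nL_T$ pairs being absorbed into the $\log(n)$ factor as in the first paragraph.
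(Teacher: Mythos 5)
Your proposal is correct and follows essentially the same route as the paper's proof: the bias/stochastic split with the bias controlled by $\Vert \E(\wh{\bm\Gamma}^{(\ell)}_\vep)\Vert < C_\vep$ via Assumption~\ref{assum:id}~(iv), entrywise concentration through Theorem~1 of \cite{merlevede2011} with Bonferroni over the $O(n^2 L_T)$ events and $\vert \bar{I}_\ell \vert \ge T(1 - 3T^{-1}b_T)$, Davis--Kahan (Corollary~1 of \citealp{yu15}) for (iii), and for (iv) the same eigen-identity expansion $\wh{\mbf W}^{(\ell)}_x = \wh{\bm\Gamma}^{(\ell)}_x \wh{\mbf W}^{(\ell)}_x (\wh{\mbf M}^{(\ell)}_x)^{-1}$ combined with Weyl-based control of $(\wh{\mbf M}^{(\ell)}_x/n)^{-1}$ and the row bounds (ii) and $\Vert \bm\varphi_i^\top \bm\Gamma_\chi \Vert = O(\sqrt n)$. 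The commutation subtlety you flag concerning $\mbf S_\ell$ and $\mbf M_\chi$ is exactly what the paper absorbs into its $o_p$ remainder, justified as you say by the order-$n$ gaps among the top $r$ eigenvalues.
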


\begin{lem} \hfill
\label{lem:cov} 
\begin{compactenum}
\item[(i)] $n^{-1}\Vert \wh{\bm\Gamma}_x - \bm\Gamma_{\chi}\Vert =
O_p\Big(\sqrt{\frac{\log(n)}{T}} \vee \frac{1}{n}\Big)$.
\item [(ii)] $\max_{1 \le i \le n} n^{-1/2}\Vert \bm\varphi_i^\top(\wh{\bm\Gamma}_x - \bm\Gamma_{\chi}) \Vert =
O_p\l(\sqrt{\frac{\log(n)}{T}} \vee \frac{1}{\sqrt n}\r)$.
\item [(iii)] $\max_{1 \le i \le n} \sqrt n\,\Vert\bm\varphi_i^\top(\wh{\mbf W}_x-\mbf W_{\chi}\mbf S)\Vert
= O_p\l(\sqrt{\frac{\log(n)}{T}} \vee \frac{1}{\sqrt n}\r)$
for some orthonormal $r\times r$ matrix $\mbf S$.
\end{compactenum}
\end{lem}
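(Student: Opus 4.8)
The plan is to prove the three parts in order: (i) and (ii) quantify the deviation of the sample covariance matrix from $\bm\Gamma_\chi$ and then feed into the row-wise eigenvector bound (iii). The whole argument runs in exact parallel to that for Lemma~\ref{lem:block:cov}, with the full-sample $\wh{\bm\Gamma}_x$ (computed over $\{1, \ldots, T\}$) replacing each $\wh{\bm\Gamma}^{(\ell)}_x$ and the maximum over blocks dropped, so I would present only the self-contained full-sample version.

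For (i) I would use $\bm\Gamma_x = \bm\Gamma_\chi + \bm\Gamma_\vep$ to split $\wh{\bm\Gamma}_x - \bm\Gamma_\chi = (\wh{\bm\Gamma}_x - \bm\Gamma_x) + \bm\Gamma_\vep$. The idiosyncratic part contributes $n^{-1}\Vert \bm\Gamma_\vep \Vert = n^{-1}\mu_{\vep, 1} = O(n^{-1})$ by \ref{eq:c2}. For the sampling error I would use $\Vert \wh{\bm\Gamma}_x - \bm\Gamma_x \Vert \le n\,\Vert \wh{\bm\Gamma}_x - \bm\Gamma_x \Vert_{\max}$ together with entrywise concentration: under the sub-exponential tails of Assumption~\ref{assum:tail} and the strong mixing of Assumption~\ref{assum:dep}, the products $x_{it}x_{jt}$ obey a Bernstein-type inequality \citep{merlevede2011}, so a union bound over the $O(n^2)$ entries and Assumption~\ref{assum:nt} (which gives $\log(n) = O(\log T)$) yield $\Vert \wh{\bm\Gamma}_x - \bm\Gamma_x \Vert_{\max} = O_p(\sqrt{\log(n)/T})$, hence $n^{-1}\Vert \wh{\bm\Gamma}_x - \bm\Gamma_x \Vert = O_p(\sqrt{\log(n)/T})$. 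Part (ii) uses the same split: the $i$-th row of $\bm\Gamma_\vep$ satisfies $\Vert \bm\varphi_i^\top\bm\Gamma_\vep \Vert \le \Vert \bm\Gamma_\vep \Vert = O(1)$, contributing $O(n^{-1/2})$ after scaling, while $\Vert \bm\varphi_i^\top(\wh{\bm\Gamma}_x - \bm\Gamma_x) \Vert \le \sqrt n\,\Vert \wh{\bm\Gamma}_x - \bm\Gamma_x \Vert_{\max}$ contributes $O_p(\sqrt{\log(n)/T})$ after the $n^{-1/2}$ scaling, uniformly in $i$.

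For (iii) I would exploit the eigen-identity $\wh{\mbf W}_x = \wh{\bm\Gamma}_x\wh{\mbf W}_x\wh{\mbf M}_x^{-1}$ with $\wh{\mbf M}_x = \diag(\wh\mu_{x, 1}, \ldots, \wh\mu_{x, r})$, and $\mbf W_\chi\mbf S = \bm\Gamma_\chi\mbf W_\chi\mbf M_\chi^{-1}\mbf S$, to obtain the row-wise decomposition
\begin{align}
\bm\varphi_i^\top(\wh{\mbf W}_x - \mbf W_\chi\mbf S)
&= \bm\varphi_i^\top(\wh{\bm\Gamma}_x - \bm\Gamma_\chi)\wh{\mbf W}_x\wh{\mbf M}_x^{-1}
+ \bm\varphi_i^\top\bm\Gamma_\chi(\wh{\mbf W}_x - \mbf W_\chi\mbf S)\wh{\mbf M}_x^{-1} \nn \\
&\quad + \bm\varphi_i^\top\mbf W_\chi(\mbf M_\chi\mbf S - \mbf S\wh{\mbf M}_x)\wh{\mbf M}_x^{-1}. \nn
\end{align}
By Weyl's inequality and (i), $|\wh\mu_{x, r} - \mu_{\chi, r}| \le \Vert \wh{\bm\Gamma}_x - \bm\Gamma_\chi \Vert = o_p(n)$, so $\wh\mu_{x, r} \asymp n$ and $\Vert \wh{\mbf M}_x^{-1} \Vert = O_p(n^{-1})$. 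I would then bound the three terms, each uniformly in $i$ after multiplying by $\sqrt n$: the first via (ii), since $\sqrt n\,\Vert \bm\varphi_i^\top(\wh{\bm\Gamma}_x - \bm\Gamma_\chi) \Vert\,\Vert \wh{\mbf W}_x \Vert\,\Vert \wh{\mbf M}_x^{-1} \Vert = n \cdot O_p(\sqrt{\log(n)/T} \vee n^{-1/2}) \cdot O_p(n^{-1})$; the second using the global Davis--Kahan bound \eqref{eq:davisk} and $\Vert \bm\varphi_i^\top\bm\Gamma_\chi \Vert = O(\sqrt n)$ (from $\bm\Gamma_\chi = \bm\Lambda\bm\Lambda^\top$, $\Vert \bm\Lambda \Vert \asymp \sqrt n$ and $\Vert \bm\lambda_i \Vert = O(1)$); and the third using \eqref{eq:bounded:w} ($\Vert \bm\varphi_i^\top\mbf W_\chi \Vert = O(n^{-1/2})$) together with $\Vert \mbf M_\chi\mbf S - \mbf S\wh{\mbf M}_x \Vert = O_p(n\sqrt{\log(n)/T} \vee 1)$. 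Each term is $O_p(\sqrt{\log(n)/T} \vee n^{-1/2})$, with the first supplying the $n^{-1/2}$ rate and the other two being of the smaller order $\sqrt{\log(n)/T} \vee n^{-1}$.

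The main obstacle is part (iii): the bookkeeping must make the competing scales cancel exactly --- the $\sqrt n$ from $\bm\varphi_i$, the $O(n^{-1/2})$ row norm of $\mbf W_\chi$, the linear divergence $\mu_{\chi, j}, \wh\mu_{x, j} \asymp n$, and the $O_p(n^{-1})$ size of $\wh{\mbf M}_x^{-1}$ --- so that no term exceeds the target. Two points require care: establishing $\wh\mu_{x, r} \asymp n$ with high probability so that $\wh{\mbf M}_x^{-1}$ is well controlled, and the eigenvalue-alignment term $\mbf M_\chi\mbf S - \mbf S\wh{\mbf M}_x$, which I would handle by equating the two expressions for $\mbf W_\chi^\top\wh{\bm\Gamma}_x\wh{\mbf W}_x$ and invoking $\Vert \mbf W_\chi^\top\wh{\mbf W}_x - \mbf S \Vert \le \Vert \wh{\mbf W}_x - \mbf W_\chi\mbf S \Vert$ from \eqref{eq:davisk} to show $\Vert \mbf M_\chi\mbf S - \mbf S\wh{\mbf M}_x \Vert = O_p(n\sqrt{\log(n)/T} \vee 1)$. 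The entrywise concentration underpinning (i)--(ii) is otherwise standard given Assumptions~\ref{assum:tail}--\ref{assum:dep}.
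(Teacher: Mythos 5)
Your proposal is correct and takes essentially the same route as the paper, whose proof of this lemma simply defers to that of Lemma~\ref{lem:block:cov}~(i)--(ii) and (iv): entrywise Bernstein-type concentration with a union bound for (i)--(ii), and for (iii) the eigen-identity $\wh{\mbf W}_x = \wh{\bm\Gamma}_x\wh{\mbf W}_x\wh{\mbf M}_x^{-1}$ combined with Weyl's inequality, the Davis--Kahan bound and parts (i)--(ii). The only difference is harmless bookkeeping: the paper attaches the eigenvalue mismatch to $\bm\varphi_i^\top\bm\Gamma_\chi$ through $\Vert(\wh{\mbf M}_x/n)^{-1} - (\mbf M_\chi/n)^{-1}\Vert$, exploiting that $\mbf S$ is diagonal with $\pm 1$ entries, whereas you attach it to $\bm\varphi_i^\top\mbf W_\chi$ through $\mbf M_\chi\mbf S - \mbf S\wh{\mbf M}_x$, which your identity for $\mbf W_\chi^\top\wh{\bm\Gamma}_x\wh{\mbf W}_x$ correctly controls at the rate $O_p(n\sqrt{\log(n)/T} \vee 1)$ without needing $\mbf S$ to commute with $\mbf M_\chi$, yielding the same final bound.
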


\begin{lem}
\label{lem:block:vep} 
{\it Let $\ell(t)$ denote the index of the block for which $t \in I_{\ell(t)}$,
and $b_T = \log^{1/\beta + \delta} T$ for some $\delta > 0$.
Then, % under Assumptions~\ref{assum:id} through \ref{assum:dep},
$\max_{1 \le t \le T} \vert (\wh{\mbf w}_{x, j}^{\ell(t)})^\top \bm\vep_t \vert = O_p(\log(T)).
$}
\end{lem}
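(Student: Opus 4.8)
The plan is to establish a per-timepoint sub-exponential tail bound for $(\wh{\mbf w}_{x,j}^{\ell(t)})^\top\bm\vep_t$ and then take a union bound over $t = 1,\ldots, T$ (and over the finitely many $j \le \bar r$). The starting observation is that $\wh{\mbf w}_{x,j}^{(\ell)}$ is a unit vector measurable with respect to $\sigma(\{\mbf x_s: s\in\bar I_\ell\})$, the data from which block $\ell$ and its neighbours are omitted. By construction of $\bar I_\ell$, for any $t\in I_{\ell(t)}$ the removal of blocks $\ell(t)-1,\ell(t),\ell(t)+1$ leaves a temporal gap of at least $b_T$ between $t$ and every index in $\bar I_{\ell(t)}$, on both the past and the future side. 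I would exploit exactly this gap to treat $\wh{\mbf w}_{x,j}^{\ell(t)}$ as approximately independent of $\bm\vep_t$, which is the whole purpose of the blockwise construction.

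First I would record the fixed-vector bound: if $\mbf v$ is any deterministic unit vector, Assumption~\ref{assum:tail}~(ii) gives $\Vert\mbf v^\top\bm\vep_t\Vert_{\psi_1} < B_\vep$ and hence $\p(|\mbf v^\top\bm\vep_t| > x) \le 2\exp(-x/B_\vep)$. To transfer this to the random unit vector $\wh{\mbf w}_{x,j}^{\ell(t)}$, I would introduce an independent copy $\{\mbf x_s^\ast: s\in\bar I_{\ell(t)}\}$ having the same joint law as $\{\mbf x_s: s\in\bar I_{\ell(t)}\}$ but drawn independently of $(\mbf f_t,\bm\vep_t)$, and let $U^\ast$ denote the corresponding $j$-th eigenvector. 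Since $U^\ast$ is a unit vector independent of $\bm\vep_t$, conditioning on $U^\ast$ and applying the fixed-vector bound yields $\p(|U^{\ast\top}\bm\vep_t| > x)\le 2\exp(-x/B_\vep)$. The coupling error $|\p(|(\wh{\mbf w}_{x,j}^{\ell(t)})^\top\bm\vep_t| > x) - \p(|U^{\ast\top}\bm\vep_t| > x)|$ is controlled by the strong-mixing dependence between $\sigma(\bm\vep_t)$ and $\sigma(\{\mbf x_s: s\in\bar I_{\ell(t)}\})$, which the two-sided gap of size $b_T$ together with Assumption~\ref{assum:dep} renders of order $\alpha(b_T)$.

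Combining the two pieces gives, for each $t$ and $j\le\bar r$, $\p(|(\wh{\mbf w}_{x,j}^{\ell(t)})^\top\bm\vep_t| > x) \le 2\exp(-x/B_\vep) + C\alpha(b_T)$ for a fixed $C$. Taking $x = C_0\log T$ with $C_0 > B_\vep$ and union-bounding over the $\bar r\, T$ pairs $(t,j)$, the sub-exponential contribution is at most $2\bar r\, T^{1-C_0/B_\vep}\to 0$, while the mixing contribution is at most $C\bar r\, T\,\alpha(b_T)$. Here the choice $b_T = \log^{1/\beta+\delta}T$ is decisive: it gives $\alpha(b_T)\le\exp(-c_\alpha\log^{1+\beta\delta}T)$, which decays faster than any power of $T$, so $T\,\alpha(b_T)\to 0$. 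Both contributions vanish, and the claimed $O_p(\log T)$ bound follows.

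The main obstacle is the coupling step. Because $\wh{\mbf w}_{x,j}^{\ell(t)}$ is computed from the out-of-block covariance, it depends jointly on the far past and the far future of $t$ and so is not a function of a single one-sided tail of the process, whereas Assumption~\ref{assum:dep} provides only the one-sided past-versus-future coefficient $\alpha(k)$. Controlling the dependence between the central coordinate $\bm\vep_t$ and this two-sided, non-product functional therefore requires upgrading the mixing bound to a central-versus-exterior coefficient, which I would handle by decoupling the future side and the past side in turn, each step contributing a term of order $\alpha(b_T)$. The stretched-exponential decay of $\alpha$ is precisely what makes the constant in this two-sided bound immaterial, since any fixed multiple of $\alpha(b_T)$ is annihilated by the union bound over the $T$ time points.
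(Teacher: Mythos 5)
Your overall architecture (a per-$t$ tail bound followed by a union bound at level $x \asymp \log T$, with $T\,\alpha(b_T)\to 0$ secured by $b_T = \log^{1/\beta+\delta}T$) matches the role the blockwise construction plays in the paper, but the central coupling step contains a genuine gap. You assert that the discrepancy $\vert \p(|(\wh{\mbf w}_{x,j}^{\ell(t)})^\top\bm\vep_t| > x) - \p(|U^{\ast\top}\bm\vep_t| > x)\vert$ is ``of order $\alpha(b_T)$''. Strong mixing controls $|\p(A\cap B)-\p(A)\p(B)|$ only for \emph{product} events, with $A$ in the exterior $\sigma$-algebra and $B \in \sigma(\bm\vep_t)$; the event $\{|(\wh{\mbf w}_{x,j}^{\ell(t)})^\top\bm\vep_t| > x\}$ is not of this form, and what your coupling actually requires is total-variation closeness between the joint law of $(\wh{\mbf w}_{x,j}^{\ell(t)}, \bm\vep_t)$ and the product law --- that is, a bound on the $\beta$-mixing coefficient, not on $\alpha$. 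There are standard examples where $\alpha$ is arbitrarily small while $\beta = 1$, and Assumption~\ref{assum:dep} supplies only $\alpha$-mixing. The natural repair within $\alpha$-mixing --- discretising the random unit vector over an $\epsilon$-net of the sphere in $\R^n$ and applying the product-event bound at each net point --- costs a cardinality factor $e^{cn}$; since $n$ may grow like $T^\kappa$ under Assumption~\ref{assum:nt} while $\alpha(b_T) \le \exp(-c_\alpha \log^{1+\beta\delta}T)$ decays only quasi-polynomially in $T$, the product $e^{cn}\alpha(b_T)$ explodes and this route fails. (The two-sided versus one-sided issue you flag at the end is real but secondary, and your proposed sequential decoupling would handle it.)

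The paper circumvents precisely this obstacle by working at the level of second moments rather than event probabilities. It bounds $\max_{1\le t\le T}\E\vert(\wh{\mbf w}_{x,j}^{\ell(t)})^\top\bm\vep_t\vert^2$ by conditioning on $\mc F^{\ell(t)} = \sigma\{(\mbf f_s,\bm\vep_s),\, s\in\bar I_{\ell(t)}\}$: the main term is at most $C_\vep$ by Assumption~\ref{assum:id}~(iv) and the unit norm of the eigenvector, while the remainder $\Vert \E(\bm\vep_t\bm\vep_t^\top \,\vert\, \mc F^{\ell(t)}) - \E(\bm\vep_t\bm\vep_t^\top)\Vert$ is bounded by $n$ times the maximal entrywise deviation, each entry controlled by the $\alpha$-mixing $L_1$-inequality for conditional expectations (Theorem 14.2 of Davidson, 1994, applied to $\vep_{it}\vep_{i't}$ with a fourth-moment norm). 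The factor $n$ is only polynomial in $T$ and is annihilated by the super-polynomial decay of $\alpha(b_T)$; crucially, the supremum over unit vectors is absorbed into the spectral norm of a conditional covariance matrix, so no $\epsilon$-net and no total-variation coupling is ever needed. With $\Var((\wh{\mbf w}_{x,j}^{\ell(t)})^\top\bm\vep_t)$ uniformly bounded, the sub-exponential tail (Assumption~\ref{assum:tail}~(ii) with Proposition 2.7.1 of Vershynin, 2018) and the same Bonferroni step you propose deliver $O_p(\log T)$. If you recast your middle step as a comparison of conditional and unconditional second moments --- which needs only the covariance-inequality form of $\alpha$-mixing --- your argument goes through; as written, the event-level coupling is the step that fails.
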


\subsection{Proof of Proposition~\ref{prop:pca}}
\label{sec:pf:prop:pca}

Recall the definition of $\ell(t)$ in Lemma~\ref{lem:block:vep}.
Note that
\begin{align*}
& \max_{1 \le i \le n} \max_{1 \le t \le T} |\wh\chi_{it}^{\bpca} - \chi_{it}|
\le 
\max_{1 \le i \le n} \max_{1 \le t \le T} \Big\vert \sum_{j=1}^r \wh w^{\ell(t)}_{x, ij} 
(\wh{\mbf w}^{\ell(t)}_{x, j})^\top \mbf x_t - \chi_{it}\Big\vert
\\
& + 
\max_{1 \le i \le n} \max_{1 \le t \le T} \Big\vert \sum_{j=r+1}^{\wh r} \wh w^{\ell(t)}_{x, ij} 
(\wh{\mbf w}^{\ell(t)}_{x, j})^\top \bm\chi_t \Big\vert
+ 
\max_{1 \le i \le n} \max_{1 \le t \le T} \Big\vert \sum_{j=r+1}^{\wh r} \wh w^{\ell(t)}_{x, ij} 
(\wh{\mbf w}^{\ell(t)}_{x, j})^\top \bm\vep_t \Big\vert
\\
& =: I + II + III. 
\end{align*}

From Lemmas~\ref{lem:four} and \ref{lem:block:cov}~(iii)--(iv), 
using the analogous arguments as those adopted 
in the proof of Proposition~\ref{thm:common} in Appendix~\ref{sec:pf:prop:one}.
\begin{align}
I \le&
\max_{1 \le i \le n}\max_{1 \le t \le T} 
\vert \bm\varphi_i^{\top}\wh{\mbf W}^{\ell(t)}_x(\wh {\mbf W}_x^{\ell(t)})^\top\mbf x_t -
\bm\varphi_i^{\top}\mbf W_\chi\mbf S_{\ell(t)}(\wh{\mbf W}_x^{\ell(t)})^\top\mbf x_t\vert
\nn \\ 
& + \max_{1 \le i \le n}\max_{1 \le t \le T} 
\vert \bm\varphi_i^{\top}\mbf W_\chi\mbf S_{\ell(t)}(\wh {\mbf W}_x^{\ell(t)})^\top\mbf x_t -
\bm\varphi_i^{\top}\mbf W_\chi\mbf W_\chi^\top\mbf x_t\vert 
+ \max_{1 \le i \le n}\max_{1 \le t \le T} 
\vert \bm\varphi_i^{\top}\mbf W_\chi\mbf W_\chi^\top\bm\vep_t\vert
\nn \\
%\le & \max_{1 \le i \le n} \Vert\bm\varphi_i^\top
%(\wh{\mbf W}^{\ell(t)}_x - \mbf W_\chi\mbf S_{\ell(t)})\Vert\,
%\Vert \wh{\mbf W}_x^{\ell(t)}\Vert \, \max_{1 \le t \le T} \Vert \mbf x_t\Vert
%+
%\max_{1 \le i \le n} \Vert\bm\varphi_i^\top\mbf W_\chi\Vert \,
%\Vert\wh{\mbf W}_x^{\ell(t)}\mbf S_{\ell(t)}
%- \mbf W_\chi \Vert \, \max_{1 \le t \le T} \Vert \mbf x_t\Vert 
%\nn \\
=&  O_p\l\{\l(\sqrt{\frac{\log(n)}{T}} \vee \frac{1}{\sqrt n}\r)\log(T)\r\}.
\label{eq:block:oracle:err}
\end{align}

Let $\wh{\mbf W}^{\ell(t)}_{x, (r+1):k}=[\wh{\mbf w}^{\ell(t)}_{x, j}, \, r + 1 \le j \le k]$.
Under Assumption~\ref{assum:id}~(i), it follows that
$\mbf W_\chi^\top \bm\Lambda\bm\Lambda^\top \mbf W_\chi = \mbf M_\chi$
and hence $\mbf W_\chi$ may be regarded as the left singular vectors of $\bm\Lambda$.
Then, from the orthogonality of eigenvectors, \ref{eq:c1} and Lemma~\ref{lem:block:cov}~(iii), 
\begin{align}
&\max_{1 \le \ell \le L_T} \Vert (\wh{\mbf W}^{(\ell)}_{x, (r+1):k})^\top \bm\Lambda \Vert 
= \max_{1 \le \ell \le L_T} \Vert (\wh{\mbf W}^{(\ell)}_{x, (r+1):k})^\top \mbf W_\chi \mbf M_\chi^{1/2} \Vert
\nn 
\\
&\le \max_{1 \le \ell \le L_T} \Vert (\wh{\mbf W}^{(\ell)}_{x, (r+1):k})^\top (\mbf W_\chi\mbf S_\ell - \wh{\mbf W}_x^{(\ell)}) \Vert \; \Vert \mbf M_\chi^{1/2} \Vert
= O_p\l(\sqrt{\frac{n\log(n)}{T}} \vee \frac{1}{\sqrt n}\r)
\label{eq:lem:lamb}
\end{align}
for any fixed $k \ge r+1$.
Together with the condition \eqref{eq:prop:pca} and Lemma~\ref{lem:four}, 
it yields
\begin{align*}
II &=& 
O_p\l\{n^{-\alpha/2} \cdot \Big(\sqrt{\frac{n\log(n)}{T}} \vee \frac{1}{\sqrt n}\Big) \cdot \log(T) \r\}
= O_p\l\{\Big(\sqrt{\frac{n^{1 - \alpha}\log(n)}{T}} \vee \sqrt{\frac{1}{n^{(1 + \alpha)}}}\Big)\log(T)\r\}.
\end{align*}
Finally, from Lemma~\ref{lem:block:vep},
$III = O_p(n^{-\alpha/2}\log(T))$,
and the conclusion follows.
% i.e., estimation error due to the over-estimation of $r$ is comparable
% to that of the PC estimator obtained with $\wh r = r$.

\subsection{Proof of Proposition~\ref{prop:scaled}}
\label{sec:pf:prop:scaled}

Recall the definition of $\ell(t)$ in Lemma~\ref{lem:block:vep}.
Note that
\begin{align*}
& \max_{1 \le i \le n} \; \max_{1 \le t \le T} \vert \wh\chi_{it}^{\bsca} - \chi_{it} \vert
\le
\max_{1 \le i \le n} \; \max_{1 \le t \le T} 
\Big\vert \sum_{j=1}^r \wh w^{\sca, \ell(t)}_{x, ij} (\wh{\mbf w}^{\sca, \ell(t)}_{x, j})^\top 
\mbf x_t - \chi_{it} \Big\vert
\\
& + \max_{1 \le i \le n} \; \max_{1 \le t \le T} \Big\vert  
\sum_{j=r+1}^{\wh r} \wh w^{\sca, \ell(t)}_{x, ij} (\wh{\mbf w}^{\sca, \ell(t)}_{x, j})^\top \bm\chi_t
\Big\vert
+\max_{1 \le i \le n} \; \max_{1 \le t \le T} \Big\vert
\sum_{j=r+1}^{\wh r} \wh w^{\sca, \ell(t)}_{x, ij} (\wh{\mbf w}^{\sca,\ell(t)}_{x, j})^\top
\bm\vep_t \Big\vert
\\
&=: I + II + III.
\end{align*}
Since scaling does not alter the $r$ leading eigenvectors with probability tending to one,
thanks to the arguments leading to \eqref{eq:bounded:ew} and Lemma~\ref{lem:block:cov}, 
we derive that
$I = O_p\{(\sqrt{\log(n)/T} \vee 1/\sqrt n)\log(T)\}$ as in \eqref{eq:block:oracle:err}.
Next, due to the orthogonality of $\wh{\mbf w}^{\sca, \ell(t)}_{x, j}, \, j \le \wh r$,
\begin{align*}
II &= \max_{1 \le i \le n} \; \max_{1 \le t \le T} 
\Big\vert \sum_{j = r+1}^{\wh r} \wh w^{\sca, \ell(t)}_{x, ij} (\wh{\mbf w}^{\sca, \ell(t)}_{x, j})^\top 
\{\bm\chi_t - \wh{\mbf W}^{\ell(t)}_{x, 1:r}(\wh{\mbf W}^{\ell(t)}_{x, 1:r})^\top\mbf x_t\} \Big\vert
\\
&=  O_p\l\{\Big(\sqrt{\frac{\log(n)}{T}} \vee \frac{1}{\sqrt n}\Big)\log(T)\r\}
\end{align*}
from the bound on $I$ and the uniform boundedness of $|\wh w^{\sca, \ell(t)}_{x, ij}|$.
Finally, Lemma~\ref{lem:block:vep} and the definition of $|\wh w^{\sca, \ell(t)}_{x, ij}|$ yield 
$III = O_p(\log(T)/\sqrt n)$, which concludes the proof.

\clearpage

\numberwithin{equation}{section}
\renewcommand*{\thelem}{B.\arabic{lem}}
\renewcommand{\thefigure}{C.\arabic{figure}}

\setcounter{figure}{0}

\section{Further proofs}

\subsection{Proof of the results in Section~\ref{sec:prem}}
\label{sec:pf:prem}

\begin{proof}[Proof of Lemma~\ref{lem:four}]
Assumptions~\ref{assum:id}~(iii) and \ref{assum:tail} and Proposition~2.7.1 in \citet{vershynin2018} yield
\begin{align*}
\p\Big(\max_{1 \le i \le n} \max_{1 \le t \le T} |\chi_{it}| > C\log(T)\Big)
\le 
\p\Big(r\bar{\lambda}\max_{1 \le j \le r}\max_{1 \le t \le T} |f_{jt}| > C\log(T)\Big)
\le 2rT \exp(-C\log(T)/B_f) \to 0
\end{align*}
for some fixed $C > B_f$.
Similarly, from Assumptions~\ref{assum:nt}~and~\ref{assum:tail}, there exists some fixed 
$C^\prime > B_\vep(\kappa + 1)$
such that
\begin{align*}
\p\Big(\max_{1 \le i \le n} \max_{1 \le t \le T} |\vep_{it}| > C^\prime\log(T)\Big) \le 
2nT \exp(-C^\prime\log(T)/B_\vep) \to 0.
\end{align*}
From the above, the third statement follows.
\end{proof}

\begin{proof}[Proof of Lemma~\ref{lem:block:cov}]
For some $\ell \le L_T$,
\begin{align*}
& \max_{1 \le i, i' \le n} \l\vert \frac{1}{|\bar{I}_\ell|}\sum_{t \in \bar{I}_\ell} x_{it}x_{i't} - 
\E\Big(\frac{1}{|\bar{I}_\ell|}\sum_{t \in \bar{I}_\ell} x_{it}x_{i't}\Big)\r\vert
\le
\max_{1 \le j, j' \le r} r^2\bar{\lambda}^2 \l\vert \frac{1}{|\bar{I}_\ell|}\sum_{t \in \bar{I}_\ell} f_{jt}f_{j't} - 
\E\Big(\frac{1}{|\bar{I}_\ell|}\sum_{t \in \bar{I}_\ell} f_{jt}f_{j't}\Big)\r\vert
\\
& +
\max_{1 \le i, i' \le n} \l\vert \frac{1}{|\bar{I}_\ell|}\sum_{t \in \bar{I}_\ell} \vep_{it}\vep_{i't} - 
\E\Big(\frac{1}{|\bar{I}_\ell|}\sum_{t \in \bar{I}_\ell} \vep_{it}\vep_{i't}\Big)\r\vert
+
2\max_{\substack{1 \le j \le r \\ 1 \le i \le n}} r\bar{\lambda}
\l\vert \frac{1}{|\bar{I}_\ell|}\sum_{t \in \bar{I}_\ell} f_{jt}\vep_{it}\r\vert
=: I + II + III.
\end{align*}
Under Assumptions~\ref{assum:nt}~and~\ref{assum:tail}~(i), Lemma~A.2 of \cite{fan2011}
indicates that there exist some fixed $B > 0$ such that
$\p(|\vep_{it}\vep_{i't}| > u) \le \exp[1 - (u/B)^{1/2}]$ for any $u > 0$.
Then by Theorem~1 of \cite{merlevede2011}, Bonferroni correction
and that $|\bar{I}_\ell| \ge T(1 - 3T^{-1}b_T)$, we yield
\begin{align*}
\p\l(II \ge C\sqrt{\frac{\log(n)}{T}}\r) &\le 
n^2T\l\{ T\exp\l[-\frac{(C^2T\log(n))^{\gamma/2}}{C_1}\r]
+ \exp\l[-\frac{C^2T\log(n)}{C_2(1 + C_3 T)}\r] \r.
\\
& \l. + \exp\l[-\frac{C^2\log(n)}{C_4}
\exp\Big(\frac{(C^2T\log(n))^{\gamma(1 - \gamma)/2}}{C_5\log^{\gamma}T}\Big)\r]
\r\} = o\l(\frac{1}{T}\r)
\end{align*}
for sufficiently large but fixed $C > 0$,
where $\gamma = (\beta^{-1} + 2)^{-1} \in (0, 1)$ is defined with $\beta$ in Assumption~\ref{assum:dep}, 
and  $C_k, \, 1 \le k \le 5$ are fixed constants.
We can similarly show that $I, III = O_p(\sqrt{\log(n)/T})$ and, moreover,
the bounds for each $\ell$ hold with probability tending to one at the rate $o(T^{-1})$,
and therefore the uniform bound over $\ell = 1, \ldots, L_T$ follows. 
Together with Assumption~\ref{assum:id}~(iv), we have
\begin{align*}
\max_{1 \le \ell \le L_T} \frac 1 n\Vert \wh{\bm\Gamma}_x^{(\ell)} - \bm\Gamma_{\chi}\Vert \le
\max_{1 \le \ell \le L_T} \frac 1 n\Vert \wh{\bm\Gamma}_x^{(\ell)} - \E(\wh{\bm\Gamma}_{x}^{(\ell)}) \Vert_F  +
\max_{1 \le \ell \le L_T} \frac 1 n\Vert \E(\wh{\bm\Gamma}_\vep^{(\ell)})\Vert 
= O_p\l(\sqrt{\frac{\log(n)}{T}} \vee \frac{1}{n}\r),
\end{align*}
where $\wh{\bm\Gamma}_\vep^{(\ell)}$ is defined analogously as
$\wh{\bm\Gamma}_x^{(\ell)}$.
For (ii), we now deal with an $n$-dimensional vector such that
\begin{align*}
\frac{1}{\sqrt n} \Vert \bm\varphi_i^\top(\wh{\bm\Gamma}^{(\ell)}_x - \bm\Gamma_\chi) \Vert
\le 
\frac{1}{\sqrt n} \Vert \bm\varphi_i^\top(\wh{\bm\Gamma}^{(\ell)}_x - \bm\Gamma_x) \Vert
+ \frac{1}{\sqrt n} \Vert \bm\Gamma_\vep \Vert
= O_p\l(\sqrt{\frac{\log(n)}{T}} \vee \frac{1}{\sqrt n} \r).
\end{align*}
Part (iii) is proved using part (i) and Davis-Kahan theorem as in \eqref{eq:davisk}.
For (iv), first note that
as a consequence of (i) and Weyl's inequality, 
$\wh\mu_{x, j}^{(\ell)}$, the $j$-th largest eigenvalue of $\wh{\bm\Gamma}_x^{(\ell)}$, satisfy
\begin{align*}
\frac 1 n|\wh\mu_{x, j}^{(\ell)}-\mu_{\chi, j}|\leq \frac 1 n
\Vert \wh{\bm\Gamma}_x^{(\ell)}-\bm\Gamma_{\chi}\Vert 
= O_p\l(\sqrt{\frac{\log(n)}{T}} \vee \frac{1}{n}\r), \quad j=1, \ldots, r.
\end{align*}
Then it follows that $\wh{\mu}_{x, r}^{(\ell)}/n \ge \underline c_r + O_p(\sqrt{\frac{\log n}{T}} \vee \frac{1}{n})$,
and therefore
\begin{align}
\label{eq:M:one}
\Big\Vert\Big(\frac{\mbf M_{\chi}}{n}\Big)^{-1}\Big\Vert = \frac{n}{\mu_{\chi, r}} = O(1),
\qquad\Big\Vert\Big(\frac{\wh{\mbf M}_{x}^{(\ell)}}{n}\Big)^{-1}\Big\Vert = \frac{n}{\wh{\mu}_{x, r}} = O_p(1),
\end{align}
for $\wh{\mbf M}_{x}^{(\ell)} = \diag(\wh{\mu}_{x, 1}^{(\ell)}, \ldots, \wh{\mu}_{x, r}^{(\ell)})$.
Besides,
\begin{align}
& \Big\Vert\Big(\frac{\wh{\mbf M}_{x}^{(\ell)}}{n}\Big)^{-1 } - 
\Big(\frac{\mbf M_{\chi}}{n}\Big)^{-1}\Big\Vert 
\le \sqrt{\sum_{j=1}^r\Big(\frac{n}{\wh{\mu}_{x, j}^{(\ell)}}-\frac{n}{\mu_{\chi, j}}\Big)^2} 
\leq \sum_{j=1}^r n\Big|\frac{\wh{\mu}_{x, j}^{(\ell)}-\mu_{\chi, j}}{\wh{\mu}_{x, j}^{(\ell)}\mu_{\chi, j}}\Big| 
\nn \\
\le& \frac{r\max_{1 \le j \le r}|\wh{\mu}_{x, j}^{(\ell)}-\mu_{\chi, j}|}
{n\underline c_r^2+O_p\Big(n\sqrt{\frac{\log(n)}{T}} \vee 1\Big)}
= O_p\l(\sqrt{\frac{\log(n)}{T}} \vee \frac{1}{n}\r). \label{eq:M:two}
\end{align}
Using (ii)--(iii), \eqref{eq:M:one}--\eqref{eq:M:two}, and that
$\Vert\mbf W_{\chi}\mbf S\Vert=1$ and $\Vert\bm\varphi_i^{\top}\bm\Gamma_{\chi}\Vert=O(\sqrt n)$, 
we yield
\begin{align}
&\sqrt n\Vert\bm\varphi_i^\top(\wh{\mbf W}_{x}^{(\ell)}- \mbf W_{\chi}\mbf S)\Vert=
\frac 1 {\sqrt n}\Big\Vert
 {\bm\varphi_i^\top} \Big\{\wh{\bm\Gamma}_{x}^{(\ell)}\wh{\mbf W}_{x}^{(\ell)}\Big(\frac{\wh{\mbf M}_{x}^{(\ell)}}{n}\Big)^{-1}
-\bm\Gamma_{\chi}\mbf W_{\chi}\mbf S\Big(\frac{\mbf M_{\chi}}{n}\Big)^{-1}\Big\}
\Big\Vert\nn\\
\leq& \frac 1 {\sqrt n}\Vert
\bm\varphi_i^\top\big(\wh{\bm\Gamma}_{x}^{(\ell)}-\bm\Gamma_{\chi}\big)\Vert\,
\Big\Vert\Big(\frac{\mbf M_{\chi}}{n}\Big)^{-1}\Big\Vert+
\frac 1 {\sqrt n}\Vert
\bm\varphi_i^\top\bm\Gamma_{\chi}\Vert\,\Big\Vert\Big(\frac{\wh{\mbf M}_{x}^{(\ell)}}{n}\Big)^{-1}-\Big(\frac{\mbf M_{\chi}}{n}\Big)^{-1}\Big\Vert\nn\\ 
+ &\frac 1 {\sqrt n}\Vert
\bm\varphi_i^\top\bm\Gamma_{\chi}\Vert\,
\Big\Vert\Big(\frac{\mbf M_{\chi}}{n}\Big)^{-1}\Big\Vert\,\Vert
\wh{\mbf W}_{x}^{(\ell)}-\mbf W_{\chi}\mbf S\big\Vert + o_p\l( \sqrt{\frac{\log(n)}{T}} \vee \frac 1 {\sqrt n}\r)
= O_p\l( \sqrt{\frac{\log(n)}{T}} \vee \frac{1}{\sqrt n}\r).\nn
\end{align} 
\end{proof}

\begin{proof}[Proof of Lemma~\ref{lem:cov}]
The proof follows the same arguments as that of Lemma~\ref{lem:block:cov}~(i)--(ii) and (iv).
\end{proof}

\begin{proof}[Proof of Lemma~\ref{lem:block:vep}]
Let $\mc F^{(\ell)} = \sigma\{(\mbf f_t, \bm\vep_t), \, t \in \bar{I}_\ell\}$,
the $\sigma$-algebra generated by $(\mbf f_t, \bm\vep_t), \, t \in \bar{I}_\ell$.
Note that 
\begin{align*}
& \max_{1 \le t \le T} \E\vert (\wh{\mbf w}_{x, j}^{\ell(t)})^\top \bm\vep_t \vert^2
= \max_{1 \le t \le T} \E\l[ (\wh{\mbf w}_{x, j}^{\ell(t)})^\top \bm\vep_t \bm\vep_t^\top \wh{\mbf w}_{x, j}^{\ell(t)} \r]
\le \E\l[ \max_{1 \le t \le T} (\wh{\mbf w}_{x, j}^{\ell(t)})^\top \E\big(\bm\vep_t \bm\vep_t^\top\big) 
\wh{\mbf w}_{x, j}^{\ell(t)} \r]
\\
& + \max_{1 \le t \le T} \E\l\vert (\wh{\mbf w}_{x, j}^{\ell(t)})^\top 
\Big[ \E\big(\bm\vep_t \bm\vep_t^\top | \mc F^{\ell(t)}\big) - 
\E\big(\bm\vep_t \bm\vep_t^\top\big) \Big]
\wh{\mbf w}_{x, j}^{\ell(t)} \r\vert
:= I + II.
\end{align*}
From Assumption~\ref{assum:id}~(iv) and the normalisation of $\wh{\mbf w}_{x, j}^{\ell(t)}$, 
we have $I < C_\vep$.
Next, 
\begin{align*}
\vert II \vert & \le 
\max_{1 \le t \le T} \Big\Vert  \E\big(\bm\vep_t \bm\vep_t^\top | \mc F^{\ell(t)}\big) 
- \E\big(\bm\vep_t \bm\vep_t^\top\big) \Big\Vert
\le
n \max_{1 \le i, i' \le n} \max_{1 \le t \le T} \l\vert \E(\vep_{it}\vep_{i't} | \mc F^{\ell(t)}) 
- \E(\vep_{it}\vep_{i't}) \r\vert
\\
& \le  6 n \exp\l[ - \frac{c_\alpha}{2}\Big(\log^{1/\beta + \delta} T \Big)^\beta \r] \cdot
\l[\max_{1 \le i \le n}\max_{1 \le t \le T} \E(\vep_{it}^4) \r]^{1/2} \to 0
\end{align*}
as $n, T \to \infty$ under Assumption~\ref{assum:nt},
where the second inequality follows from Theorem 14.2 of \cite{davidson1994},
Assumptions~\ref{assum:tail} (ii) and \ref{assum:dep}
and that $\min_{t \in I_\ell} \min_{u \in \bar{I}_\ell} |t - u| \ge b_T = \log^{1/\beta + \delta} T$.
In other words, $\Var((\wh{\mbf w}^{\ell(t)}_{x, j})^\top\bm\vep_t)$
is bounded uniformly in $t$ for large $T$.
This, together with Assumption~\ref{assum:tail}~(ii) 
and Proposition 2.7.1 of \citet{vershynin2018},
completes the proof.
\end{proof}

\subsection{Proof of Proposition~\ref{thm:common}}
\label{sec:pf:prop:one}

Note that 
$\wh{\chi}_{it}^{\pca} = \bm\varphi_i^{\top}\wh{\mbf W}_x\wh {\mbf W}_x^\top\mbf x_t$ and
$\chi_{it} 
= \bm\varphi_i^{\top} \mbf W_\chi \mbf W_\chi^\top\bm \chi_t$.
Hence,
\begin{align}
\max_{1 \le i \le n}\max_{1 \le t \le T} \vert \wh{\chi}_{it}^{\pca} - \chi_{it} \vert 
\le &
\max_{1 \le i \le n}\max_{1 \le t \le T} 
\vert \bm\varphi_i^{\top}\wh{\mbf W}_x\wh {\mbf W}_x^\top\mbf x_t - 
\bm\varphi_i^{\top}\mbf W_\chi\mbf W_\chi^\top\mbf x_t\vert 
\nn \\
&
+ \max_{1 \le i \le n}\max_{1 \le t \le T} 
\vert \bm\varphi_i^{\top}\mbf W_\chi\mbf W_\chi^\top\bm\vep_t\vert
= I + II.
\label{eq:chi1}
\end{align}
For $I$, we have 
\begin{align}
I \le &
\max_{1 \le i \le n}\max_{1 \le t \le T} 
\vert \bm\varphi_i^{\top}\wh{\mbf W}_x\wh {\mbf W}_x^\top\mbf x_t -
\bm\varphi_i^{\top}\mbf W_\chi\mbf S\wh{\mbf W}_x^\top\mbf x_t\vert
+ 
\max_{1 \le i \le n}\max_{1 \le t \le T} 
\vert \bm\varphi_i^{\top}\mbf W_\chi\mbf S\wh {\mbf W}_x^\top\mbf x_t -
\bm\varphi_i^{\top}\mbf W_\chi\mbf W_\chi^\top\mbf x_t\vert 
\nn \\
\le & \max_{1 \le i \le n} \Vert\bm\varphi_i^\top(\wh{\mbf W}_x - \mbf W_\chi\mbf S)\Vert\,
\Vert \wh{\mbf W}_x\Vert \, \max_{1 \le t \le T} \Vert \mbf x_t\Vert
+
\max_{1 \le i \le n} \Vert\bm\varphi_i^\top\mbf W_\chi\Vert \,
\Vert\wh{\mbf W}_x\mbf S-\mbf W_\chi \Vert \, \max_{1 \le t \le T} \Vert \mbf x_t\Vert 
\nn \\
= &  O_p\l\{\l(\sqrt{\frac{\log(n)}{T}} \vee \frac{1}{\sqrt n}\r)\log(T)\r\},
\label{eq:chi3}
\end{align}
from \eqref{eq:davisk}--\eqref{eq:bounded:w}, Lemma~\ref{lem:cov}~(iii) and~\ref{lem:four}. As for $II$, due to normalisation of the eigenvectors, we invoke Assumption \ref{assum:id} (iv):
\begin{align}
\nn
\E(\Vert{\mbf W}_{\chi}^{\top}\bm\vep_t\Vert^2) = \sum_{j=1}^r \E\{ (\mbf w_{\chi, j}^\top \bm\vep_t)^2 \} 
= \sum_{j=1}^r\sum_{i, i'=1}^n w_{\chi, ij} w_{\chi,i'j}\E(  \vep_{it}\vep_{i't} ) < rC_\vep.
\end{align}
Then, by Proposition~2.7.1 of \cite{vershynin2018} and Bonferroni correction,
$\max_{1 \le t \le T} \Vert{\mbf W}_{\chi}^{\top}\bm\vep_t\Vert = O_p(\log(T))$,
and using \eqref{eq:bounded:w} yields
\begin{align}
\max_{1 \le i \le n} \max_{1 \le t \le T} \vert \bm\varphi_i^{\top}\mbf W_\chi\mbf W_\chi^\top\bm\vep_t\vert
\le 
\max_{1 \le i \le n} \Vert \bm\varphi_i^{\top}\mbf W_\chi\Vert \,
\max_{1 \le t \le T}\Vert{\mbf W}_{\chi}^{\top}\bm\vep_t\Vert = O_p\l(\frac{\log(T)}{\sqrt n}\r).
\label{eq:chi5}
\end{align}
Substituting \eqref{eq:chi3} and \eqref{eq:chi5} into \eqref{eq:chi1} completes the proof.
\hfill $\Box$

\subsection{Proof of Proposition~\ref{thm:two}}
\label{sec:pf:capping}

Note that
\begin{align*}
\max_{1 \le i \le n} \; \max_{1 \le t \le T} \vert \wh\chi_{it}^{\capp} - \chi_{it} \vert
\le &
\max_{1 \le i \le n} \; \max_{1 \le t \le T} 
\Big\vert \sum_{j=1}^r \wh w^{\capp}_{x, ij} (\wh{\mbf w}^{\capp}_{x, j})^\top 
\mbf x_t - \chi_{it} \Big\vert
\\
& + \max_{1 \le i \le n} \; \max_{1 \le t \le T} \Big\vert  
\sum_{j=r+1}^{\wh r} \wh w^{\capp}_{x, ij} (\wh{\mbf w}^{\capp}_{x, j})^\top \mbf x_t\Big\vert
=: I + II.
\end{align*}
Since capping does not alter the $r$ leading eigenvectors with probability converging to one
thanks to Equation \eqref{eq:bounded:ew},
we derive that
$I = O_p\{(\sqrt{\log(n)/T} \vee 1/\sqrt n)\log(T)\}$ as in the proof of Proposition \ref{thm:common} 
(see Appendix~\ref{sec:pf:prop:one}).
Next, 
\begin{align*}
II \le \max_{1 \le t \le T} 
\sum_{j = r+1}^{\wh r} \max_{1 \le i \le n} |\wh w^{\capp}_{x, ij}| \big\Vert (\wh{\mbf w}^{\capp}_{x, j})^\top \mbf x_t \big\Vert
=  O_p\l(\frac{1}{\sqrt n} \cdot \sqrt{n}\log(T)\r) = O_p(\log(T))
\end{align*}
from the uniform boundedness of $|\wh w^{\capp}_{x, ij}|$ and Lemma \ref{lem:four}.
\hfill $\Box$

\clearpage

\section{Additional simulation results}
\label{sec:sim:add}

\subsection{Estimation of $r$}
\label{sec:est:r}

Figures \ref{fig:sim:r:s2}--\ref{fig:sim:r:s9} plot the estimated number of factors
returned by \eqref{eq:r:bn} (`BN') and \eqref{eq:r:ah} (`AH')
when applied to $1000$ realisations generated under Model 2 
with varying $\varrho$, $T$, $n$ and $\phi$.

Under Model 2, the presence of $5$ additional weak factors 
leads the information criterion in \eqref{eq:r:bn} to 
frequently return $\wh r = 10$ or even larger.
The eigenvalue ratio estimator \eqref{eq:r:ah} 
also exhibits such tendency for small $n$ and large $T$,
and as the support of the leading eigenvectors of $\bm\Gamma_v$
under Model~2 becomes denser (increasing $\varrho$).

\begin{figure}[htbp]
\centering
\includegraphics[width=1\textwidth]{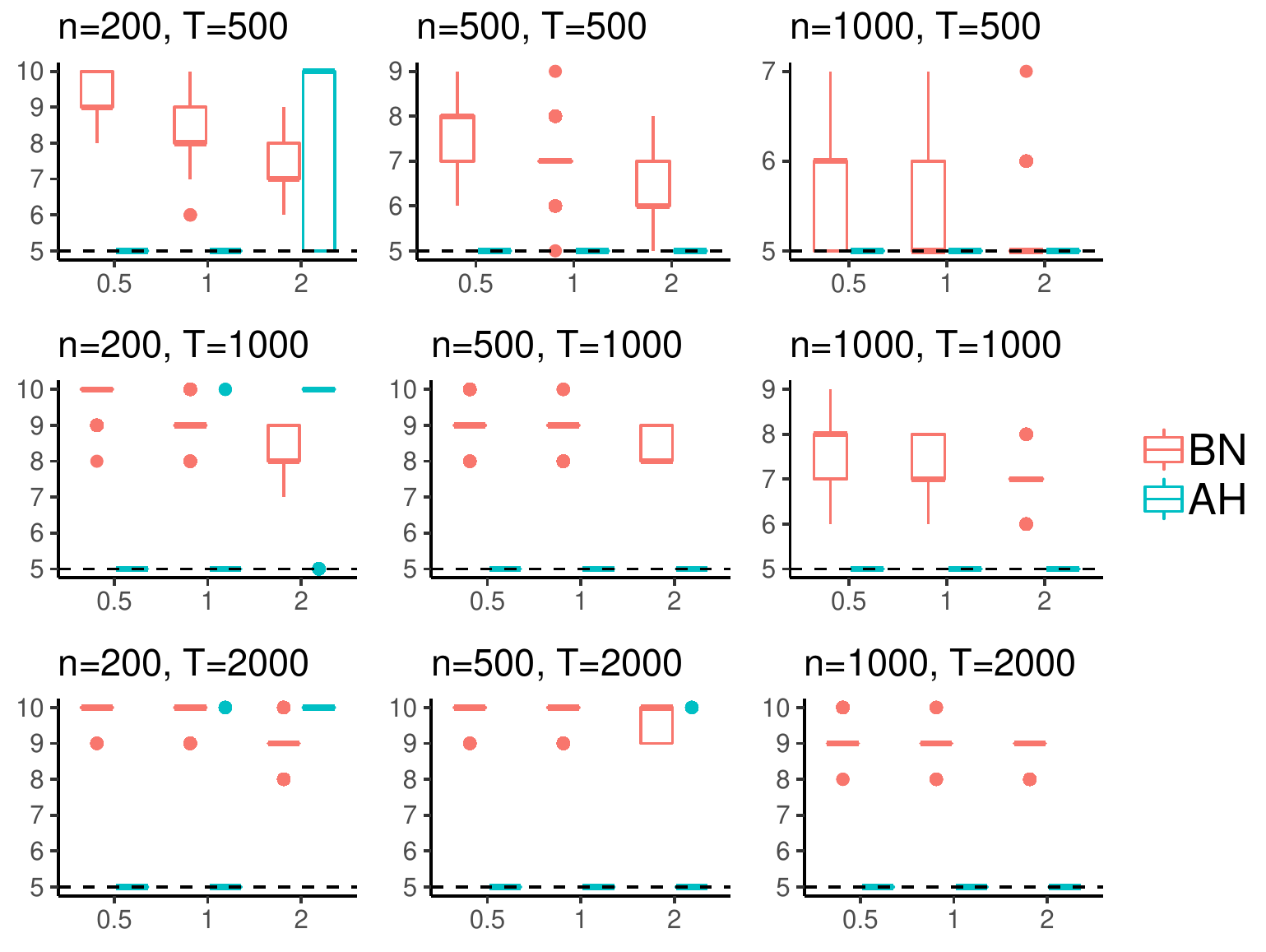}
\caption{Box plots of $\wh r$ returned by factor number estimators proposed by
\cite{baing02} (BN) and \cite{ahn2013} (AH) over $1000$ realisations 
generated under Model 2 with $\varrho = 0.2$,
$T \in \{500, 1000, 2000\}$ (top to bottom),
$n \in \{200, 500, 1000\}$ (left to right)
and $\phi \in \{0.5, 1, 2\}$ (left to right within each plot, controls the noise-to-signal ratio);
horizontal broken lines indicate the true factor number $r = 5$.}
\label{fig:sim:r:s2}
\end{figure}

\begin{figure}[htbp]
\centering
\includegraphics[width=1\textwidth]{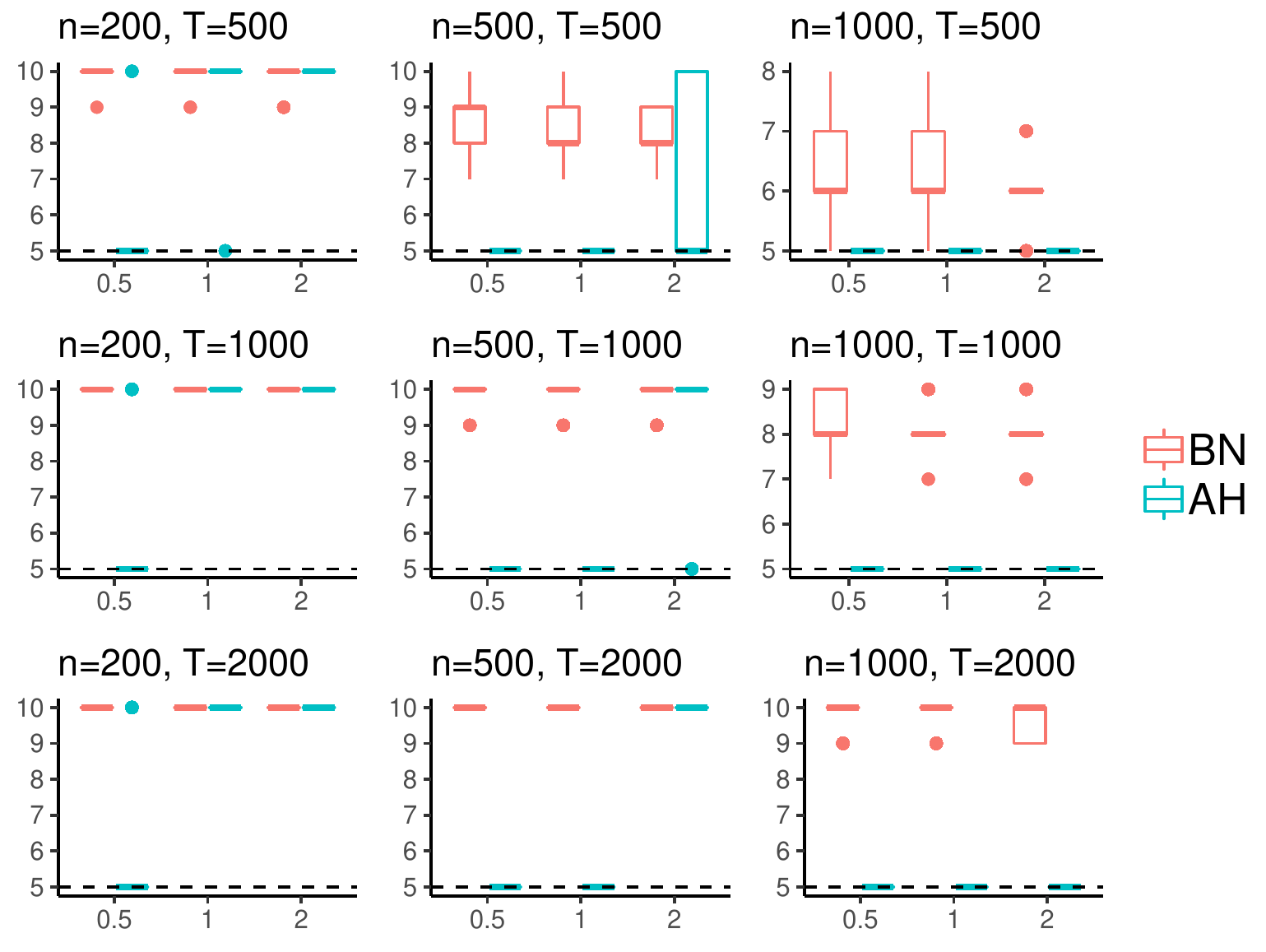}
\caption{Box plots of $\wh r$ returned by factor number estimators proposed by
\cite{baing02} (BN) and \cite{ahn2013} (AH) over $1000$ realisations 
generated under Model 2 with $\varrho = 0.5$,
$T \in \{500, 1000, 2000\}$ (top to bottom),
$n \in \{200, 500, 1000\}$ (left to right)
and $\phi \in \{0.5, 1, 2\}$ (left to right within each plot, controls the noise-to-signal ratio);
horizontal broken lines indicate the true factor number $r = 5$.}
\label{fig:sim:r:s5}
\end{figure}

\begin{figure}[htbp]
\centering
\includegraphics[width=1\textwidth]{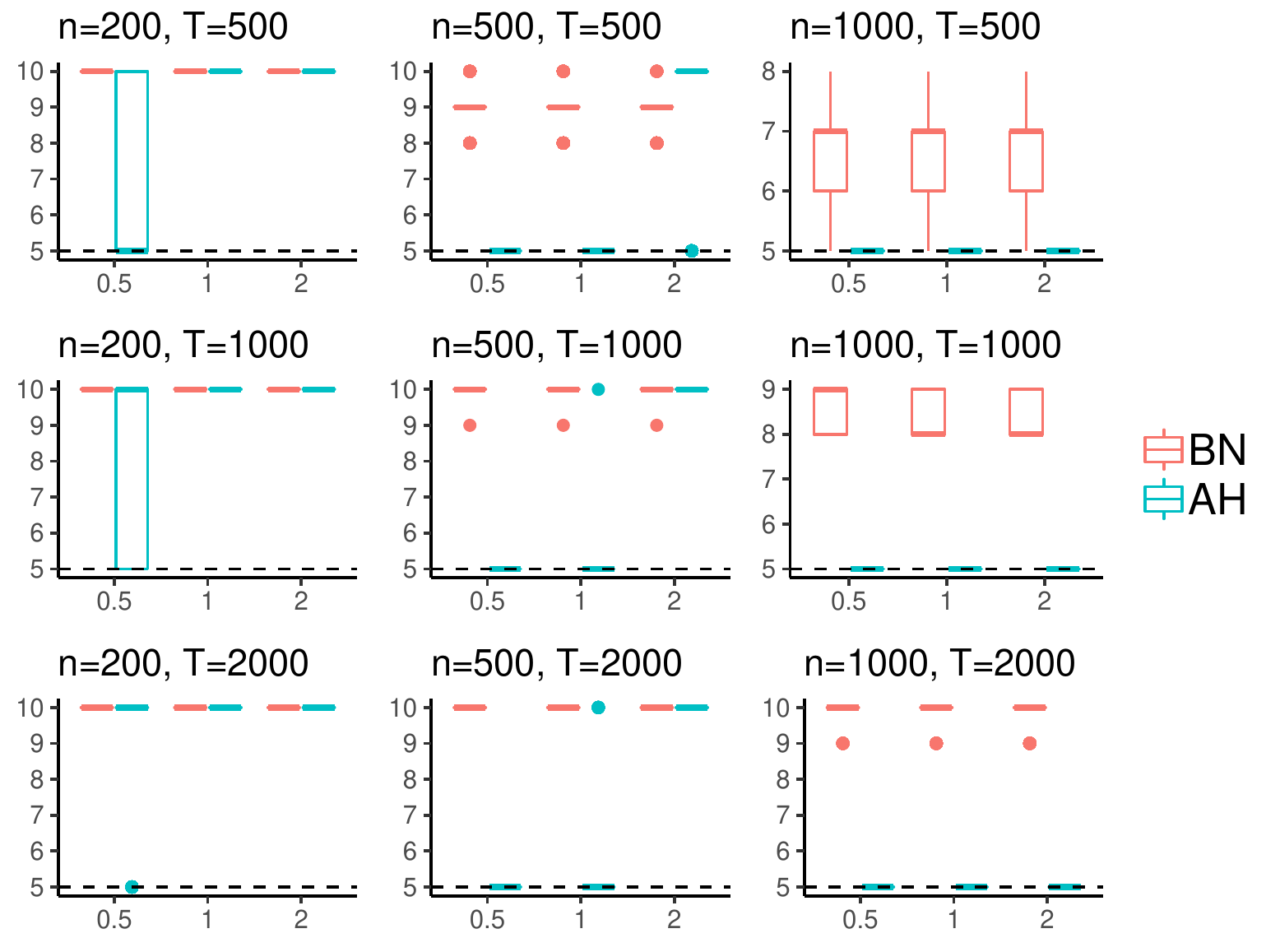}
\caption{Box plots of $\wh r$ returned by factor number estimators proposed by
\cite{baing02} (BN) and \cite{ahn2013} (AH) over $1000$ realisations 
generated under Model 2 with $\varrho = 0.9$,
$T \in \{500, 1000, 2000\}$ (top to bottom),
$n \in \{200, 500, 1000\}$ (left to right)
and $\phi \in \{0.5, 1, 2\}$ (left to right within each plot, controls the noise-to-signal ratio);
horizontal broken lines indicate the true factor number $r = 5$.}
\label{fig:sim:r:s9}
\end{figure}

\clearpage

\subsection{Empirical eigenvectors after scaling}

Figures \ref{fig:sim:cs:s2}--\ref{fig:sim:cs:s9} plot the behaviour of 
empirical eigenvectors after scaling,
for $\wh{\mbf w}_{x, j}, \, 2 \le j \le r$ and $\wh{\mbf w}_{x, j}, \, r + 1 \le j \le \wh r$ separately,
over $1000$ realisations generated under Model~2 with varying $\varrho$, $T$, $n$ and $\phi$.

\begin{figure}[htbp]
\includegraphics[width=1\textwidth]{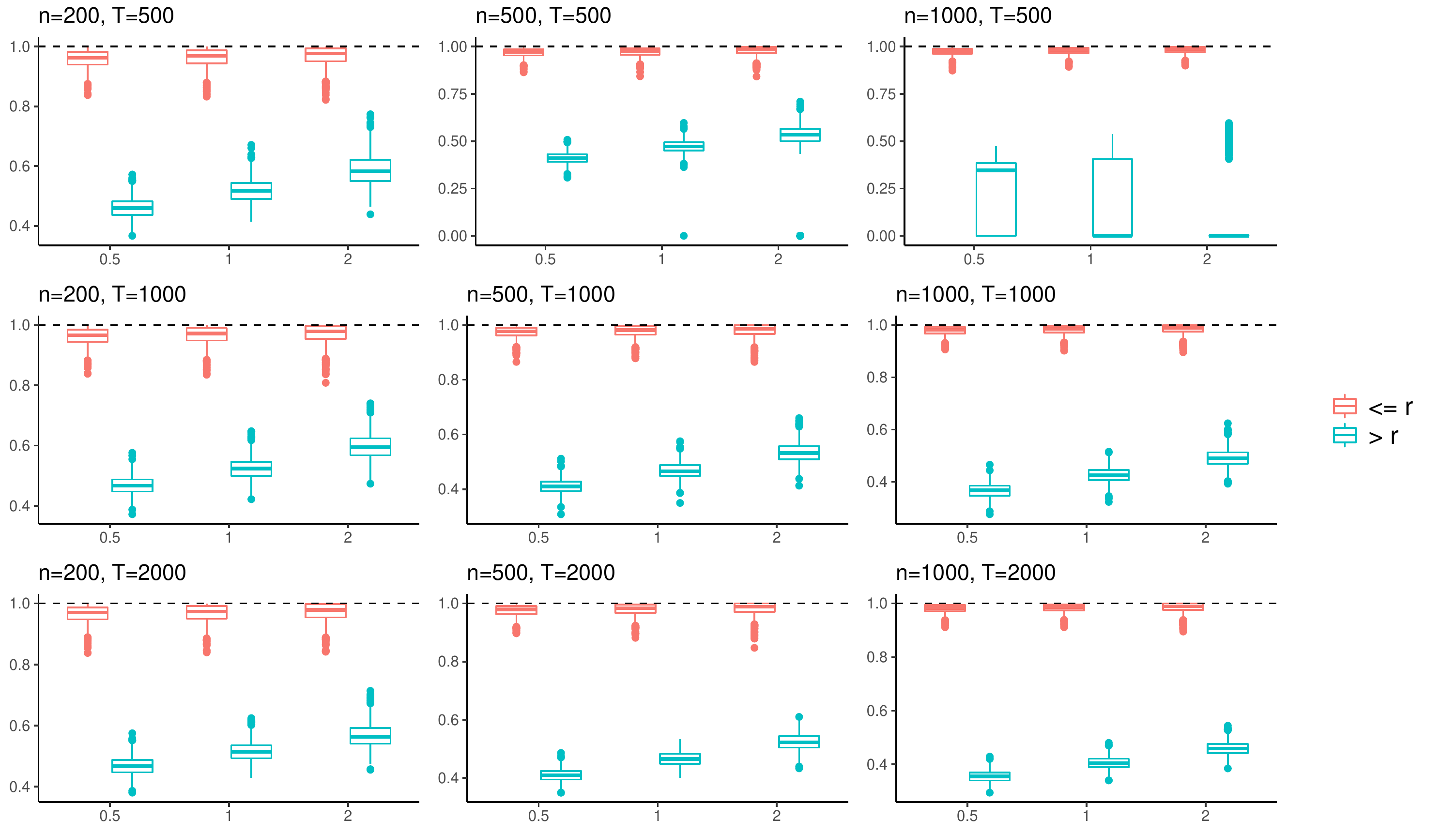}
\caption{Box plots of
$\Vert\wh{\mbf w}^{\sca}_{x, j}\Vert$ averaged for $2 \le j \le r$ (`$\le r$')
against that averaged for $r + 1 \le j \le \wh r$ (`$> r$')
averaged $1000$ realisations 
generated under Model 2 with $\varrho = 0.2$,
with $n \in \{200, 500, 1000\}$ (left to right), $T \in \{500, 1000, 2000\}$ (top to bottom)
and $\phi \in \{0.5, 1, 2\}$  (left to right within each plot).}
\label{fig:sim:cs:s2}
\end{figure}

\begin{figure}[htbp]
\includegraphics[width=1\textwidth]{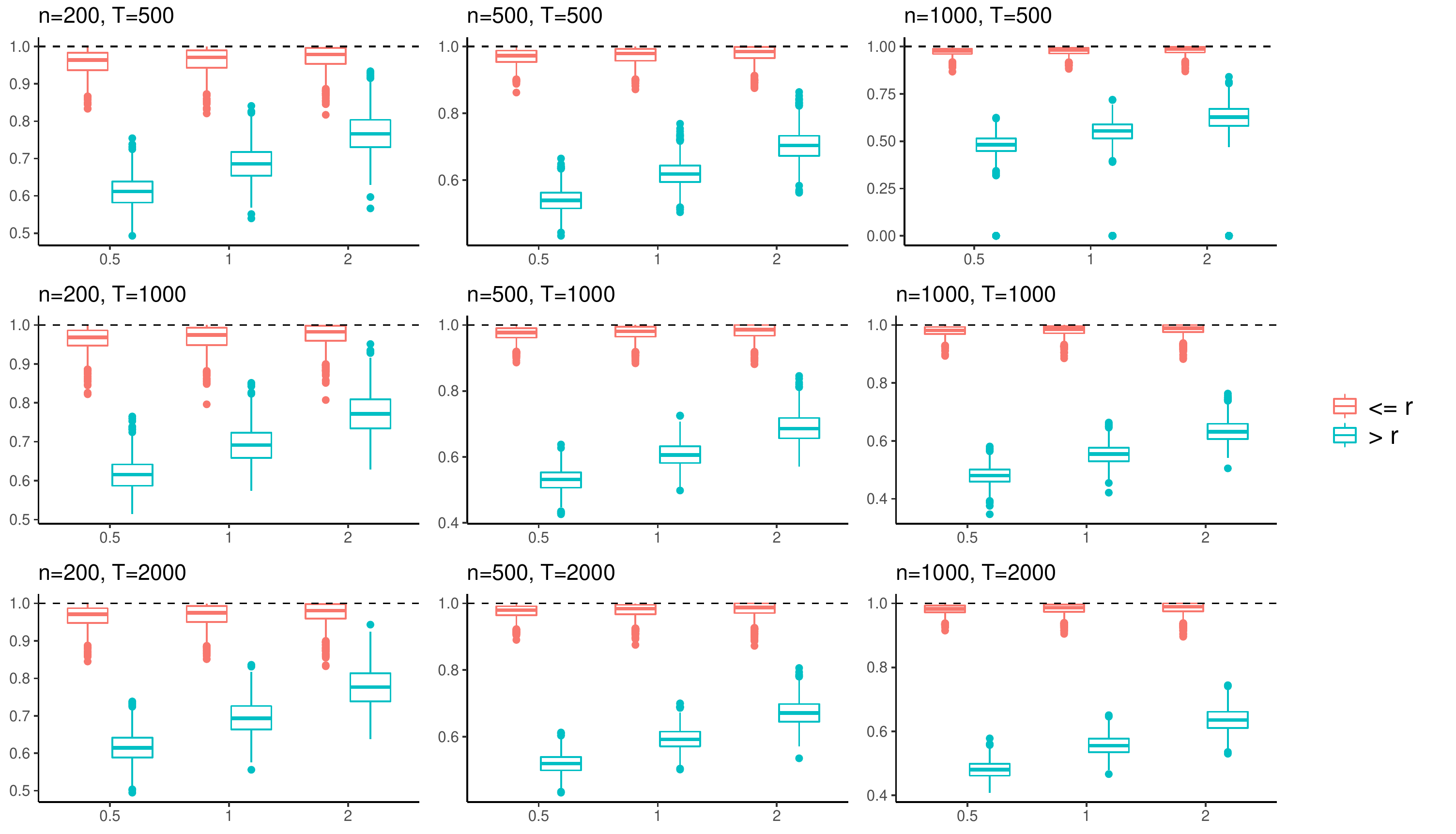}
\caption{Box plots of
$\Vert\wh{\mbf w}^{\sca}_{x, j}\Vert$ averaged for $2 \le j \le r$ (`$\le r$')
against that averaged for $r + 1 \le j \le \wh r$ (`$> r$')
averaged $1000$ realisations 
generated under Model 2 with $\varrho = 0.5$,
with $n \in \{200, 500, 1000\}$ (left to right), $T \in \{500, 1000, 2000\}$ (top to bottom)
and $\phi \in \{0.5, 1, 2\}$  (left to right within each plot).}
\label{fig:sim:cs:s5}
\end{figure}

\begin{figure}[htbp]
\includegraphics[width=1\textwidth]{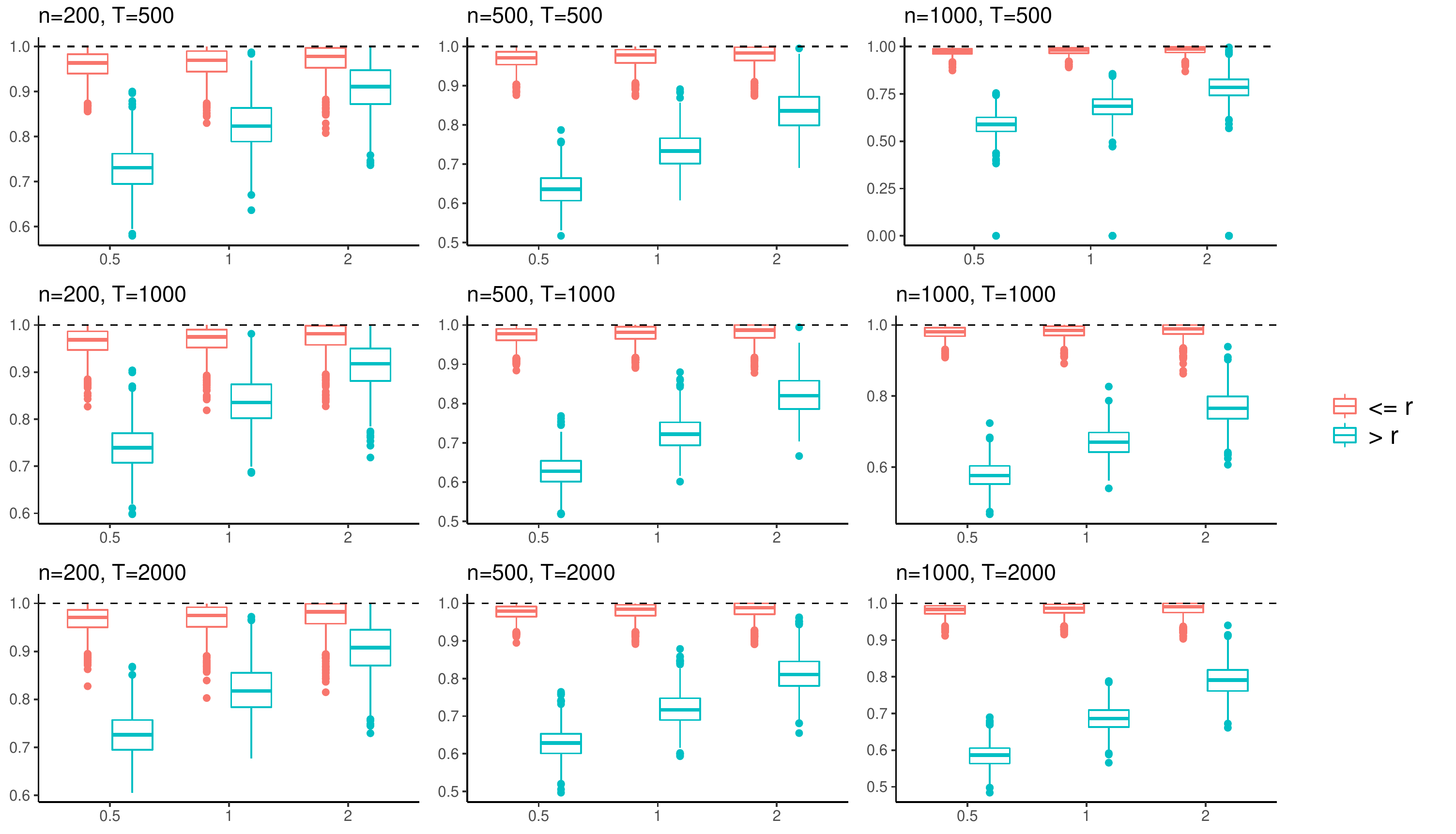}
\caption{Box plots of
$\Vert\wh{\mbf w}^{\sca}_{x, j}\Vert$ averaged for $2 \le j \le r$ (`$\le r$')
against that averaged for $r + 1 \le j \le \wh r$ (`$> r$')
averaged $1000$ realisations 
generated under Model 2 with $\varrho = 0.9$,
with $n \in \{200, 500, 1000\}$ (left to right), $T \in \{500, 1000, 2000\}$ (top to bottom)
and $\phi \in \{0.5, 1, 2\}$  (left to right within each plot).}
\label{fig:sim:cs:s9}
\end{figure}

\clearpage

\subsection{Estimation of $\chi_{it}$}

Figures \ref{fig:sim:err:T1000}--\ref{fig:sim:err:T2000s9}
plot the relative error measures $\text{err}_{\avg}$ and $\text{err}_{\max}$
(to that of the oracle PCA estimator obtained with the true $r$)
evaluated at various PC-based estimators of $\chi_{it}$
when applied to $1000$ realisations generated under Models 1--2
with varying $\varrho$, $T$, $n$ and $\phi$.
Tables \ref{table:sim:err:500}--\ref{table:sim:err:2000}
further report the summary of the errors.

\begin{figure}[htb]
\centering
\includegraphics[width=1\textwidth]{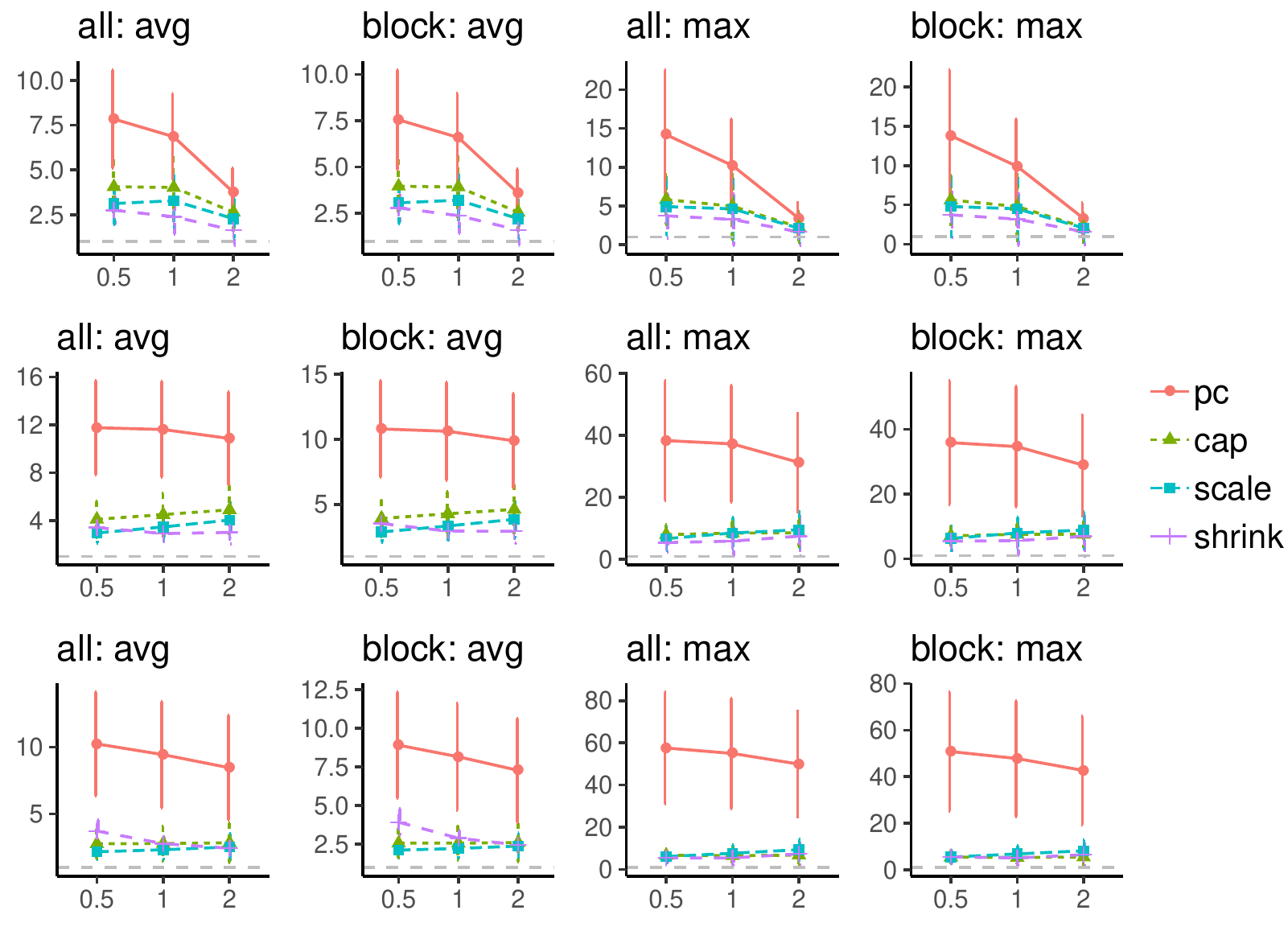}
\caption{$\text{err}_{\avg}(\wh\chi^{\circ}_{it})$
and $\text{err}_{\max}(\wh\chi^{\circ}_{it})$
for $\wh{\chi}^{\pca}_{it}$, $\wh{\chi}^{\capp}_{it}$, $\wh{\chi}^{\sca}_{it}$
and $\wh{\chi}^{\cs}_{it}$ estimated using the entire sample (`all'),
and their blockwise counterparts 
$\wh{\chi}^{\bpca}_{it}$, $\wh{\chi}^{\bcapp}_{it}$, $\wh{\chi}^{\bsca}_{it}$
and $\wh{\chi}^{\bcs}_{it}$ (`block'),
averaged over $1000$ realisations generated under Model 1 with $T = 1000$,
$n \in \{200, 500, 1000\}$ (top to bottom)
and $\phi \in \{0.5, 1, 2\}$ (left to right within each plot).
The vertical errors bars represent the standard deviations.}
\label{fig:sim:err:T1000}
\end{figure}

\begin{figure}[htb]
\centering
\includegraphics[width=1\textwidth]{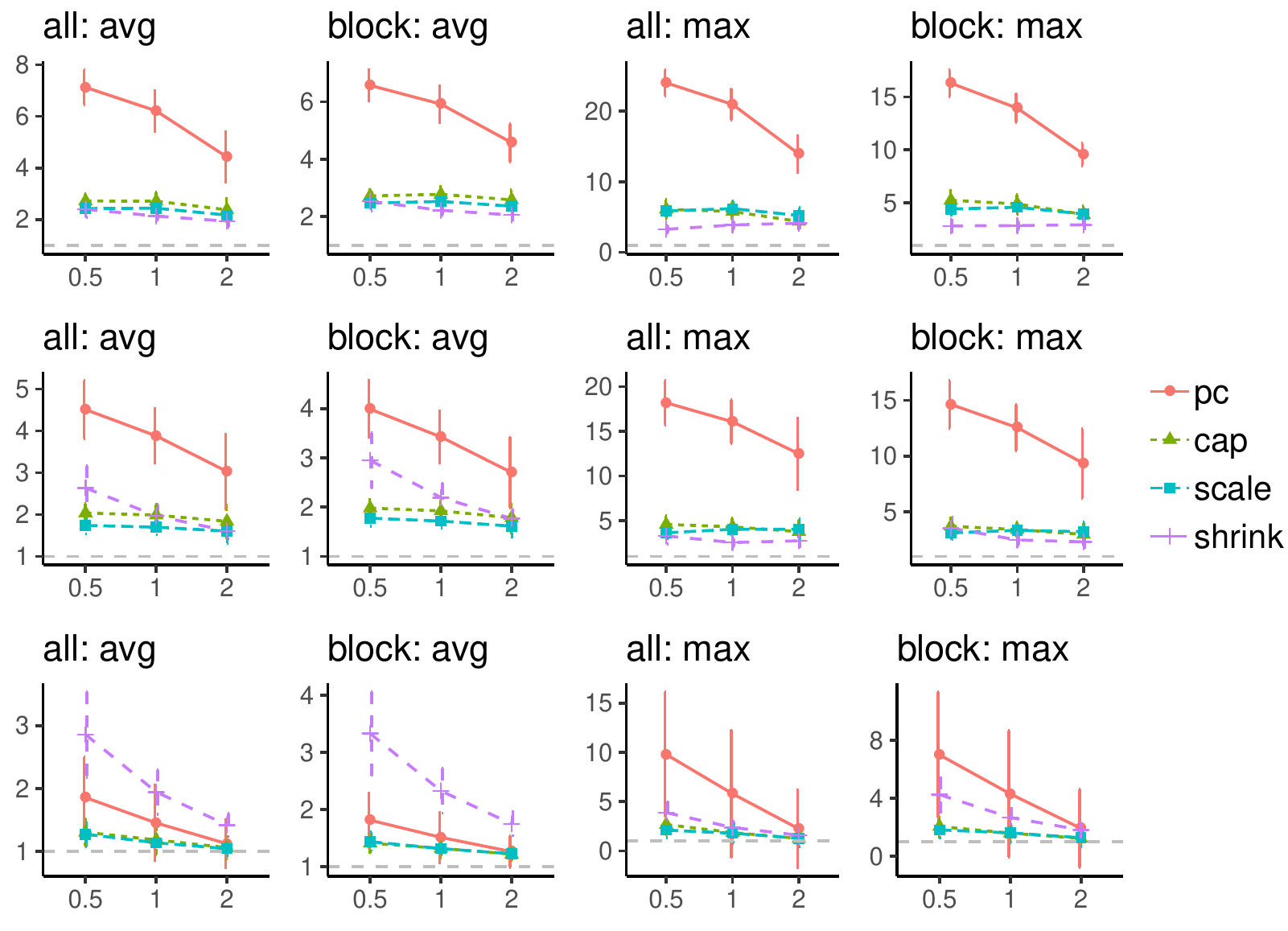}
\caption{$\text{err}_{\avg}(\wh\chi^{\circ}_{it})$
and $\text{err}_{\max}(\wh\chi^{\circ}_{it})$
averaged over $1000$ realisations generated under Model 2 with $\varrho = 0.2$, $T = 500$,
$n \in \{200, 500, 1000\}$ (top to bottom)
and $\phi \in \{0.5, 1, 2\}$ (left to right within each plot).}
\label{fig:sim:err:T500s2}
\end{figure}

\begin{figure}[htb]
\centering
\includegraphics[width=1\textwidth]{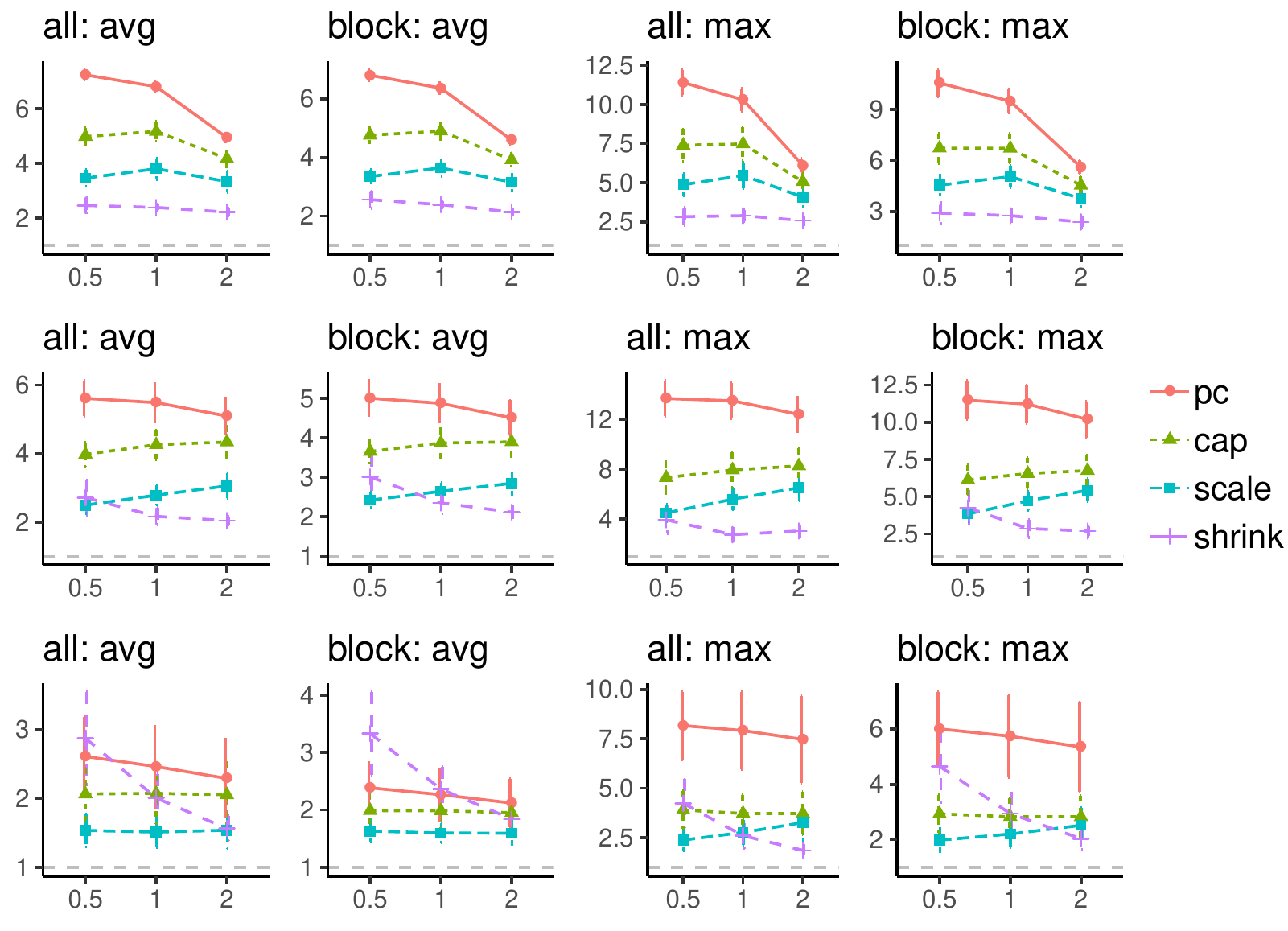}
\caption{$\text{err}_{\avg}(\wh\chi^{\circ}_{it})$
and $\text{err}_{\max}(\wh\chi^{\circ}_{it})$
averaged over $1000$ realisations generated under Model 2 with $\varrho = 0.5$, $T = 500$,
$n \in \{200, 500, 1000\}$ (top to bottom)
and $\phi \in \{0.5, 1, 2\}$ (left to right within each plot).}
\label{fig:sim:err:T500s5}
\end{figure}

\begin{figure}[htb]
\centering
\includegraphics[width=1\textwidth]{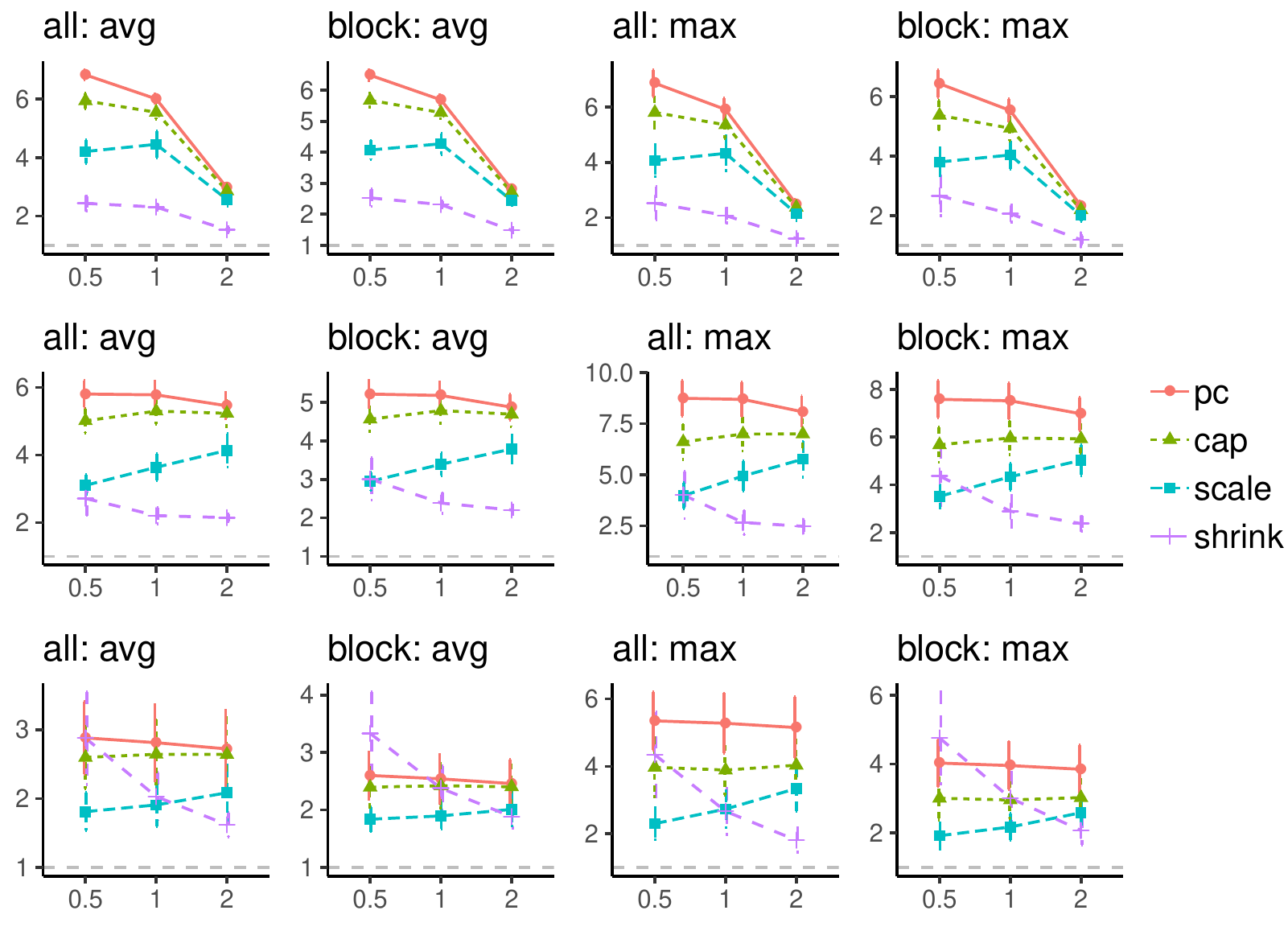}
\caption{$\text{err}_{\avg}(\wh\chi^{\circ}_{it})$
and $\text{err}_{\max}(\wh\chi^{\circ}_{it})$
averaged over $1000$ realisations generated under Model 2 with $\varrho = 0.9$, $T = 500$,
$n \in \{200, 500, 1000\}$ (top to bottom)
and $\phi \in \{0.5, 1, 2\}$ (left to right within each plot).}
\label{fig:sim:err:T500s9}
\end{figure}

\begin{figure}[htb]
\centering
\includegraphics[width=1\textwidth]{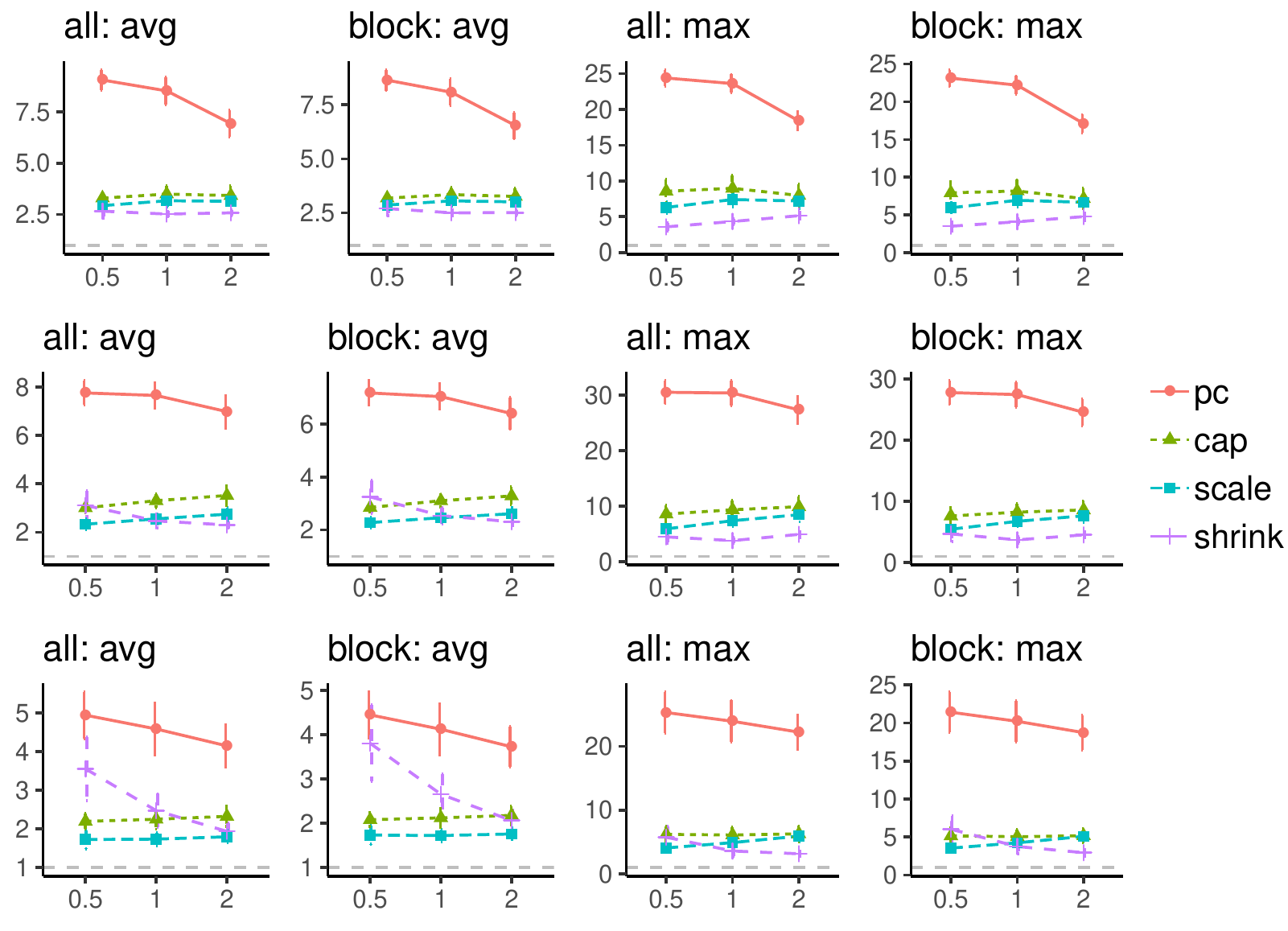}
\caption{$\text{err}_{\avg}(\wh\chi^{\circ}_{it})$
and $\text{err}_{\max}(\wh\chi^{\circ}_{it})$
averaged over $1000$ realisations generated under Model 2 with $\varrho = 0.2$, $T = 1000$,
$n \in \{200, 500, 1000\}$ (top to bottom)
and $\phi \in \{0.5, 1, 2\}$ (left to right within each plot).}
\label{fig:sim:err:T1000s2}
\end{figure}

\begin{figure}[htb]
\centering
\includegraphics[width=1\textwidth]{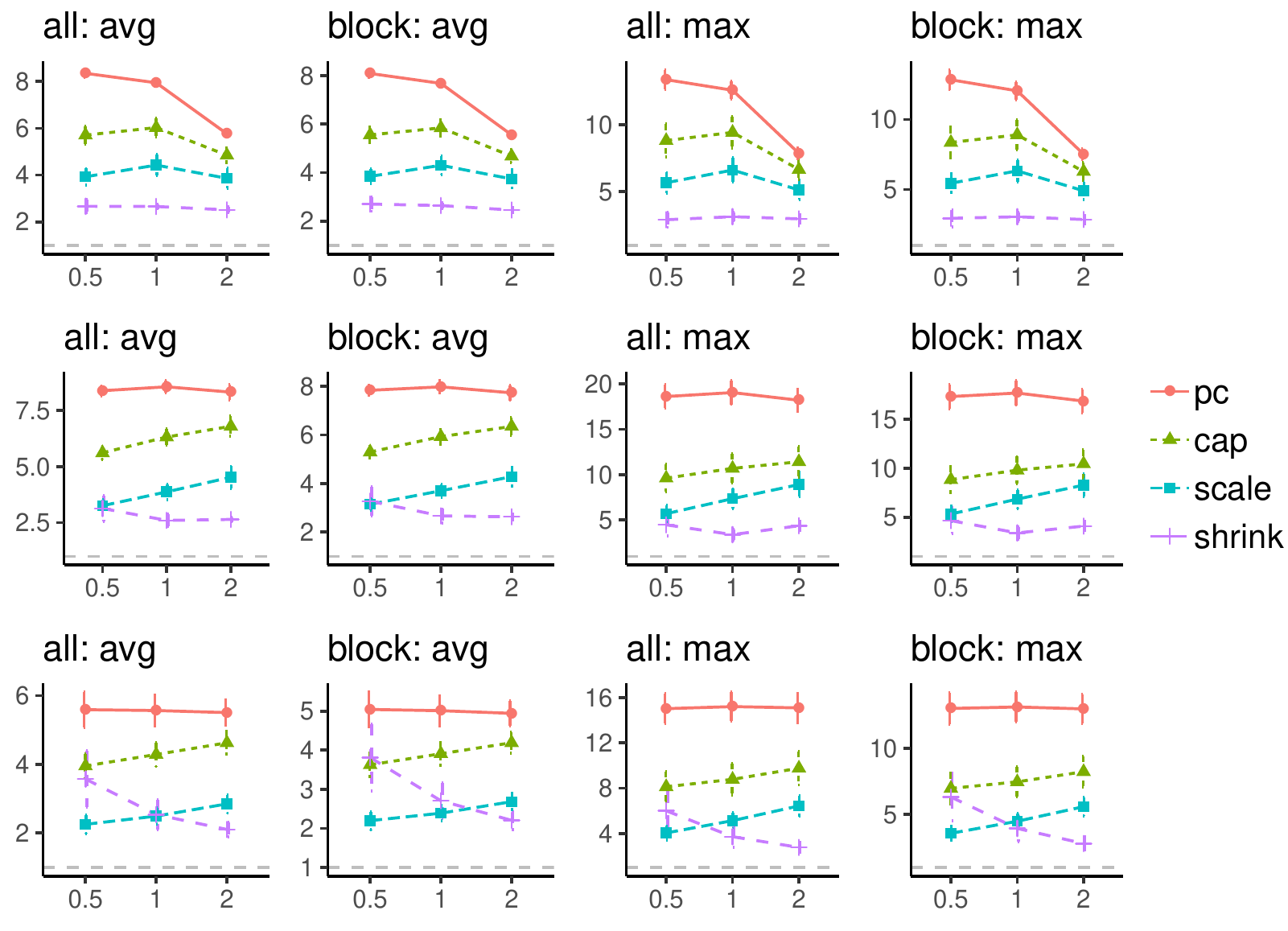}
\caption{$\text{err}_{\avg}(\wh\chi^{\circ}_{it})$
and $\text{err}_{\max}(\wh\chi^{\circ}_{it})$
averaged over $1000$ realisations generated under Model 2 with $\varrho = 0.5$, $T = 1000$,
$n \in \{200, 500, 1000\}$ (top to bottom)
and $\phi \in \{0.5, 1, 2\}$ (left to right within each plot).}
\label{fig:sim:err:T1000s5}
\end{figure}

\begin{figure}[htb]
\centering
\includegraphics[width=1\textwidth]{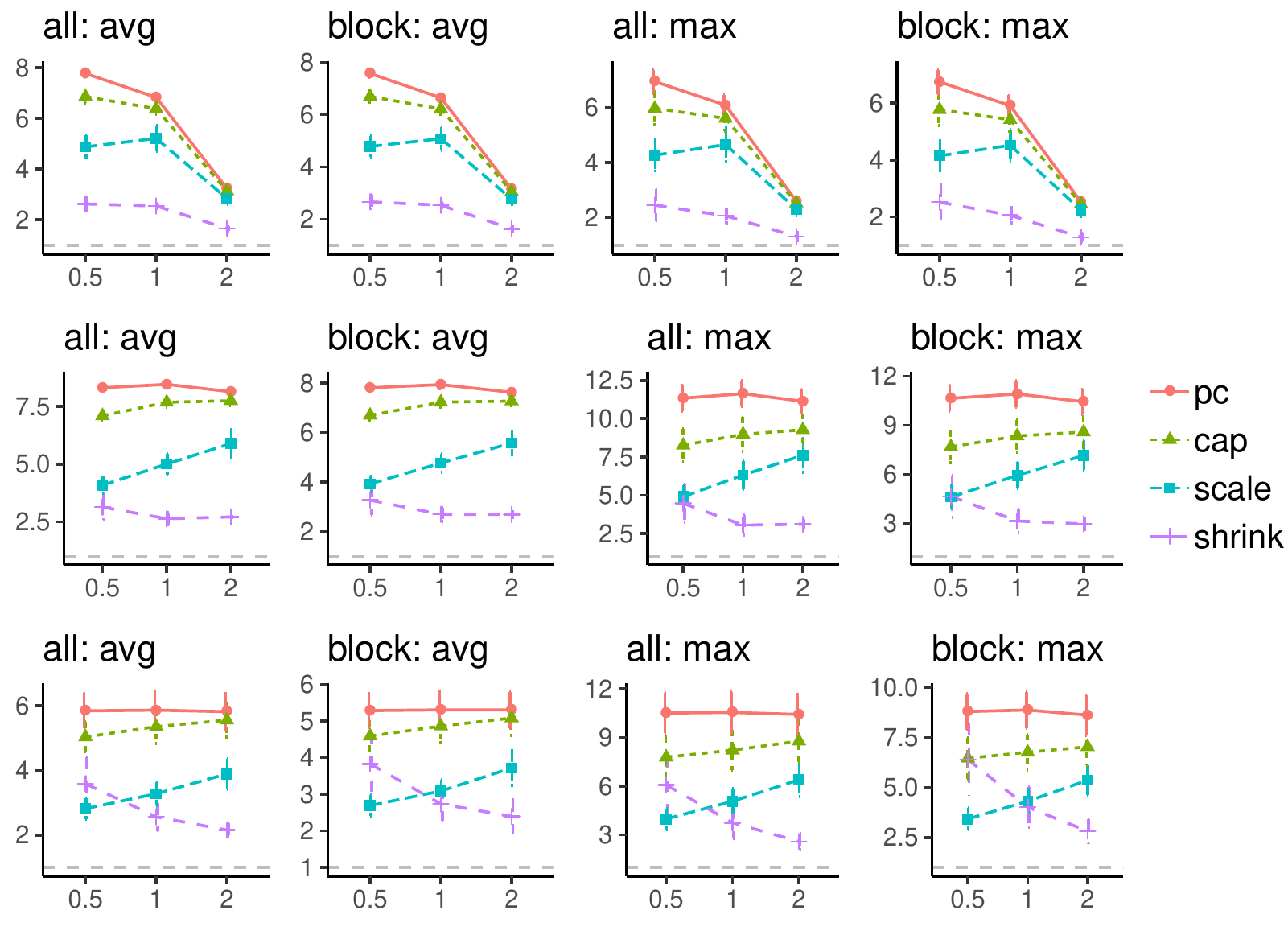}
\caption{$\text{err}_{\avg}(\wh\chi^{\circ}_{it})$
and $\text{err}_{\max}(\wh\chi^{\circ}_{it})$
averaged over $1000$ realisations generated under Model 2 with $\varrho = 0.9$, $T = 1000$,
$n \in \{200, 500, 1000\}$ (top to bottom)
and $\phi \in \{0.5, 1, 2\}$ (left to right within each plot).}
\label{fig:sim:err:T1000s9}
\end{figure}

\begin{figure}[htb]
\centering
\includegraphics[width=1\textwidth]{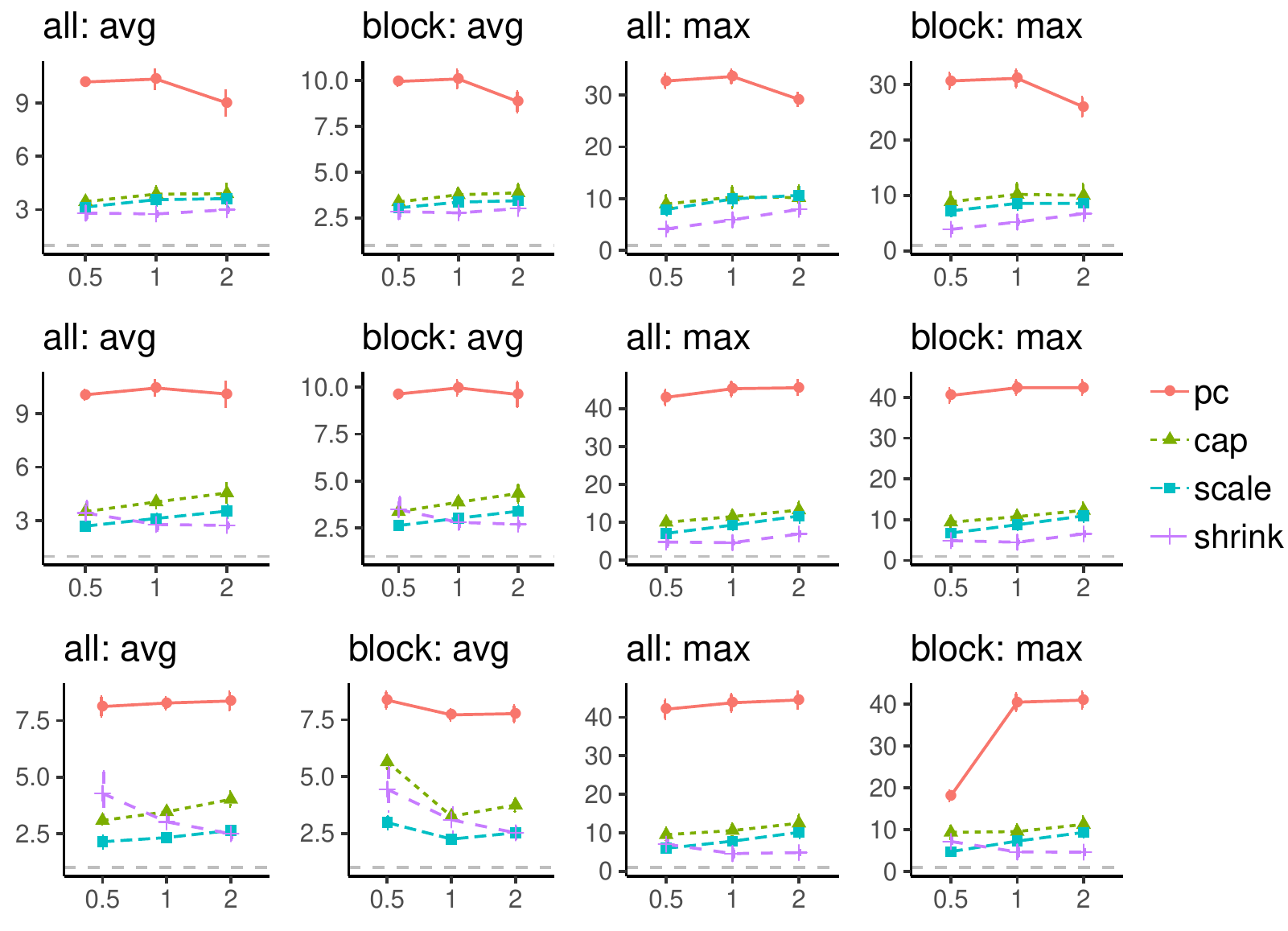}
\caption{$\text{err}_{\avg}(\wh\chi^{\circ}_{it})$
and $\text{err}_{\max}(\wh\chi^{\circ}_{it})$
on $1000$ realisations generated under Model 2 with $\varrho = 0.2$, $T = 2000$,
$n \in \{200, 500, 1000\}$ (top to bottom)
and $\phi \in \{0.5, 1, 2\}$ (left to right within each plot).}
\label{fig:sim:err:T2000s2}
\end{figure}

\begin{figure}[htb]
\centering
\includegraphics[width=1\textwidth]{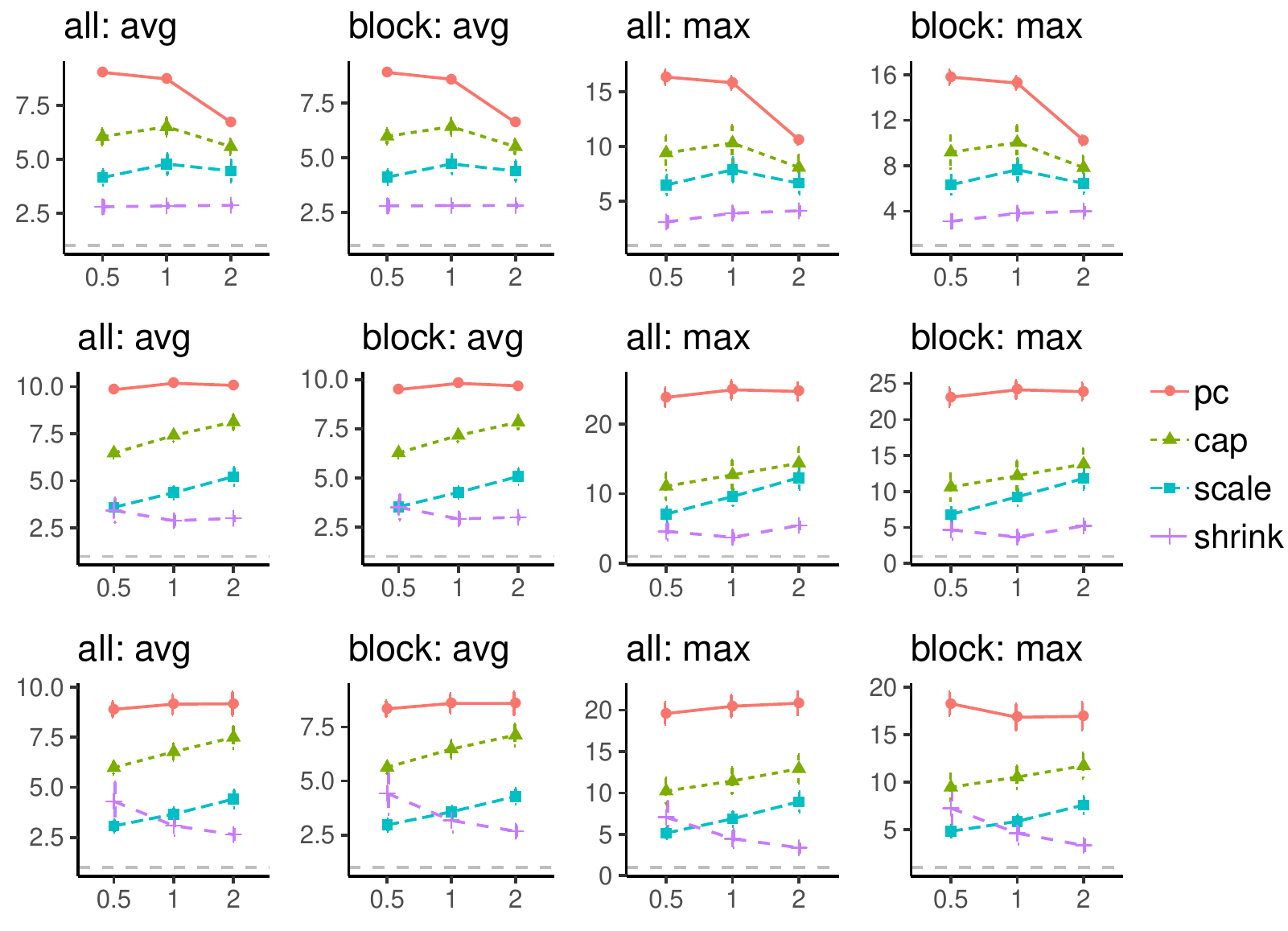}
\caption{$\text{err}_{\avg}(\wh\chi^{\circ}_{it})$
and $\text{err}_{\max}(\wh\chi^{\circ}_{it})$
averaged over $1000$ realisations generated under Model 2 with $\varrho = 0.5$, $T = 2000$,
$n \in \{200, 500, 1000\}$ (top to bottom)
and $\phi \in \{0.5, 1, 2\}$ (left to right within each plot).}
\label{fig:sim:err:T2000s5}
\end{figure}

\begin{figure}[htb]
\centering
\includegraphics[width=1\textwidth]{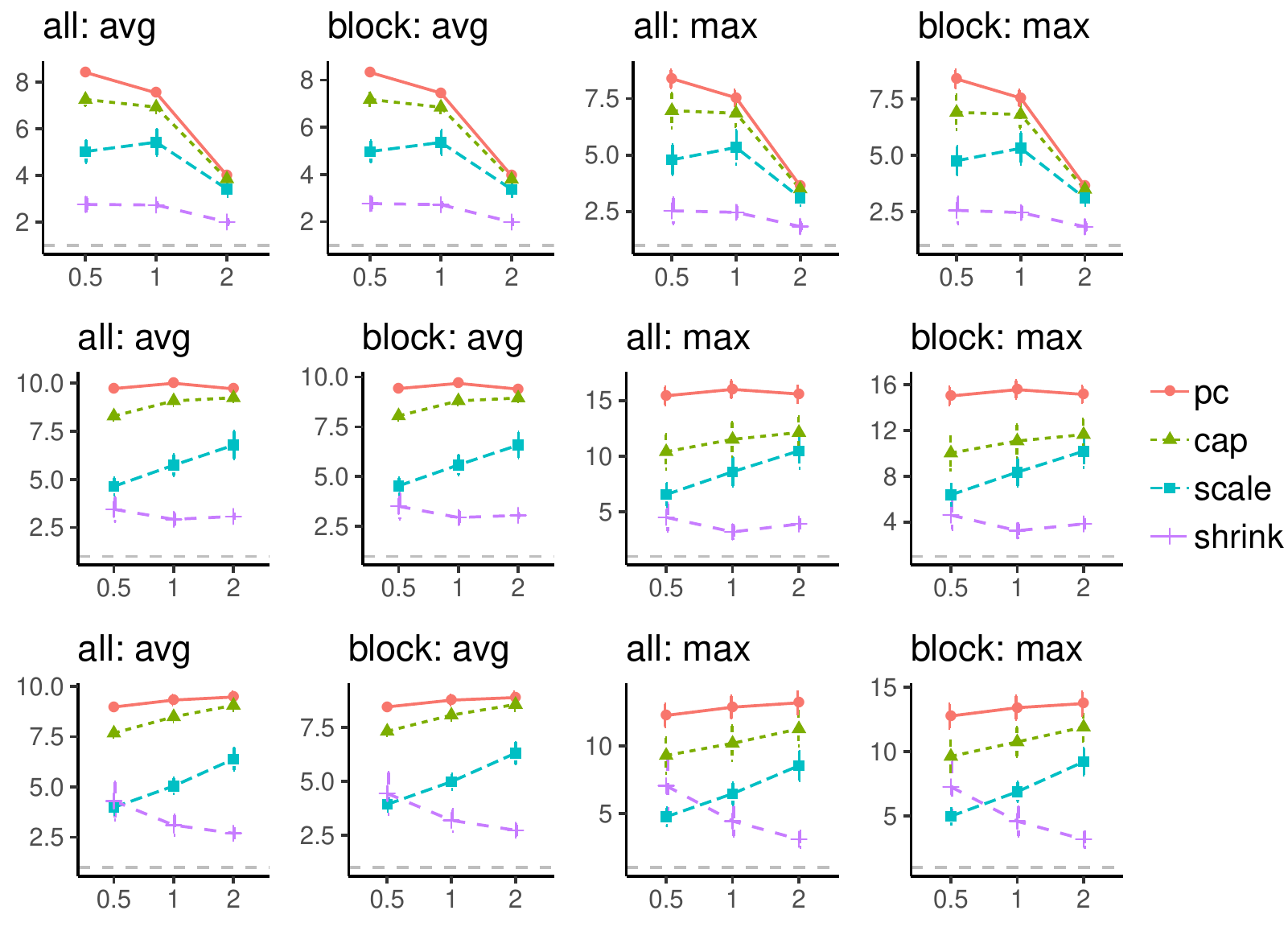}
\caption{$\text{err}_{\avg}(\wh\chi^{\circ}_{it})$
and $\text{err}_{\max}(\wh\chi^{\circ}_{it})$
averaged over $1000$ realisations generated under Model 2 with $\varrho = 0.9$, $T = 2000$,
$n \in \{200, 500, 1000\}$ (top to bottom)
and $\phi \in \{0.5, 1, 2\}$ (left to right within each plot).}
\label{fig:sim:err:T2000s9}
\end{figure}

\clearpage

%\begingroup
%\vfill
%\centering % uncomment to centre the table horizontally as well as vertically
%\begin{sideways}
%  \setlength{\tabcolsep}{3pt}
%  \begin{threeparttable}
\begin{table}[hp]
\setlength{\tabcolsep}{3pt}
\centering
\caption{Summary of relative estimation error over $1000$ realisations 
from various estimators for 
$T = 500$,
$n \in \{200, 500, 1000\}$,
$\varrho \in \{0.2, 0.5, 0.9, 1\}$ (with $\varrho = 1$ representing Model 1)
and $\phi \in \{0.5, 1, 2\}$; 
\eqref{eq:r:bn} is used for the estimation of $r = 5$.}
\label{table:sim:err:500}
{\footnotesize
\begin{tabular}{ccc|cccc|cccc|cccc|cccc}
\hline\hline
&	&	&	\multicolumn{8}{c}{all} &								\multicolumn{8}{c}{block} 								\\	
&	&	&	\multicolumn{4}{c}{$\text{err}_{\avg}$} &				\multicolumn{4}{c}{$\text{err}_{\max}$} &				\multicolumn{4}{c}{$\text{err}_{\avg}$} &				\multicolumn{4}{c}{$\text{err}_{\max}$}				\\	
$n$ &	$s$ &	$\phi$ &	$\wh\chi^{\pca}_{it}$ &	$\wh\chi^{\capp}_{it}$ &	$\wh\chi^{\sca}_{it}$ &	$\wh\chi^{\cs}_{it}$ &	$\wh\chi^{\pca}_{it}$ &	$\wh\chi^{\capp}_{it}$ &	$\wh\chi^{\sca}_{it}$ &	$\wh\chi^{\cs}_{it}$ &	$\wh\chi^{\pca}_{it}$ &	$\wh\chi^{\capp}_{it}$ &	$\wh\chi^{\sca}_{it}$ &	$\wh\chi^{\cs}_{it}$ &	$\wh\chi^{\pca}_{it}$ &	$\wh\chi^{\capp}_{it}$ &	$\wh\chi^{\sca}_{it}$ &	$\wh\chi^{\cs}_{it}$ 	\\	\hline
200 &	0.2 &	0.5 &	7.13 &	2.72 &	2.43 &	2.4 &	24.02 &	6.1 &	5.87 &	3.26 &	6.57 &	2.72 &	2.47 &	2.53 &	16.3 &	5.25 &	4.42 &	2.83	\\	
&	&	1 &	6.21 &	2.71 &	2.44 &	2.13 &	20.97 &	5.78 &	6.2 &	3.88 &	5.92 &	2.78 &	2.53 &	2.22 &	13.94 &	4.9 &	4.59 &	2.86	\\	
&	&	2 &	4.42 &	2.38 &	2.17 &	1.93 &	13.93 &	4.39 &	5.23 &	4.13 &	4.58 &	2.58 &	2.36 &	2.05 &	9.59 &	3.91 &	3.95 &	2.93	\\	
&	0.5 &	0.5 &	7.24 &	4.98 &	3.46 &	2.46 &	11.4 &	7.4 &	4.87 &	2.83 &	6.8 &	4.76 &	3.35 &	2.56 &	10.58 &	6.73 &	4.54 &	2.90	\\	
&	&	1 &	6.81 &	5.17 &	3.81 &	2.38 &	10.3 &	7.48 &	5.47 &	2.89 &	6.36 &	4.9 &	3.64 &	2.39 &	9.5 &	6.72 &	5.05 &	2.75	\\	
&	&	2 &	4.95 &	4.17 &	3.33 &	2.22 &	6.11 &	5.06 &	4.08 &	2.58 &	4.59 &	3.91 &	3.15 &	2.13 &	5.61 &	4.54 &	3.76 &	2.37	\\	
&	0.9 &	0.5 &	6.83 &	5.94 &	4.21 &	2.44 &	6.86 &	5.8 &	4.06 &	2.52 &	6.49 &	5.68 &	4.07 &	2.53 &	6.44 &	5.37 &	3.8 &	2.66	\\	
&	&	1 &	6.01 &	5.55 &	4.45 &	2.31 &	5.91 &	5.35 &	4.32 &	2.08 &	5.69 &	5.28 &	4.27 &	2.31 &	5.54 &	4.94 &	4.04 &	2.07	\\	
&	&	2 &	2.97 &	2.86 &	2.57 &	1.53 &	2.48 &	2.38 &	2.15 &	1.25 &	2.8 &	2.71 &	2.45 &	1.49 &	2.32 &	2.2 &	2.01 &	1.18	\\	
&	1 &	0.5 &	6.29 &	3.51 &	2.72 &	2.51 &	12.58 &	5.36 &	4.42 &	3.41 &	5.86 &	3.42 &	2.68 &	2.64 &	11.76 &	5.06 &	4.23 &	3.47	\\	
&	&	1 &	5.5 &	3.44 &	2.82 &	2.11 &	9.38 &	4.62 &	4.25 &	2.93 &	5.11 &	3.33 &	2.75 &	2.15 &	8.73 &	4.35 &	4.05 &	2.82	\\	
&	&	2 &	3.4 &	2.52 &	2.18 &	1.59 &	3.77 &	2.47 &	2.34 &	1.76 &	3.16 &	2.42 &	2.11 &	1.55 &	3.49 &	2.33 &	2.23 &	1.65	\\	\hline
500 &	0.2 &	0.5 &	4.51 &	2.04 &	1.74 &	2.64 &	18.19 &	4.57 &	3.64 &	3.28 &	3.99 &	1.98 &	1.77 &	2.95 &	14.63 &	3.7 &	3.1 &	3.53	\\	
&	&	1 &	3.88 &	1.99 &	1.7 &	1.97 &	16.06 &	4.31 &	4.03 &	2.55 &	3.42 &	1.92 &	1.72 &	2.19 &	12.54 &	3.4 &	3.33 &	2.47	\\	
&	&	2 &	3.03 &	1.84 &	1.6 &	1.61 &	12.48 &	3.77 &	4.04 &	2.73 &	2.7 &	1.78 &	1.61 &	1.77 &	9.33 &	2.97 &	3.21 &	2.30	\\	
&	0.5 &	0.5 &	5.61 &	3.98 &	2.49 &	2.72 &	13.67 &	7.32 &	4.49 &	3.93 &	5.01 &	3.66 &	2.42 &	3.01 &	11.48 &	6.14 &	3.85 &	4.26	\\	
&	&	1 &	5.49 &	4.26 &	2.78 &	2.17 &	13.48 &	7.93 &	5.59 &	2.74 &	4.87 &	3.87 &	2.65 &	2.35 &	11.21 &	6.56 &	4.73 &	2.87	\\	
&	&	2 &	5.09 &	4.34 &	3.06 &	2.04 &	12.39 &	8.26 &	6.51 &	3.03 &	4.51 &	3.9 &	2.85 &	2.12 &	10.19 &	6.75 &	5.43 &	2.69	\\	
&	0.9 &	0.5 &	5.8 &	5.01 &	3.1 &	2.71 &	8.74 &	6.61 &	3.97 &	4.01 &	5.21 &	4.56 &	2.95 &	3.01 &	7.59 &	5.68 &	3.52 &	4.37	\\	
&	&	1 &	5.78 &	5.3 &	3.64 &	2.21 &	8.69 &	7 &	4.94 &	2.66 &	5.18 &	4.78 &	3.39 &	2.38 &	7.53 &	5.97 &	4.34 &	2.90	\\	
&	&	2 &	5.46 &	5.23 &	4.14 &	2.15 &	8.08 &	7 &	5.76 &	2.47 &	4.87 &	4.69 &	3.79 &	2.2 &	6.98 &	5.93 &	5.03 &	2.37	\\	
&	1 &	0.5 &	6.6 &	2.58 &	2.13 &	2.87 &	25.35 &	5.09 &	4.83 &	4.16 &	5.81 &	2.53 &	2.15 &	3.16 &	21.83 &	4.32 &	4.35 &	4.24	\\	
&	&	1 &	6.23 &	2.68 &	2.3 &	2.31 &	23.24 &	5.01 &	5.73 &	4.15 &	5.45 &	2.61 &	2.29 &	2.48 &	19.69 &	4.3 &	5.11 &	3.84	\\	
&	&	2 &	5.53 &	2.73 &	2.47 &	2.15 &	18.24 &	4.75 &	5.98 &	4.8 &	4.81 &	2.62 &	2.4 &	2.21 &	15.2 &	4.09 &	5.26 &	4.15	\\	\hline
1000 &	0.2 &	0.5 &	1.85 &	1.3 &	1.27 &	2.86 &	9.79 &	2.67 &	2.11 &	3.88 &	1.81 &	1.41 &	1.43 &	3.33 &	6.99 &	2.04 &	1.81 &	4.24	\\	
&	&	1 &	1.45 &	1.18 &	1.14 &	1.94 &	5.79 &	1.86 &	1.79 &	2.37 &	1.51 &	1.32 &	1.31 &	2.32 &	4.29 &	1.59 &	1.6 &	2.66	\\	
&	&	2 &	1.12 &	1.06 &	1.04 &	1.42 &	2.23 &	1.26 &	1.28 &	1.59 &	1.26 &	1.22 &	1.22 &	1.75 &	1.92 &	1.23 &	1.27 &	1.80	\\	
&	0.5 &	0.5 &	2.61 &	2.06 &	1.54 &	2.87 &	8.16 &	3.9 &	2.38 &	4.24 &	2.39 &	1.99 &	1.63 &	3.33 &	6 &	2.93 &	1.98 &	4.65	\\	
&	&	1 &	2.46 &	2.07 &	1.51 &	2.01 &	7.92 &	3.73 &	2.77 &	2.6 &	2.27 &	1.99 &	1.6 &	2.36 &	5.73 &	2.83 &	2.2 &	2.94	\\	
&	&	2 &	2.29 &	2.05 &	1.54 &	1.56 &	7.47 &	3.72 &	3.26 &	1.86 &	2.12 &	1.95 &	1.6 &	1.84 &	5.34 &	2.83 &	2.52 &	2.04	\\	
&	0.9 &	0.5 &	2.88 &	2.6 &	1.81 &	2.88 &	5.35 &	3.97 &	2.3 &	4.35 &	2.6 &	2.4 &	1.84 &	3.33 &	4.03 &	3.01 &	1.92 &	4.76	\\	
&	&	1 &	2.81 &	2.64 &	1.91 &	2.03 &	5.28 &	3.89 &	2.73 &	2.67 &	2.54 &	2.42 &	1.9 &	2.38 &	3.96 &	2.96 &	2.17 &	3.01	\\	
&	&	2 &	2.72 &	2.64 &	2.08 &	1.62 &	5.15 &	4.03 &	3.34 &	1.81 &	2.46 &	2.4 &	2.01 &	1.88 &	3.85 &	3.03 &	2.59 &	2.07	\\	
&	1 &	0.5 &	4.42 &	1.58 &	1.53 &	2.94 &	28.19 &	3 &	3.52 &	3.52 &	3.77 &	1.63 &	1.64 &	3.37 &	21.55 &	2.31 &	2.98 &	3.60	\\	
&	&	1 &	3.8 &	1.51 &	1.47 &	2.11 &	24.93 &	2.74 &	4.08 &	3.28 &	3.27 &	1.57 &	1.58 &	2.42 &	18.58 &	2.18 &	3.35 &	2.95	\\	
&	&	2 &	3.21 &	1.46 &	1.46 &	1.69 &	20.41 &	2.71 &	4.47 &	3.84 &	2.8 &	1.52 &	1.55 &	1.93 &	14.94 &	2.19 &	3.6 &	3.13	\\	\hline\hline
\end{tabular}}
\end{table}
%  \end{threeparttable}
%\end{sideways}
%\vfill
%%\restoregeometry
%\endgroup

\begin{table}[hp]
\setlength{\tabcolsep}{3pt}
\centering
\caption{Summary of relative estimation error from various estimators for 
$T = 1000$,
$n \in \{200, 500, 1000\}$,
$\varrho \in \{0.2, 0.5, 0.9, 1\}$
and $\phi \in \{0.5, 1, 2\}$.}
\label{table:sim:err:1000}
{\footnotesize
\begin{tabular}{ccc|cccc|cccc|cccc|cccc}
\hline\hline
&	&	&	\multicolumn{8}{c}{all} &								\multicolumn{8}{c}{block} 								\\	
&	&	&	\multicolumn{4}{c}{$\text{err}_{\avg}$} &				\multicolumn{4}{c}{$\text{err}_{\max}$} &				\multicolumn{4}{c}{$\text{err}_{\avg}$} &				\multicolumn{4}{c}{$\text{err}_{\max}$}				\\	
$n$ &	$s$ &	$\phi$ &	$\wh\chi^{\pca}_{it}$ &	$\wh\chi^{\capp}_{it}$ &	$\wh\chi^{\sca}_{it}$ &	$\wh\chi^{\cs}_{it}$ &	$\wh\chi^{\pca}_{it}$ &	$\wh\chi^{\capp}_{it}$ &	$\wh\chi^{\sca}_{it}$ &	$\wh\chi^{\cs}_{it}$ &	$\wh\chi^{\pca}_{it}$ &	$\wh\chi^{\capp}_{it}$ &	$\wh\chi^{\sca}_{it}$ &	$\wh\chi^{\cs}_{it}$ &	$\wh\chi^{\pca}_{it}$ &	$\wh\chi^{\capp}_{it}$ &	$\wh\chi^{\sca}_{it}$ &	$\wh\chi^{\cs}_{it}$ 	\\	\hline
200 &	0.2 &	0.5 &	9.05 &	3.3 &	2.93 &	2.66 &	24.41 &	8.56 &	6.3 &	3.58 &	8.64 &	3.18 &	2.86 &	2.7 &	23.12 &	7.94 &	5.96 &	3.52	\\	
&	&	1 &	8.52 &	3.49 &	3.17 &	2.52 &	23.62 &	8.97 &	7.4 &	4.33 &	8.08 &	3.35 &	3.06 &	2.5 &	22.19 &	8.19 &	6.96 &	4.12	\\	
&	&	2 &	6.93 &	3.42 &	3.15 &	2.58 &	18.39 &	7.97 &	7.2 &	5.14 &	6.54 &	3.26 &	3.01 &	2.51 &	17.11 &	7.18 &	6.69 &	4.79	\\	
&	0.5 &	0.5 &	8.35 &	5.71 &	3.93 &	2.67 &	13.37 &	8.82 &	5.67 &	2.91 &	8.09 &	5.55 &	3.84 &	2.7 &	12.85 &	8.36 &	5.44 &	2.93	\\	
&	&	1 &	7.95 &	6.03 &	4.43 &	2.66 &	12.57 &	9.43 &	6.61 &	3.13 &	7.68 &	5.84 &	4.3 &	2.64 &	12.05 &	8.9 &	6.33 &	3.04	\\	
&	&	2 &	5.76 &	4.85 &	3.86 &	2.51 &	7.86 &	6.66 &	5.13 &	2.97 &	5.55 &	4.68 &	3.74 &	2.45 &	7.52 &	6.28 &	4.9 &	2.85	\\	
&	0.9 &	0.5 &	7.78 &	6.86 &	4.88 &	2.63 &	6.95 &	5.97 &	4.27 &	2.46 &	7.57 &	6.69 &	4.78 &	2.66 &	6.75 &	5.77 &	4.15 &	2.52	\\	
&	&	1 &	6.83 &	6.39 &	5.21 &	2.55 &	6.09 &	5.61 &	4.66 &	2.07 &	6.65 &	6.22 &	5.08 &	2.53 &	5.91 &	5.42 &	4.52 &	2.05	\\	
&	&	2 &	3.24 &	3.14 &	2.85 &	1.66 &	2.6 &	2.53 &	2.31 &	1.32 &	3.15 &	3.05 &	2.77 &	1.63 &	2.53 &	2.45 &	2.24 &	1.28	\\	
&	1 &	0.5 &	7.84 &	4.06 &	3.13 &	2.75 &	14.19 &	5.82 &	4.91 &	3.72 &	7.55 &	3.97 &	3.07 &	2.79 &	13.79 &	5.62 &	4.79 &	3.73	\\	
&	&	1 &	6.85 &	4.02 &	3.29 &	2.39 &	10.2 &	4.96 &	4.62 &	3.25 &	6.59 &	3.92 &	3.22 &	2.39 &	9.9 &	4.79 &	4.5 &	3.18	\\	
&	&	2 &	3.75 &	2.64 &	2.28 &	1.64 &	3.38 &	2.26 &	2.12 &	1.64 &	3.59 &	2.57 &	2.22 &	1.61 &	3.27 &	2.17 &	2.06 &	1.59	\\	\hline
500 &	0.2 &	0.5 &	7.76 &	3.01 &	2.33 &	3.11 &	30.56 &	8.62 &	5.95 &	4.51 &	7.18 &	2.85 &	2.28 &	3.25 &	27.75 &	7.59 &	5.43 &	4.66	\\	
&	&	1 &	7.66 &	3.3 &	2.55 &	2.47 &	30.42 &	9.39 &	7.42 &	3.84 &	7.04 &	3.1 &	2.46 &	2.54 &	27.43 &	8.22 &	6.72 &	3.69	\\	
&	&	2 &	6.99 &	3.52 &	2.74 &	2.3 &	27.38 &	9.94 &	8.5 &	4.96 &	6.4 &	3.28 &	2.61 &	2.31 &	24.53 &	8.58 &	7.62 &	4.51	\\	
&	0.5 &	0.5 &	8.39 &	5.62 &	3.26 &	3.14 &	18.6 &	9.61 &	5.7 &	4.47 &	7.84 &	5.3 &	3.15 &	3.27 &	17.33 &	8.84 &	5.33 &	4.65	\\	
&	&	1 &	8.56 &	6.32 &	3.88 &	2.6 &	19.03 &	10.69 &	7.35 &	3.39 &	7.98 &	5.93 &	3.7 &	2.67 &	17.7 &	9.81 &	6.85 &	3.39	\\	
&	&	2 &	8.32 &	6.8 &	4.53 &	2.65 &	18.19 &	11.42 &	8.91 &	4.37 &	7.73 &	6.35 &	4.28 &	2.63 &	16.87 &	10.45 &	8.28 &	4.09	\\	
&	0.9 &	0.5 &	8.31 &	7.1 &	4.09 &	3.14 &	11.34 &	8.27 &	4.9 &	4.46 &	7.81 &	6.7 &	3.93 &	3.27 &	10.65 &	7.69 &	4.62 &	4.66	\\	
&	&	1 &	8.46 &	7.68 &	5 &	2.63 &	11.63 &	8.98 &	6.31 &	3.04 &	7.94 &	7.23 &	4.77 &	2.7 &	10.91 &	8.35 &	5.94 &	3.16	\\	
&	&	2 &	8.14 &	7.75 &	5.89 &	2.71 &	11.14 &	9.27 &	7.61 &	3.1 &	7.62 &	7.27 &	5.58 &	2.69 &	10.44 &	8.59 &	7.16 &	2.98	\\	
&	1 &	0.5 &	11.75 &	4.1 &	2.95 &	3.4 &	38.32 &	8 &	6.71 &	5.43 &	10.79 &	3.91 &	2.87 &	3.52 &	35.86 &	7.15 &	6.32 &	5.47	\\	
&	&	1 &	11.61 &	4.49 &	3.46 &	2.91 &	37.26 &	8.52 &	8.54 &	5.91 &	10.61 &	4.27 &	3.33 &	2.94 &	34.64 &	7.59 &	8.01 &	5.64	\\	
&	&	2 &	10.85 &	4.89 &	4.03 &	3 &	31.28 &	8.57 &	9.56 &	7.5 &	9.87 &	4.61 &	3.84 &	2.93 &	28.88 &	7.59 &	8.92 &	6.97	\\	\hline
1000 &	0.2 &	0.5 &	4.95 &	2.19 &	1.72 &	3.56 &	25.29 &	6.18 &	4.06 &	5.74 &	4.45 &	2.08 &	1.73 &	3.8 &	21.42 &	5.15 &	3.54 &	6.02	\\	
&	&	1 &	4.59 &	2.25 &	1.73 &	2.47 &	23.92 &	6.09 &	4.89 &	3.57 &	4.12 &	2.12 &	1.72 &	2.65 &	20.22 &	5.04 &	4.24 &	3.75	\\	
&	&	2 &	4.15 &	2.32 &	1.79 &	1.93 &	22.21 &	6.27 &	5.91 &	3.14 &	3.72 &	2.18 &	1.76 &	2.06 &	18.73 &	5.17 &	5.09 &	2.94	\\	
&	0.5 &	0.5 &	5.59 &	3.95 &	2.25 &	3.57 &	15.03 &	8.13 &	4.05 &	6.01 &	5.04 &	3.63 &	2.2 &	3.81 &	13.02 &	6.97 &	3.58 &	6.32	\\	
&	&	1 &	5.56 &	4.28 &	2.49 &	2.54 &	15.21 &	8.79 &	5.13 &	3.71 &	5.01 &	3.91 &	2.39 &	2.71 &	13.12 &	7.47 &	4.49 &	3.95	\\	
&	&	2 &	5.5 &	4.62 &	2.85 &	2.1 &	15.1 &	9.77 &	6.43 &	2.77 &	4.94 &	4.19 &	2.68 &	2.2 &	12.98 &	8.23 &	5.59 &	2.80	\\	
&	0.9 &	0.5 &	5.84 &	5.03 &	2.82 &	3.59 &	10.5 &	7.79 &	3.99 &	6.08 &	5.28 &	4.59 &	2.69 &	3.83 &	8.81 &	6.47 &	3.43 &	6.41	\\	
&	&	1 &	5.86 &	5.35 &	3.28 &	2.56 &	10.54 &	8.22 &	5.07 &	3.75 &	5.3 &	4.86 &	3.09 &	2.73 &	8.88 &	6.77 &	4.31 &	4.01	\\	
&	&	2 &	5.81 &	5.55 &	3.89 &	2.15 &	10.42 &	8.77 &	6.39 &	2.59 &	5.3 &	5.08 &	3.72 &	2.39 &	8.63 &	7.04 &	5.37 &	2.82	\\	
&	1 &	0.5 &	10.25 &	2.77 &	2.17 &	3.72 &	57.54 &	6.7 &	6.12 &	5.51 &	8.91 &	2.57 &	2.12 &	3.93 &	50.79 &	5.4 &	5.47 &	5.55	\\	
&	&	1 &	9.44 &	2.79 &	2.32 &	2.78 &	54.97 &	6.58 &	7.69 &	5.51 &	8.16 &	2.57 &	2.23 &	2.89 &	47.79 &	5.29 &	6.8 &	5.08	\\	
&	&	2 &	8.46 &	2.84 &	2.53 &	2.43 &	49.91 &	6.83 &	9.4 &	7.4 &	7.28 &	2.6 &	2.38 &	2.46 &	42.66 &	5.44 &	8.15 &	6.45	\\	\hline\hline
\end{tabular}}
\end{table}

\begin{table}[hp]
\setlength{\tabcolsep}{3pt}
\centering
\caption{Summary of relative estimation error from various estimators for 
$T = 2000$,
$n \in \{200, 500, 1000\}$,
$\varrho \in \{0.2, 0.5, 0.9, 1\}$
and $\phi \in \{0.5, 1, 2\}$.}
\label{table:sim:err:2000}
{\footnotesize
\begin{tabular}{ccc|cccc|cccc|cccc|cccc}
\hline\hline
&	&	&	\multicolumn{8}{c}{all} &								\multicolumn{8}{c}{block} 								\\	
&	&	&	\multicolumn{4}{c}{$\text{err}_{\avg}$} &				\multicolumn{4}{c}{$\text{err}_{\max}$} &				\multicolumn{4}{c}{$\text{err}_{\avg}$} &				\multicolumn{4}{c}{$\text{err}_{\max}$}				\\	
$n$ &	$s$ &	$\phi$ &	$\wh\chi^{\pca}_{it}$ &	$\wh\chi^{\capp}_{it}$ &	$\wh\chi^{\sca}_{it}$ &	$\wh\chi^{\cs}_{it}$ &	$\wh\chi^{\pca}_{it}$ &	$\wh\chi^{\capp}_{it}$ &	$\wh\chi^{\sca}_{it}$ &	$\wh\chi^{\cs}_{it}$ &	$\wh\chi^{\pca}_{it}$ &	$\wh\chi^{\capp}_{it}$ &	$\wh\chi^{\sca}_{it}$ &	$\wh\chi^{\cs}_{it}$ &	$\wh\chi^{\pca}_{it}$ &	$\wh\chi^{\capp}_{it}$ &	$\wh\chi^{\sca}_{it}$ &	$\wh\chi^{\cs}_{it}$ 	\\	\hline
200 &	0.2 &	0.5 &	10.17 &	3.46 &	3.15 &	2.81 &	32.69 &	8.97 &	7.91 &	4.15 &	9.96 &	3.37 &	3.05 &	2.84 &	30.62 &	8.85 &	7.2 &	3.88	\\	
&	&	1 &	10.34 &	3.87 &	3.56 &	2.76 &	33.55 &	10.3 &	9.94 &	5.9 &	10.1 &	3.76 &	3.36 &	2.77 &	31.08 &	10.18 &	8.56 &	5.20	\\	
&	&	2 &	9 &	3.9 &	3.62 &	2.99 &	29.14 &	10.21 &	10.64 &	7.92 &	8.85 &	3.87 &	3.44 &	3.01 &	25.93 &	10 &	8.56 &	6.69	\\	
&	0.5 &	0.5 &	9.05 &	6.06 &	4.15 &	2.79 &	16.35 &	9.42 &	6.5 &	3.12 &	8.9 &	5.99 &	4.11 &	2.8 &	15.81 &	9.2 &	6.33 &	3.10	\\	
&	&	1 &	8.74 &	6.52 &	4.78 &	2.83 &	15.83 &	10.32 &	7.89 &	3.93 &	8.59 &	6.42 &	4.72 &	2.82 &	15.28 &	10.05 &	7.65 &	3.81	\\	
&	&	2 &	6.73 &	5.59 &	4.47 &	2.86 &	10.61 &	8.09 &	6.66 &	4.14 &	6.61 &	5.5 &	4.4 &	2.82 &	10.23 &	7.83 &	6.46 &	4.01	\\	
&	0.9 &	0.5 &	8.42 &	7.26 &	5.03 &	2.76 &	8.39 &	6.97 &	4.79 &	2.53 &	8.32 &	7.18 &	4.97 &	2.77 &	8.38 &	6.9 &	4.75 &	2.55	\\	
&	&	1 &	7.54 &	6.93 &	5.43 &	2.73 &	7.54 &	6.85 &	5.34 &	2.46 &	7.45 &	6.84 &	5.36 &	2.72 &	7.54 &	6.8 &	5.31 &	2.45	\\	
&	&	2 &	4 &	3.85 &	3.41 &	2 &	3.63 &	3.49 &	3.1 &	1.83 &	3.94 &	3.79 &	3.37 &	1.98 &	3.63 &	3.48 &	3.09 &	1.82	\\	
&	1 &	0.5 &	8.76 &	4.3 &	3.31 &	2.9 &	14.85 &	5.97 &	4.97 &	3.78 &	8.71 &	4.3 &	3.32 &	2.95 &	14.81 &	5.88 &	5.01 &	3.87	\\	
&	&	1 &	7.62 &	4.21 &	3.45 &	2.51 &	10.7 &	5 &	4.65 &	3.3 &	7.57 &	4.22 &	3.46 &	2.53 &	10.66 &	4.96 &	4.67 &	3.33	\\	
&	&	2 &	4.41 &	2.93 &	2.52 &	1.83 &	3.94 &	2.47 &	2.35 &	1.85 &	4.37 &	2.93 &	2.54 &	1.84 &	3.91 &	2.46 &	2.34 &	1.85	\\	\hline
500 &	0.2 &	0.5 &	10.05 &	3.51 &	2.7 &	3.43 &	42.97 &	10 &	7.06 &	4.78 &	9.63 &	3.37 &	2.64 &	3.49 &	40.5 &	9.39 &	6.73 &	4.85	\\	
&	&	1 &	10.43 &	4.05 &	3.12 &	2.79 &	45.25 &	11.5 &	9.25 &	4.63 &	9.96 &	3.88 &	3.03 &	2.81 &	42.41 &	10.75 &	8.76 &	4.48	\\	
&	&	2 &	10.09 &	4.55 &	3.53 &	2.73 &	45.49 &	13.21 &	11.64 &	6.89 &	9.6 &	4.34 &	3.4 &	2.71 &	42.41 &	12.28 &	10.93 &	6.51	\\	
&	0.5 &	0.5 &	9.83 &	6.47 &	3.6 &	3.44 &	23.84 &	11.08 &	7.07 &	4.59 &	9.49 &	6.27 &	3.52 &	3.5 &	23.1 &	10.63 &	6.82 &	4.68	\\	
&	&	1 &	10.17 &	7.41 &	4.38 &	2.89 &	24.93 &	12.71 &	9.61 &	3.73 &	9.81 &	7.16 &	4.26 &	2.91 &	24.13 &	12.18 &	9.26 &	3.68	\\	
&	&	2 &	10.06 &	8.12 &	5.23 &	3.01 &	24.71 &	14.39 &	12.29 &	5.45 &	9.68 &	7.83 &	5.06 &	2.99 &	23.88 &	13.8 &	11.83 &	5.23	\\	
&	0.9 &	0.5 &	9.73 &	8.3 &	4.64 &	3.45 &	15.42 &	10.4 &	6.55 &	4.5 &	9.42 &	8.05 &	4.53 &	3.51 &	15.02 &	10.03 &	6.37 &	4.61	\\	
&	&	1 &	10 &	9.09 &	5.74 &	2.93 &	16 &	11.51 &	8.59 &	3.2 &	9.68 &	8.8 &	5.59 &	2.95 &	15.57 &	11.09 &	8.36 &	3.24	\\	
&	&	2 &	9.71 &	9.25 &	6.78 &	3.08 &	15.58 &	12.12 &	10.47 &	3.9 &	9.39 &	8.94 &	6.58 &	3.05 &	15.16 &	11.65 &	10.19 &	3.84	\\	
&	1 &	0.5 &	16.36 &	5.08 &	3.53 &	3.83 &	44.19 &	8.93 &	7.32 &	5.84 &	15.64 &	4.86 &	3.45 &	3.88 &	43.77 &	8.2 &	7.18 &	5.98	\\	
&	&	1 &	16.58 &	5.75 &	4.34 &	3.39 &	44.9 &	10.01 &	9.8 &	6.78 &	15.81 &	5.46 &	4.18 &	3.39 &	44.36 &	9.06 &	9.5 &	6.76	\\	
&	&	2 &	15.92 &	6.44 &	5.25 &	3.7 &	40.86 &	10.65 &	11.76 &	9.47 &	15.16 &	6.12 &	5.04 &	3.64 &	40.27 &	9.79 &	11.45 &	9.34	\\	\hline
1000 &	0.2 &	0.5 &	8.12 &	3.08 &	2.14 &	4.28 &	42.1 &	9.51 &	5.97 &	7.04 &	8.37 &	5.65 &	2.98 &	4.44 &	18.12 &	9.38 &	4.78 &	7.20	\\	
&	&	1 &	8.28 &	3.46 &	2.32 &	3.01 &	43.77 &	10.58 &	7.82 &	4.59 &	7.72 &	3.25 &	2.25 &	3.1 &	40.48 &	9.57 &	7.29 &	4.69	\\	
&	&	2 &	8.37 &	4.01 &	2.64 &	2.49 &	44.5 &	12.48 &	10.11 &	4.85 &	7.77 &	3.75 &	2.52 &	2.52 &	40.97 &	11.28 &	9.36 &	4.65	\\	
&	0.5 &	0.5 &	8.89 &	5.98 &	3.07 &	4.3 &	19.6 &	10.23 &	5.15 &	7.07 &	8.34 &	5.64 &	2.97 &	4.43 &	18.28 &	9.46 &	4.82 &	7.26	\\	
&	&	1 &	9.15 &	6.77 &	3.66 &	3.08 &	20.48 &	11.46 &	6.87 &	4.47 &	8.6 &	6.48 &	3.58 &	3.17 &	16.85 &	10.54 &	5.86 &	4.61	\\	
&	&	2 &	9.17 &	7.5 &	4.41 &	2.64 &	20.84 &	12.92 &	8.94 &	3.38 &	8.59 &	7.13 &	4.29 &	2.68 &	16.95 &	11.73 &	7.59 &	3.33	\\	
&	0.9 &	0.5 &	8.98 &	7.68 &	3.99 &	4.3 &	12.28 &	9.31 &	4.75 &	7.06 &	8.45 &	7.31 &	3.92 &	4.43 &	12.78 &	9.63 &	4.98 &	7.26	\\	
&	&	1 &	9.32 &	8.49 &	5.05 &	3.1 &	12.89 &	10.21 &	6.47 &	4.44 &	8.76 &	8.07 &	4.98 &	3.18 &	13.42 &	10.76 &	6.88 &	4.59	\\	
&	&	2 &	9.48 &	9.06 &	6.39 &	2.7 &	13.22 &	11.28 &	8.55 &	3.08 &	8.88 &	8.56 &	6.3 &	2.72 &	13.74 &	11.91 &	9.21 &	3.18	\\	
&	1 &	0.5 &	20.27 &	4.83 &	3.14 &	4.62 &	91.41 &	11.42 &	8.9 &	7.8 &	18.13 &	4.4 &	2.95 &	4.69 &	87.37 &	9.85 &	8.51 &	8.02	\\	
&	&	1 &	20.07 &	5.29 &	3.73 &	3.64 &	91.6 &	12.44 &	11.87 &	8.51 &	17.85 &	4.79 &	3.47 &	3.62 &	86.96 &	10.68 &	11.34 &	8.36	\\	
&	&	2 &	19.31 &	5.92 &	4.53 &	3.53 &	86.72 &	13.92 &	15.22 &	12.13 &	17.05 &	5.3 &	4.15 &	3.4 &	81.76 &	11.84 &	14.41 &	11.63	\\	\hline\hline
\end{tabular}}
\end{table}

\clearpage

\subsection{Sensitivity to the factor number estimators}

The simulation results in Section~\ref{sec:sim} provide a benchmark
for what we could expect in practice where the popularly adopted estimator of $r$ is adopted.
We supplement the results by additional simulation studies under Model~1 in Section~\ref{sec:sim:model}
to verify the observations above, where we investigate $\text{err}_{\max}(\wh\chi^{\circ}_{it})$
of $\wh{\chi}^{\pca}_{it}$, $\wh{\chi}^{\capp}_{it}$, $\wh{\chi}^{\sca}_{it}$ and $\wh{\chi}^{\cs}_{it}$ 
obtained with $\wh r \in \{r, r + 1, r + 5\}$.
Figure~\ref{fig:m1:err} shows that even when $\wh r$ is only slightly larger than $r$,
the PC estimator incurs much larger errors than those returned by the proposed modified PC estimators,
and the gap tends to increase with $n$.
On the other hand, when $\wh r = r$, all 
$\wh{\chi}^{\capp}_{it}$, $\wh{\chi}^{\sca}_{it}$ and $\wh{\chi}^{\cs}_{it}$
are shown to perform as well as the oracle PC estimator.

\begin{figure}[htbp]
\centering
\includegraphics[width=1\textwidth]{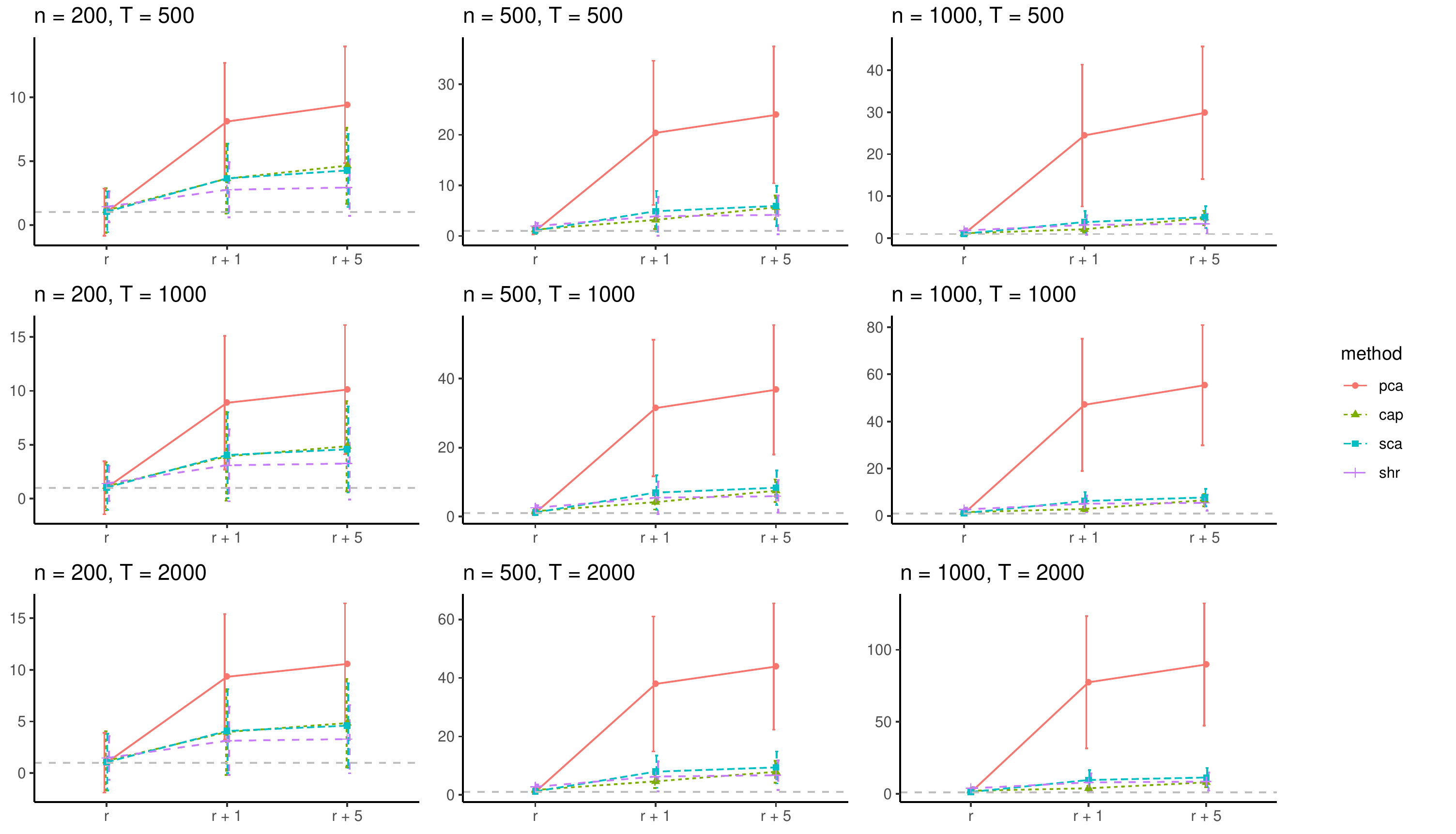}
\caption{\small $\text{err}_{\max}(\wh\chi^{\circ}_{it})$
of $\wh{\chi}^{\pca}_{it}$, $\wh{\chi}^{\capp}_{it}$, $\wh{\chi}^{\sca}_{it}$
and $\wh{\chi}^{\cs}_{it}$ 
averaged over $1000$ realisations generated
with $\phi = 1$, $n \in \{200, 500, 1000\}$ (left to right), $T \in \{500, 1000, 2000\}$ (top to bottom)
and $\wh r \in \{r, r + 1, r + 5\}$ (left to right within each plot) under Model~1;
the vertical errors bars represent the standard deviations
and the broken horizontal lines indicate where $\text{err}_{\max}(\wh\chi^{\circ}_{it}) = 1$.}
\label{fig:m1:err}
\end{figure}

\subsection{Sensitivity to the choice of $c_w$}
\label{sec:cw}

We have performed simulation studies 
in order to systematically assess the influence of the choice of constant $c_w$
on the estimation errors of the resultant estimators.
Under Model~1 described in Section~\ref{sec:sim:model},
we examine the blockwise scaled PC estimator $\wh{\chi}^{\bsca}_{it}(c_w)$ 
over a grid of $c_w \in \mc C = \{0.8, 0.9, \ldots, 1.5\} \times \sqrt n \; \max_{1 \le i \le n} |\wh w_{x, i1}|$,
where we obtain $\wh r$ as in~\eqref{eq:r:bn} and set the number of blocks $L_T = 5$.
Noticing that the blockwise estimation provides a natural means for constructing the cross-validation measure,
we have also recorded
$c^*= \arg\min_{c_w \in \mc C} \wt{\text{err}}_{\max}(c_w)$
for each realisation,
where $\wt{\text{err}}_{\max}(c_w) =
\max_{1 \le i \le n} T^{-1} \sum_{t = 1}^T (\wh\chi^{\bsca}_{it}(c_w) - x_{it})^2$.
We choose this error measure as 
the average of $(\wh\chi^{\bsca}_{it}(c_w) - x_{it})^2$ over $i$ and $t$
is invariably minimised by the largest $c_w$ in consideration,
which tends to over-fit the data.
Varying $n \in \{200, 500, 1000\}$ and $T \in \{500, 1000, 2000\}$ while fixing
the noise-to-signal ratio at $\phi = 1$,
we report the estimation error
$\overline{\text{err}}_{\max}(c_w) =
\max_{1 \le i \le n} T^{-1} \sum_{t = 1}^T (\wh\chi^{\bsca}_{it}(c_w) - \chi_{it})^2$
for $c_w \in \mc C \cup \{c^*\}$
over $1000$ realisations for each setting, see Figure~\ref{fig:cv:err} in Section~\ref{sec:scale}.
It is clear that the variability of $\overline{\text{err}}_{\max}(c_w)$ increases with $c_w$,
while its average is minimised when the multiplicative constant is set at $1$ for most $n$ and $T$;
the cross-validated choice $c^*$ does not outperform this choice
in terms of either variability or the average of the estimation error.
The default choice we recommend in the paper, $c_w = 1.1 \times \sqrt n \; \max_{1 \le i \le n} |\wh w_{x, i1}|$,
yields performance not far from that with the multiplicative constant $1$ and overall,  we conclude that
the performance of the scaled PC estimator is robust to a range of values for $c_w$ within reason.

\end{document}